\newtheorem{proposition}{Proposition}
\newtheorem{definition}{Definition}
\newcommand{\IC}{\mathbb{C}}
\newcommand{\IZ}{\mathbb{Z}}
 \newcommand{\Irr} {\operatorname{Irr}}
\newcommand{\be}{\begin{equation}}
\newcommand{\ee}{\end{equation}}
\newif\ifcomments
\newif\ifdetails
\newcommand{\orcid}[1]{\href{https://orcid.org/#1}{\includegraphics[width=8pt]{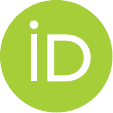}}}
\theoremstyle{definition}
\theoremstyle{remark}
\newcommand\EA{\EuScript{A}}
\newcommand\EB{\EuScript{B}}
\newcommand\EC{\EuScript{C}}
\newcommand\ED{\EuScript{D}}
\newcommand\EF{\EuScript{F}}
\newcommand\EM{\EuScript{M}}
\newcommand\EP{\EuScript{P}}
\newcommand\EQ{\EuScript{Q}}
\newcommand\ES{\EuScript{S}}
\newcommand\ER{\EuScript{R}}
\newcommand\EW{\EuScript{W}}
\newcommand{\AN}{\EA(N)^{\times}_{\IZ_2}}
\newcommand{\LMod}{\mathsf{Mod}}
\newcommand{\Hilb}{\mathsf{Hilb}}
\newcommand\Fun{\textsf{Fun}}
\newcommand\Vect{\textsf{Vect}}
\newcommand\Rep {\textsf{Rep}}
\newcommand\id {\mathrm{id}}
\newcommand\Hom {\mathrm{Hom}}
\newcommand{\one}{\mathds{1}}
\newcommand\Mod{\textsf{Mod}}
\newcommand{\Zp}{\mathbb{Z}_p}
\newcommand{\FPdim}{\operatorname{FPdim}}
\newcommand{\Aut}{\operatorname{Aut}}
\title{Electric-magnetic duality and $\mathbb{Z}_2$ symmetry enriched Abelian lattice gauge theory}
\author[a,b]{Zhian Jia\orcid{0000-0001-8588-173X},}
\author[a,b]{Dagomir Kaszlikowski,}
\author[c,d]{and Sheng Tan\orcid{0009-0008-3318-9942}}
\affiliation[a]{Centre for Quantum Technologies, National University of Singapore, Singapore 117543, Singapore}
\affiliation[b]{Department of Physics, National University of Singapore, Singapore 117543, Singapore}
\affiliation[c]{Beijing Institute of Mathematical Sciences and Applications, Beijing, 101408, China}
\affiliation[d]{Yau Mathematical Sciences Center, Tsinghua University, Beijing, 100084, China}
\emailAdd{giannjia@foxmail.com}
\emailAdd{phykd@nus.edu.sg}
\emailAdd{tan296@bimsa.cn}
\abstract{
Kitaev's quantum double model is a lattice realization of  Dijkgraaf-Witten topological quantum field theory (TQFT). Its topologically protected ground state space has broad applications for topological quantum computation and topological quantum memory. 
We investigate the  $\mathbb{Z}_2$ symmetry enriched generalization of the model for the Abelian group in a categorical framework and present an explicit Hamiltonian construction.
This model provides a lattice realization of the $\mathbb{Z}_2$ symmetry of the topological phase. We discuss in detail the categorical symmetry of the phase, for which the electric-magnetic (EM) duality symmetry is a special case.
The aspects of symmetry defects are investigated using the $G$-crossed unitary braided fusion category (UBFC).  By determining the corresponding anyon condensation, the gapped boundaries and boundary-bulk duality are also investigated. 
In the last part, an explicit lattice realization of EM duality is discussed.

}
\begin{document}

\maketitle

\section{Introduction}
\label{sec:intro}

Duality is one of the central topics in modern physics. It weaves together different theories such that these theories can be understood from each other's perspective; typical examples include AdS/CFT duality \cite{Maldacena1999,witten1998anti}, Electric-magnetic duality \cite{montonen1977magnetic,kapustin2007electricmagnetic}, T-duality \cite{palan1987duality}, and so on.
For Kitaev's quantum double model $D(G)$ based on a finite group $G$ \cite{Kitaev2003}, which can be regarded as a lattice gauge theoretic realization of Dijkgraaf-Witten topological quantum field theory (TQFT)  \cite{dijkgraaf1990topological}, there is a famous duality known as electric-magnetic (EM) duality \cite{Buerschaper2009mapping,buerschaper2013electric,Hu2018full,wang2020electric,hu2020electric,delcamp2021tensor}.
When the group $G$ is Abelian, the theory has EM duality.
The dual gauge group is $\hat{G}=\mathrm{Irr}(G)$ whose elements are unitary irreducible representations of $G$.
However, when $G$ is non-Abelian, there is no such EM duality. To remedy this problem, the models are generalized to quantum double models based on finite-dimensional semisimple Hopf algebras, for which EM duality can be realized \cite{Buerschaper2013a,buerschaper2013electric}.

From a topologically ordered phase perspective, a quantum double model is a lattice realization of the $2d$ non-chiral topological order \cite{Kitaev2003, Levin2005, KITAEV2006,jia2023weak,jia2023boundary}.
The $2d$ topologically ordered phase is mathematically characterized by a unitary modular tensor category (UMTC) $\ED=(\ED,\oplus,\otimes,\one,\alpha,\beta,\theta)$ \cite{bakalov2001lectures,turaev2016quantum,etingof2016tensor}, where $\alpha,\beta,\theta$ are associator, braiding, and twist.
The anyons are labeled with simple objects $a\in \Irr(\ED)$. $\Irr(\ED)$ is a finite set, and the vacuum charge will be labeled as $\mathds{1}$ (also denoted $0$ when appearing in the index).
Each simple object (anyon) in $\ED$ has an  antiparticle, \emph{viz.}, there is an involution map $*: \Irr(\ED)\to \Irr(\ED)$ such that  $a^{**}=a$ and $\one^*=\one$.
Anyons in $\ED$ can fuse: $a\times b =\sum_{c}N_{ab}^cc$, and the fusion is associative in the sense that $\sum_eN_{ab}^eN_{ec}^d=\sum_{f}N_{bc}^fN_{af}^d$. The numbers $\{N_{ab}^c\}$ are called the fusion multiplicities and satisfy the symmetric properties: $N_{ab}^c=N_{ba}^c=N_{a^*b^*}^{c^*}=N_{ac^*}^{b^*}$; and $N_{ab^*}^0=\delta_{ab}$.
Anyon's braiding is characterized by $\beta: a\otimes b\to b\otimes a$ and each anyon has its topological spin $\theta_a$.
Each anyon $a\in\ED$ has its quantum dimension $d_a$ for which $d_a=d_{a^*}$. From fusion process we have $d_ad_b=\sum_cN_{ab}^cd_c$ (it is evident that $d_a$ is the eigenvalue of the matrix $[N_a]_{bc}=N^c_{ab}$). The anyons with quantum dimension $1$ are called Abelian anyons, otherwise they will be called non-Abelian anyons. 
If all anyons of a phase are Abelian, the phase is called an Abelian topological phase.
The typical Abelian topological phases are Kitaev's $\IZ_N$ quantum double $D(\mathbb{Z}_N)$ phase, whose topological order is given by the representation category $\mathsf{Rep}(D(\IZ_N))$.
These Abelian phases can also be realized using Levin-Wen's string-net model by choosing input unitary fusion category $\EC$ as the representation category $\Rep (\IZ_N)$ or the category of $\IZ$-graded vector spaces $\Vect_{\IZ_N}$ \cite{Levin2005}. These different realizations are related with each other by Morita equivalence  $\mathsf{Rep}(D(\IZ_N)) \simeq  \mathcal{Z}(\Rep( \IZ_N)) \simeq   \mathcal{Z}(\Vect_{\IZ_N)}) $ \cite{etingof2016tensor}, where ``$\simeq$'' represents braided equivalence and $\mathcal{Z}$ represents taking the Drinfeld center.

In this work, we consider another kind of generalization, that is, the topological phase enriched with a symmetry group. This kind of phase will be called the symmetry enriched topological (SET) phase.
By introducing symmetry, the corresponding phase will exhibit more complicated quasi-particle excitations and mutual statistics. 
The SET phases have been explored using modular extension of a unitary braided fusion category (UBFC) \cite{lan2017modular} and using $G$-crossed UBFC \cite{cui2016gauging,Barkeshli2019symmetry,bischoff2019spontaneous}.
Besides their fundamental importance, the study of these SET phases and topological defects is also very crucial for topological quantum computation.
Two typical examples are: 
(i) A topologically protected degeneracy appears when we cut boundaries of $\IZ_N$ toric code on a sphere. This topological degeneracy generates an error correcting code \cite{Dennis2002topological, Terhal2015quantum};  (ii) Twist defects fix the problem of Abelian anyons of $\IZ_N$ toric code, converting it into a universal topological quantum computer \cite{Nayak2008, Cong2017universal}.

In this paper, we study the simplest SET phase: the $\IZ_2$ symmetry enriched Abelian topological phase.
In Sec.~\ref{sec:ZN}, we review the Hamiltonian of Kitaev quantum double model for abelian group and discuss its topological properties.
Then in Sec.~\ref{sec:SET}, we review the SET phase in $G$-crossed UBFC framework. The EM duality symmetry is a special case of the more general categorical symmetry. We will discuss the EM duality in detail within the framework of categorical symmetry. 
One of our main contributions is given in Sec.~\ref{sec:SETbd}, where we study the gapped boundary of the SET phase from the perspective of anyon condensation. 
The gapped boundary phase is obtained from anyon condensation of the SET phase, and the bulk phase is recovered from the boundary phase by taking relative center, and different boundary phases are Morita equivalent.
We review the existing framework of SET boundary theory, which is based on condensable algebra description, and propose a new formalism grounded on Frobenius algebra.
With the above preparation, in Sec.~\ref{sec:latticeReal}, we present a lattice realization of the EM duality symmetry via the ribbon operator and discuss its physical properties.

\section{Quantum double model for Abelian topological order}
\label{sec:ZN}

\subsection{$\mathbb{Z}_N$ model}
Let us first recall the construction of the $D(G)$ quantum double model for a finite group $G$.
For a given lattice embedded in a surface $\Sigma=V(\Sigma)\cup E(\Sigma)\cup F(\Sigma)$, where $V(\Sigma)$, $E(\Sigma)$ and  $F(\Sigma)$ are vertex set, edge set and face set, respectively, we assign a Hilbert space $\mathcal{H}_e=\mathbb{C}[G]$ to each edge $e\in E(\Sigma)$. The total Hilbert space is then given by $\mathcal{H}(\Sigma)=\otimes_{e\in E(\Sigma)} \mathcal{H}_e$.
We then introduce two kinds of local stabilizers. The first one is the projective operator which is attached to each vertex $v\in V(\Sigma)$:
\begin{equation}\label{eq:Av}
	A_v
\big{|}	\begin{aligned}
		\begin{tikzpicture}
			\draw[-latex,black] (-0.5,0) -- (0,0);
			\draw[-latex,black] (0,0) -- (0,0.5); 
			\draw[-latex,black] (0,0) -- (0.5,0); 
			\draw[-latex,black] (0,-0.5) -- (0,0); 
			\draw [fill = black] (0,0) circle (1.2pt);
			\node[ line width=0.2pt, dashed, draw opacity=0.5] (a) at (0.7,0){$x_1$};
			\node[ line width=0.2pt, dashed, draw opacity=0.5] (a) at (-0.7,0){$x_3$};
			\node[ line width=0.2pt, dashed, draw opacity=0.5] (a) at (0,-0.7){$x_4$};
			\node[ line width=0.2pt, dashed, draw opacity=0.5] (a) at (0,0.7){$x_2$};
		\end{tikzpicture}
	\end{aligned}   \big{ \rangle}     
= 
\frac{1}{|G|} \sum_{g\in G}
\big{|}	\begin{aligned}
	\begin{tikzpicture}
		\draw[-latex,black] (-0.5,0) -- (0,0);
		\draw[-latex,black] (0,0) -- (0,0.5); 
		\draw[-latex,black] (0,0) -- (0.5,0); 
		\draw[-latex,black] (0,-0.5) -- (0,0); 
		\draw [fill = black] (0,0) circle (1.2pt);
		\node[ line width=0.2pt, dashed, draw opacity=0.5] (a) at (1,0){$x_1g^{-1}$};
		\node[ line width=0.2pt, dashed, draw opacity=0.5] (a) at (-1,0){$gx_3$};
		\node[ line width=0.2pt, dashed, draw opacity=0.5] (a) at (0,-0.7){$gx_4$};
		\node[ line width=0.2pt, dashed, draw opacity=0.5] (a) at (0,0.75){$x_2g^{-1}$};
	\end{tikzpicture}
\end{aligned}   \big{ \rangle} .
\end{equation}
The vertex operator can be regarded as a gauge transformation averaged over $G$, which imposes a local Gauss constraint. The second one is the local projector associated to
each face $f\in F(\Sigma)$: 
\begin{equation}\label{eq:Bf}
	B_f
	\big{|}	\begin{aligned}
		\begin{tikzpicture}
			\draw[-latex,black] (-0.5,0.5) -- (0.5,0.5);
			\draw[-latex,black] (-0.5,-0.5) -- (-0.5,0.5); 
			\draw[-latex,black] (0.5,-0.5) -- (0.5,0.5); 
			\draw[-latex,black] (-0.5,-0.5) -- (0.5,-0.5); 
			\draw [fill = black] (0,0) circle (1.2pt);
			\node[ line width=0.2pt, dashed, draw opacity=0.5] (a) at (0.75,0){$x_1$};
			\node[ line width=0.2pt, dashed, draw opacity=0.5] (a) at (-0.75,0){$x_3$};
			\node[ line width=0.2pt, dashed, draw opacity=0.5] (a) at (0,-0.7){$x_4$};
			\node[ line width=0.2pt, dashed, draw opacity=0.5] (a) at (0,0.7){$x_2$};
		\end{tikzpicture}
	\end{aligned}   \big{ \rangle}     
	= 
	\delta_{x_1^{-1}x_2x_3x_4^{-1}, e}
	\big{|}	\begin{aligned}
	\begin{tikzpicture}
		\draw[-latex,black] (-0.5,0.5) -- (0.5,0.5);
		\draw[-latex,black] (-0.5,-0.5) -- (-0.5,0.5); 
		\draw[-latex,black] (0.5,-0.5) -- (0.5,0.5); 
		\draw[-latex,black] (-0.5,-0.5) -- (0.5,-0.5); 
		\draw [fill = black] (0,0) circle (1.2pt);
		\node[ line width=0.2pt, dashed, draw opacity=0.5] (a) at (0.75,0){$x_1$};
		\node[ line width=0.2pt, dashed, draw opacity=0.5] (a) at (-0.75,0){$x_3$};
		\node[ line width=0.2pt, dashed, draw opacity=0.5] (a) at (0,-0.7){$x_4$};
		\node[ line width=0.2pt, dashed, draw opacity=0.5] (a) at (0,0.7){$x_2$};
	\end{tikzpicture}
\end{aligned}   \big{ \rangle} .
\end{equation}
Here $e$ is the identity element of $G$, and $\delta$ satisfies $\delta_{x,y} = 1$ if $x=y$ and $0$ otherwise. The face operator imposes a local flatness condition.
The Hamiltonian is given by
\begin{equation}\label{eq:QDhamiltonian}
	H=-\sum_{v\in V(\Sigma)} A_v-\sum_{f\in F(\Sigma)} B_f.
\end{equation}
The topological excitations of the model  are classified by pairs $a_{[g], \pi}:=([g], \pi)$ where $[g]$ is a conjugacy class of the group $G$, and $\pi$ is an irreducible representation (irrep) of the centralizer $C_{G}(g)$. 
The vacuum charge corresponds to $g=e$ and $\pi=\mathds{1}$ (the trivial representation).
The antiparticle is given by
$([g^{-1}],\pi^{\dagger})$ (note that $C_G(g)=C_G(g^{-1})$). 
The conjugacy class $[g]$ of a topological charge is called the magnetic charge and the irrep $\pi$ is called the electric charge. When $g=e$,  $a_{[e],\pi}$ is characterized by a representation of $G$ and is called a chargeon; when  $\pi=\mathds{1}$, $a_{[g],\mathds{1}}$ is called a fluxion; and  when both $g\neq e$ and $\pi\neq \mathds{1}$, $a_{[g],\pi}$ is called a dyon.
The quantum dimension (i.e.~Frobenius-Perron dimension) of the topological excitation $a_{[g], \pi}$ is given by
\begin{equation}\label{eq:dim}
	\FPdim a_{[g],\pi} =|[g]| \dim \pi.
\end{equation}
These topological excitations form a UMTC $\mathsf{Rep} (D(G))$, the representation category of the quantum double of the finite group $G$.

When $G=\IZ_N$, we introduce the generalized Pauli operators (a.k.a., discrete Weyl operators, or Heisenberg-Weyl operators)
\begin{equation}\label{eq:ZXN}
	X_N=\sum_{h\in \IZ_N} |h+1\rangle \langle h|,\quad Z_N=\sum_{h\in \IZ_N} \omega_N^h |h\rangle \langle h|,
\end{equation}
where $\omega_N=e^{2\pi i/N}$ is the $N$-th root of unity.
They satisfy the commutation relation 
\begin{equation}
	Z_NX_N =\omega_N X_NZ_N.
\end{equation}
For any choice of $(h,g)\in \IZ_N\times \IZ_N$, we define
\begin{equation}
	Y^{g,h}=X^g_NZ^h_N.
\end{equation}
From the commutation relation of $X_N$ and $Z_N$, we obtain
\begin{equation}
Y^{g,h}	Y^{k,l}=\omega_N^{hk-gl} Y^{k,l}Y^{g,h}.
\end{equation}

The eigenstates of $Z_N$ are $|h\rangle$ with corresponding eigenvalues $\omega_N^h$, and the eigenstates of $X_N$ are
\begin{equation}\label{eq:Xeigen}
	|\psi_g\rangle =\frac{1}{\sqrt{N}} \sum_{h\in \IZ_N} \omega_N^{gh}  |h\rangle
\end{equation}
with the corresponding eigenvalues $\omega_N^{-g}$.
Using the notations in Ref.~\cite{Kitaev2003}, we have
\begin{equation}\label{eq:L-X}
    L_+^g|x\rangle=X_N^g|x\rangle,\quad  L_-^g|x\rangle =(X^{\dagger}_N)^g|x\rangle.
\end{equation}
And also $Z_N^g|x\rangle=\omega_N^{gx}|x\rangle$, $(Z^{\dagger}_N)^g|x\rangle=\omega_N^{-gx}|x\rangle$.
Hereinafter, to avoid cluttering of equations, we will denote $X_N$ and $Z_N$ simply by $X$ and $Z$ whenever there is no ambiguity.

The quantum double model can be constructed using these operators.
For a given edge direction's configuration of a vertex and a face (e.g., in Eqs.~(\ref{eq:Av}) and (\ref{eq:Bf})), we define
\begin{equation}
	A(v)=X_1^{\dagger} X_2^{\dagger}X_3X_4,\quad B(f)=Z_1^{\dagger} Z_2Z_3Z_4^{\dagger}.
\end{equation}
They are unitary but not Hermitian in general. The Hamiltonian in Eq.~(\ref{eq:QDhamiltonian}) becomes
\begin{equation}
	H=-\sum_{v\in V(\Sigma)} \frac{1}{N} \sum_{h\in \IZ_N} A(v)^h-\sum_{f\in F(\Sigma)}  \frac{1}{N} \sum_{h\in \IZ_N} B(f)^h,
\end{equation}
where $A_v=  \frac{1}{N}\sum_{h\in \IZ_N} A(v)^h$ and $B_f= \frac{1}{N} \sum_{h\in \IZ_N} B(f)^h$ are local stabilizers. In fact, the first identity follows from Eq.~\eqref{eq:L-X} immediately, and the second follows from 
\begin{equation}
\begin{aligned}
    \frac{1}{N}\sum_{h\in\mathbb{Z}_N} B(f)^h &\big{|}	\begin{aligned}
	\begin{tikzpicture}
		\draw[-latex,black] (-0.5,0.5) -- (0.5,0.5);
		\draw[-latex,black] (-0.5,-0.5) -- (-0.5,0.5); 
		\draw[-latex,black] (0.5,-0.5) -- (0.5,0.5); 
		\draw[-latex,black] (-0.5,-0.5) -- (0.5,-0.5); 
		\draw [fill = black] (0,0) circle (1.2pt);
		\node[ line width=0.2pt, dashed, draw opacity=0.5] (a) at (0.75,0){$x_1$};
		\node[ line width=0.2pt, dashed, draw opacity=0.5] (a) at (-0.75,0){$x_3$};
		\node[ line width=0.2pt, dashed, draw opacity=0.5] (a) at (0,-0.7){$x_4$};
		\node[ line width=0.2pt, dashed, draw opacity=0.5] (a) at (0,0.7){$x_2$};
	\end{tikzpicture}
    \end{aligned}   \big{ \rangle}   = \frac{1}{N}\sum_{h\in\mathbb{Z}_N}\omega_N^{(-x_1+x_2+x_3-x_4)h}\big{|}	\begin{aligned}
	\begin{tikzpicture}
		\draw[-latex,black] (-0.5,0.5) -- (0.5,0.5);
		\draw[-latex,black] (-0.5,-0.5) -- (-0.5,0.5); 
		\draw[-latex,black] (0.5,-0.5) -- (0.5,0.5); 
		\draw[-latex,black] (-0.5,-0.5) -- (0.5,-0.5); 
		\draw [fill = black] (0,0) circle (1.2pt);
		\node[ line width=0.2pt, dashed, draw opacity=0.5] (a) at (0.75,0){$x_1$};
		\node[ line width=0.2pt, dashed, draw opacity=0.5] (a) at (-0.75,0){$x_3$};
		\node[ line width=0.2pt, dashed, draw opacity=0.5] (a) at (0,-0.7){$x_4$};
		\node[ line width=0.2pt, dashed, draw opacity=0.5] (a) at (0,0.7){$x_2$};
	\end{tikzpicture}
\end{aligned}   \big{ \rangle} \\
& = \delta_{-x_1+x_2+x_3-x_4,0} \big{|}	\begin{aligned}
	\begin{tikzpicture}
		\draw[-latex,black] (-0.5,0.5) -- (0.5,0.5);
		\draw[-latex,black] (-0.5,-0.5) -- (-0.5,0.5); 
		\draw[-latex,black] (0.5,-0.5) -- (0.5,0.5); 
		\draw[-latex,black] (-0.5,-0.5) -- (0.5,-0.5); 
		\draw [fill = black] (0,0) circle (1.2pt);
		\node[ line width=0.2pt, dashed, draw opacity=0.5] (a) at (0.75,0){$x_1$};
		\node[ line width=0.2pt, dashed, draw opacity=0.5] (a) at (-0.75,0){$x_3$};
		\node[ line width=0.2pt, dashed, draw opacity=0.5] (a) at (0,-0.7){$x_4$};
		\node[ line width=0.2pt, dashed, draw opacity=0.5] (a) at (0,0.7){$x_2$};
	\end{tikzpicture}
\end{aligned}   \big{ \rangle} = B_f\big{|}	\begin{aligned}
	\begin{tikzpicture}
		\draw[-latex,black] (-0.5,0.5) -- (0.5,0.5);
		\draw[-latex,black] (-0.5,-0.5) -- (-0.5,0.5); 
		\draw[-latex,black] (0.5,-0.5) -- (0.5,0.5); 
		\draw[-latex,black] (-0.5,-0.5) -- (0.5,-0.5); 
		\draw [fill = black] (0,0) circle (1.2pt);
		\node[ line width=0.2pt, dashed, draw opacity=0.5] (a) at (0.75,0){$x_1$};
		\node[ line width=0.2pt, dashed, draw opacity=0.5] (a) at (-0.75,0){$x_3$};
		\node[ line width=0.2pt, dashed, draw opacity=0.5] (a) at (0,-0.7){$x_4$};
		\node[ line width=0.2pt, dashed, draw opacity=0.5] (a) at (0,0.7){$x_2$};
	\end{tikzpicture}
\end{aligned}   \big{ \rangle}. 
\end{aligned}
\end{equation}
Note that we have used the identity $(1/N)\sum_{h\in\mathbb{Z}_N}\omega_N^{gh} = \delta_{g,0}$, $\forall\,g\in\mathbb{Z}_N$. 

Notice that for $\mathbb{Z}_2$ case (toric code), one usually shifts the local terms by a constant, namely,
    \begin{align}
        A_v=I+A(v)=I+X\otimes X\otimes X\otimes X,\quad 
        B_f=I+B(f)=I+Z\otimes Z\otimes Z\otimes Z. 
    \end{align}
But we will take $A(v)=X\otimes X\otimes X\otimes X$ and $B(f)=Z\otimes Z\otimes Z\otimes Z$ as local stabilizers.
This point of view can also be generalized to the $\mathbb{Z}_N$ case.

\begin{proposition}\label{prop:GSspace}
    The ground state space of the $\mathbb{Z}_N$ quantum double model is given by
    \begin{equation}
        \mathcal{V}_{\rm GS}=\{\,|\psi\rangle\,|\, A(v)|\psi\rangle =|\psi\rangle, B(f)|\psi\rangle =|\psi\rangle,\forall\, v,f
        \}.
    \end{equation}
\end{proposition}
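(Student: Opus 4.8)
The plan is to show that, up to an overall sign, $H$ in Eq.~\eqref{eq:QDhamiltonian} is a sum of mutually commuting orthogonal projectors, so that its ground state space is precisely the simultaneous $(+1)$-eigenspace of all of them, and then to identify this eigenspace with the set described in the statement.

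First I would record the elementary operator algebra underlying the model. Using $L_\pm^g$ in Eq.~\eqref{eq:L-X} and $Z_N^g|x\rangle=\omega_N^{gx}|x\rangle$ one checks that each $A(v)$ and each $B(f)$ is unitary with $A(v)^N=B(f)^N=I$, that any two vertex operators commute, any two face operators commute, and — the one place where the Weyl relation $Z_NX_N=\omega_N X_NZ_N$ and the edge orientations genuinely enter — that $A(v)$ and $B(f)$ commute for every vertex–face pair: a vertex and a face share either zero or two edges, and when they share two the $\omega_N$ phases picked up from the two edges enter with opposite sign (one edge points into $v$ and one out, relative to the boundary orientation of $f$) and cancel. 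I expect this orientation bookkeeping to be the only step that requires real care; everything else is formal.

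Next I would show that $A_v=\frac1N\sum_{h\in\IZ_N}A(v)^h$ is the orthogonal projector onto $\ker\!\big(A(v)-I\big)$, and likewise $B_f$ onto $\ker\!\big(B(f)-I\big)$. Indeed, $A(v)$ is unitary of order $N$, hence diagonalizable with eigenvalues among the $\omega_N^k$; on the $\omega_N^k$-eigenspace $A_v$ acts by the scalar $\frac1N\sum_{h\in\IZ_N}\omega_N^{kh}=\delta_{k,0}$ — the same character-sum identity $(1/N)\sum_h\omega_N^{gh}=\delta_{g,0}$ already used in the text — so $A_v=A_v^\dagger=A_v^2$ and $\operatorname{Im}A_v=\ker\!\big(A(v)-I\big)$. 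In particular, for any state, $A_v|\psi\rangle=|\psi\rangle$ if and only if $A(v)|\psi\rangle=|\psi\rangle$, and similarly for $B_f$ and $B(f)$; since this equivalence is pointwise in $v$ (resp.\ $f$), it persists after imposing the conditions for all $v$ and all $f$.

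Finally I would assemble the pieces. As the $A_v$ and $B_f$ are mutually commuting projectors, every eigenvalue of $H=-\sum_v A_v-\sum_f B_f$ is at least $-\big(|V(\Sigma)|+|F(\Sigma)|\big)$, with equality exactly on $\bigcap_v\operatorname{Im}A_v\cap\bigcap_f\operatorname{Im}B_f$. This intersection is nonempty: the all-identity configuration $|e,\dots,e\rangle$ satisfies $B(f)|e,\dots,e\rangle=|e,\dots,e\rangle$ for every $f$ (trivial flatness), and since each $A_v$ is a projector commuting with every $B_f$, the vector $|\psi_0\rangle:=\big(\prod_v A_v\big)|e,\dots,e\rangle$ lies in all $\operatorname{Im}A_v$ and all $\operatorname{Im}B_f$; it is nonzero because $\prod_v A(v)^{h_v}$ permutes the group basis, so each $\langle e,\dots,e|\prod_v A(v)^{h_v}|e,\dots,e\rangle\in\{0,1\}$ and the $h_v\equiv 0$ term contributes $1$, giving $\langle e,\dots,e|\psi_0\rangle>0$. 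Hence the minimum energy is attained, $\mathcal V_{\rm GS}=\bigcap_v\operatorname{Im}A_v\cap\bigcap_f\operatorname{Im}B_f$, and by the previous paragraph this equals $\{\,|\psi\rangle\,:\,A(v)|\psi\rangle=|\psi\rangle,\ B(f)|\psi\rangle=|\psi\rangle\ \forall v,f\,\}$, which is the claim.
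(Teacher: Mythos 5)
Your proof is correct and its core step --- identifying $A_v=\frac1N\sum_h A(v)^h$ as the orthogonal projector onto the $+1$-eigenspace of the unitary $A(v)$ via the character sum $\frac1N\sum_h\omega_N^{kh}=\delta_{k,0}$, and likewise for $B_f$ --- is exactly the spectral-decomposition argument the paper uses. The extra material you supply (mutual commutativity of the stabilizers and nonemptiness of the common $+1$-eigenspace via $\prod_v A_v|e,\dots,e\rangle$) is sound and fills in details the paper leaves implicit, but does not change the approach.
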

\begin{proof}
For vertex stabilizers, it is clear that $A(v)|\psi\rangle =|\psi\rangle$ implies $A_v|\psi\rangle =|\psi\rangle$.
The reverse direction can be obtained by taking the spectral decomposition $A(v)=\sum_{\lambda} e^{i \theta_{\lambda}}  \Pi_{\lambda}$, where $\theta_{\lambda}= 2 \pi\lambda/N$ ($\lambda\in \mathbb{Z}_N$), and $\Pi_{\lambda}$ is the projection to the corresponding eigenspace of $A(v)$. 
This implies that
\begin{equation}
    A_v=\frac{1}{N}\sum_{g\in \mathbb{Z}_N}A(v)^g=\sum_{\lambda} \left(\frac{1}{N}\sum_{g\in \mathbb{Z}_N} e^{ig \theta_{\lambda}}\right)\Pi_{\lambda}=\Pi_{0}.
\end{equation}
From this we see that $A_v|\psi\rangle=|\psi\rangle$ implies that $A(v)|\psi\rangle=|\psi\rangle$.
For face stabilizers, we can similarly show that conditions $B(f)|\psi\rangle=|\psi\rangle$ and $B_f|\psi\rangle=|\psi\rangle$ are equivalent.
\end{proof}

\vspace{1em}
\emph{Ribbon operator and excitations.} ---
Notice that each group element of $\IZ_N$ is a conjugacy class $[g]=\{g\}$ and all their centralizers are the Abelian group $\IZ_N$, thus $N$ irreps $\pi$ are all one-dimensional.
There are $N^2$ different anyons, from Eq.~(\ref{eq:dim}), all of these anyons have quantum dimension 1,  thus the resulting topological phase $\EA(N)=\Rep (D(\IZ_N))$ is an Abelian phase:
\begin{equation}
\EA(N)=\{	\one, e^{g}, m^{h}, \varepsilon^{g,h}=e^g\otimes m^h \}.
\end{equation}
We will also use the notion $\varepsilon^{g,0}=e^g$,  $\varepsilon^{0,h}=m^h$ and $\varepsilon^{0,0}=\one$.
The electric charge $e^g$  is associated with a vertex $v$ such that $A(v) |\psi\rangle =\omega_{N}^g|\psi\rangle$; the magnetic charge $m^h$ is associated with a face $f$ such that $B(f) |\psi\rangle =\omega_N^h|\psi\rangle $; and dyon $\varepsilon^{g,h}$ is associated with a site $s=(v,f)$ such that both of the above conditions hold.

Since $\IZ_N$ is Abelian, the ribbon operator $F_{\rho}^{g,h}$ over a ribbon $\rho$ becomes two string operators, $F^{g,h}_{\rho}=(\otimes_{i\in \partial \rho} Z(i)^g )\otimes (\otimes_{j\in \bar{\partial} \rho} X(j)^{h})$,
where $\partial \rho$ and $\bar{\partial} \rho$ respectively denote the direct boundary and dual boundary of $\rho$.
Notice that here the string operators are constructed in a way such that they are consistent with the definition of vertex operators and face operators; for example, a $Z^g$-string operator along a direct path $\partial \rho=e_1e_2\cdots e_n$ is defined as 
\begin{equation}
Z^g(\partial \rho)=\otimes_{i\in \partial \rho} Z(i)^g,
\end{equation}
where we set $Z(i)=Z$ when the direction of the edge $e_i$ is the same as the direction of the path, otherwise we set $Z(i)=Z^{\dagger}$. Similarly for $X^h$-strings (the direction of the dual edge is defined as the direction obtained by rotating $\pi/2$ of the direct edge counterclockwise).
A $Z^g$-string creates $e^g$ and $e^{-g}$ particles over two ends of the string. Similarly, an $X^h$-string creates $m^h$ and $m^{-h}$ particles over two ends of the string.

The fusion rule is given by
\begin{equation}
	(e^g\otimes m^{h})\otimes (e^k\otimes m^{l})=e^{g+k}\otimes m^{h+l}.
\end{equation}
The fusion rule is invariant under an exchanging of electric and magnetic charges $e \leftrightarrow m$.
Different anyons can also braid with each other, and both fusion and braiding can be realized in the lattice using string operators.

Although we will mainly study the quantum double lattice realization of Abelian phase $\EA(N)$, it is also helpful to investigate the phase from the string-net perspective in some situations (especially when in the category-theoretic framework).
In this case, the input data is a UFC $\EC=\Rep(\IZ_N)$, and the topological excitation is characterized by its Drinfeld center $\mathcal{Z}(\EC)$.
In \cite{Kitaev2012a}, it is pointed out that the topological excitations can also be regarded as point defects in the bulk.
A point defect is a zero-dimensional defect between two bulk trivial domain walls. 
It is mathematically described by a bimodule functor between two $\EC|\EC$-bimodules, thus $\mathcal{Z}(\EC)\simeq \Fun_{\EC|\EC} (\EC,\EC)$ where  $\Fun_{\EC|\EC} (\EC,\EC)$
is the category of all bimodule functors.
To summarize, the cyclic Abelian topological phase is characterized by UMTCs
\begin{equation}
\begin{aligned}
\EA(N)&=\Rep(D(\IZ_N))\simeq \mathcal{Z}(\Rep(\IZ_N)) \simeq \mathcal{Z}(\Vect(\IZ_N)) \\
&\simeq \Fun_{\Rep(\IZ_N) | \Rep(\IZ_N)} (\Rep(\IZ_N),\Rep(\IZ_N)) \\
&\simeq \Fun_{\Vect_{\IZ_N} | \Vect_{\IZ_N}} (\Vect_{\IZ_N},\Vect_{\IZ_N}). 
\end{aligned}
\end{equation}
These different realizations of the same topological phase are Morita equivalent.

\subsection{General Abelian group model}

Let us now extend the $2d$ cyclic group model to the Abelian group case.
According to the fundamental theorem of Abelian groups, a finite Abelian group $G$ can be decomposed as 
\begin{equation}
    G \cong \mathbb{Z}_{N_1}\otimes \cdots \otimes \mathbb{Z}_{N_n},
\end{equation}
for which $|G|=p_1^{\alpha_1}\cdots p_n^{\alpha_n}$ and $N_i=p_i^{\alpha_i}$ ($p_i$'s are prime numbers).
We denote $g\in G$ as an $n$-tuple $g=(g_1,\cdots,g_n)$, where $g_i\in \mathbb{Z}_{N_i}$.
All irreducible representations of $G$ are one-dimensional; we introduce their corresponding characters as
\begin{equation}
    \chi_g(k)=\exp\left(2\pi i \sum_{j=1}^n \frac{g_jk_j}{N_j}\right).
\end{equation}
Notice that $\chi_g(a+b)=\chi_g(a)\chi_g(b)$ and $\chi_g\chi_h=\chi_{g+h}$.
The set of all characters forms a dual group $G^{\vee}$ which is isomorphic to $G$.

To each edge of the lattice, we attach a Hilbert space $\mathcal{H}_e=\mathbb{C}[G]\cong\mathbb{C}^{N_1}\otimes \cdots \otimes \mathbb{C}^{N_n}$, where for $g\in G$, $|g\rangle=|g_1\rangle \otimes \cdots \otimes |g_n\rangle$.
We introduce the following operators
\begin{equation}
    X_G^g=\sum_{h\in G} |h+g\rangle \langle h|, \quad Z_G^g=\sum_{h\in G} \chi_g(h) |h\rangle \langle h|.
\end{equation}
If we denote $X_{N_i}$ and $Z_{N_i}$ the operators introduced in Eq.~\eqref{eq:ZXN} for cyclic group $\mathbb{Z}_{N_i}$, it is clear that 
\begin{equation}
    X_G=\otimes_{i=1}^n X_{N_i}, \quad Z_G=\otimes_{i=1}^n Z_{N_i}.
\end{equation}
and $ X_G^g=(X_G)^g=\otimes_{i=1}^n X_{N_i}^{g_i}$, $Z_G^g=(Z_G)^g=\otimes_{i=1}^n Z_{N_i}^{g_i}$.

The vertex operator is of the form
\begin{equation}
    A(v)=\prod_{j\in \partial v} X_G(j), \quad A_v=\frac{1}{|G|}\sum_{g\in G}A(v)^g,
\end{equation}
and the face operator is of the form 
\begin{equation}
    B(f)=\prod_{j\in \partial f}Z_G(j), \quad B_f=\frac{1}{|G|}\sum_{g\in G} B(f)^g.
\end{equation}
Notice that $\chi_g(h)=\chi_h(g)$ and thus that $\sum_{g\in G} \chi_{g}(h_1)\cdots \chi_g(h_n)=\sum_{g\in G}\chi_{h_1+\cdots +h_n}(g)=\delta_{h_1+\cdots +h_n,e}$ (here $e=0$).
Operators $A(v)$ and $B(f)$ are unitary but not Hermitian; their eigenvalues are $\omega_G^g=\exp(2 \pi i \sum_{j=1}^n\frac{g_j}{N_j})$.
The ground state manifold is the invariant state space of $A(v)$ and $B(f)$.

\begin{proposition}\label{prop:GSspace1}
    The ground state space of the Abelian group quantum double model is given by
    \begin{equation}
        \mathcal{V}_{\rm GS}=\{\,|\psi\rangle\,|\, A(v)|\psi\rangle =|\psi\rangle,\, B(f)|\psi\rangle =|\psi\rangle,\forall\, v,f
        \}.
    \end{equation}
\end{proposition}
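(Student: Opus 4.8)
The plan is to follow the proof of Proposition~\ref{prop:GSspace} at the structural level, the one genuinely new feature being that the cyclic group is replaced by a product $G\cong\IZ_{N_1}\times\cdots\times\IZ_{N_n}$, so that a single unitary at a vertex (or face) no longer generates the full symmetry one must impose. First I would record the structural facts about $H=-\sum_v A_v-\sum_f B_f$: each $A_v=\frac{1}{|G|}\sum_{g\in G}A(v)^g$ and each $B_f=\frac{1}{|G|}\sum_{g\in G}B(f)^g$ is an orthogonal projector — Hermitian since $A_v^{\dagger}=\frac{1}{|G|}\sum_g A(v)^{-g}=A_v$, idempotent since it is a group average — they commute pairwise by the same short local computation as in Kitaev's model, and (again exactly as there) their common $+1$-eigenspace is nonzero, so $\mathcal{V}_{\rm GS}=\bigcap_v\ker(A_v-\id)\cap\bigcap_f\ker(B_f-\id)$. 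It then suffices to prove, site by site, the equivalences
\begin{equation}
\begin{aligned}
A_v|\psi\rangle=|\psi\rangle &\iff A(v)^g|\psi\rangle=|\psi\rangle\quad\text{for all }g\in G,\\
B_f|\psi\rangle=|\psi\rangle &\iff B(f)^g|\psi\rangle=|\psi\rangle\quad\text{for all }g\in G,
\end{aligned}
\end{equation}
where, as in the cyclic case, the conditions ``$A(v)|\psi\rangle=|\psi\rangle$'' and ``$B(f)|\psi\rangle=|\psi\rangle$'' in the statement are read as invariance under the respective families $\{A(v)^g\}_{g\in G}$, $\{B(f)^g\}_{g\in G}$.

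For the vertex equivalence, the point is that $g\mapsto A(v)^g=\prod_{j\in\partial v}X_G(j)^g$ is a unitary representation of the abelian group $G$ on $\mathcal{H}(\Sigma)$ (the translation representation of $G$ on $\bigotimes_{j\in\partial v}\IC[G]$, tensored with the identity elsewhere). Hence $\mathcal{H}(\Sigma)=\bigoplus_{\chi\in G^{\vee}}\mathcal{H}_{\chi}$ with $A(v)^g$ acting on $\mathcal{H}_{\chi}$ by the scalar $\chi(g)$, and character orthogonality $\frac{1}{|G|}\sum_{g\in G}\chi(g)=\delta_{\chi,\chi_0}$ — the identity $\sum_{g\in G}\chi_g(h)=|G|\,\delta_{h,e}$ already used in the construction — shows $A_v=\sum_{\chi}\bigl(\frac{1}{|G|}\sum_g\chi(g)\bigr)\Pi_{\chi}=\Pi_{\chi_0}$, the projector onto the $G$-invariant subspace $\mathcal{H}_{\chi_0}=\{|\psi\rangle:A(v)^g|\psi\rangle=|\psi\rangle\ \forall g\}$; the equivalence follows. (Equivalently, without diagonalizing: left multiplication by $A(v)^{g_0}$ permutes the $|G|$ summands defining $A_v$, so $A(v)^{g_0}A_v=A_v$; thus $A_v|\psi\rangle=|\psi\rangle$ forces $A(v)^{g_0}|\psi\rangle=A(v)^{g_0}A_v|\psi\rangle=A_v|\psi\rangle=|\psi\rangle$ for all $g_0$, and the converse is immediate by averaging.)

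The face equivalence is formally identical: $g\mapsto B(f)^g=\prod_{j\in\partial f}Z_G(j)^g$ is a unitary representation of $G$, diagonal in the group basis, with $B(f)^g$ acting on each basis state by $\chi_g$ evaluated on the $G$-holonomy $x_1^{-1}x_2x_3x_4^{-1}$ around $f$; the same averaging/orthogonality computation gives $B_f=\Pi_{\chi_0}$, so $B_f|\psi\rangle=|\psi\rangle$ iff $B(f)^g|\psi\rangle=|\psi\rangle$ for all $g$ iff the $G$-holonomy around $f$ is trivial (using that $G^{\vee}$ separates the points of $G$). Combining the vertex and face equivalences with the first paragraph yields the asserted description of $\mathcal{V}_{\rm GS}$.

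The only step I expect to carry real content is the recognition that for non-cyclic $G$ one must impose invariance under the whole representation $\{A(v)^g\}_{g\in G}$ rather than under the single operator $A(v)=\prod_{j\in\partial v}X_G(j)$: that operator generates only a cyclic subgroup of $\{A(v)^g\}_{g\in G}$, and already for $G=\IZ_2\times\IZ_2$ it has spectrum $\{\pm1\}$, so its $+1$-eigenspace strictly contains $\mathcal{H}_{\chi_0}$ and the single-operator spectral-projection argument of Proposition~\ref{prop:GSspace} does not apply verbatim; it has to be run on the full $G$-representation as above. Once the statement is read this way, the proof is just the $|G|$-fold group average together with the character sum, and nothing further is needed — beyond, if $\Sigma$ is closed, the remark (exactly as in Kitaev's model) that $\mathcal{V}_{\rm GS}\neq\{0\}$, which however is not part of the proposition.
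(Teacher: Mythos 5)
Your proof is correct, and its core --- the $|G|$-fold group average annihilating every nontrivial character, so that $A_v=\Pi_{\chi_0}$ --- is the direct generalization of the spectral-decomposition computation in Proposition~\ref{prop:GSspace}, which is all the paper offers here (``this can be proved in a similar way as the cyclic group case''). The one place you go beyond a verbatim transfer is also the one place where a verbatim transfer would break: for non-cyclic $G$ the single unitary $A(v)=\prod_{j\in\partial v}X_G(j)=A(v)^{(1,\dots,1)}$ generates only the cyclic subgroup of $\{A(v)^g\}_{g\in G}$ corresponding to $(1,\dots,1)\in G$, so (as your $\mathbb{Z}_2\times\mathbb{Z}_2$ example shows) its $+1$-eigenspace strictly contains $\operatorname{im}A_v$, and the proposition holds only if ``$A(v)|\psi\rangle=|\psi\rangle$'' is read as invariance under the whole family $\{A(v)^g\}_{g\in G}$. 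Your isotypic decomposition of the representation $g\mapsto A(v)^g$ (or, more cheaply, the identity $A(v)^{g_0}A_v=A_v$ from permuting the summands) is exactly the needed repair and works for every finite abelian $G$; note that under the paper's literal decomposition into factors with pairwise distinct primes $p_i$ the group is cyclic and the issue is invisible, which is presumably why the paper does not flag it. In short: same averaging idea as the paper, but with a correct and necessary sharpening of how the defining conditions of $\mathcal{V}_{\rm GS}$ must be quantified, which the paper's one-line proof elides.
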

This can be proved in a similar way as the cyclic group case.

\section{Algebraic theory of $\mathbb{Z}_2$ symmetry enriched Abelian topological order }
\label{sec:SET}

Before discussing the $\mathbb{Z}_2$ enriched Abelian topological order, let us first review the general algebraic theory of SET.
The main object we are going to discuss is symmetry defect, which is a point-like object but not intrinsic topological excitation of the topological order.
It plays an important role in our understanding of symmetries of the topological order.

\subsection{Categorical symmetry of topological phase}

We first review the definition of categorical symmetry of the topological phase \cite{cui2016gauging,Barkeshli2019symmetry,bischoff2019spontaneous}, which is universal, namely, it is independent of the specific Hamiltonian realization of the phase.

To motivate the definition, let us recall that for the Ising phase, which is an equivalence class of Hamiltonians $\mathbf{Ising}=\{ (H_{\mathrm{Ising}},\mathcal{H}_{\mathrm{Ising}})\}$.
For a $k$-site Hamiltonian $H_k$, the full symmetry of the phase is $\mathcal{U}_k$, which is a set of unitary (or antiunitary) operators $U_k\in B( \mathcal{H}_k),\mathcal{H}_k\in \mathbf{Ising}$ such that $U_kH_k=H_kU_k$.
By saying that the phase has
a $G$ symmetry,  we mean that for each $k\in \IZ_+$, there is a homomorphism $\rho_k: G\to \mathcal{U}_k$. 
The above discussion of symmetry is based on the Hamiltonian realization of the phase. To introduce the universal symmetry of the phase, which does not depend on the Hamiltonian realization, we need to introduce the concept of categorification of a group $G$ (also known as $2$-group).

The categorification of $G$ is denoted as $\underline{G}$, whose objects are elements of $G$ and the $\Hom(g,h)=\delta_{g,h}$, i.e., there is no morphism between different elements and for $g$ there is only one morphism to itself (identity morphism), namely, $G$ is regarded as a discrete category. We can define the tensor product for the category $\underline{G}$ as $g\otimes h:=gh$, 
the tensor unit is thus the identity element $e=1\in G$.
Equipped with this tensor product, $\underline{G}$ becomes a strict monoidal category.
Here we also introduce some other crucial algebraic notions that will be used later in our discussion:
\begin{itemize}
	\item We denote $\operatorname{Aut}_{\otimes}^{br}(\ED)$ the group of braided monoidal autoequivalences of a UMTC $\ED$ up to natural isomorphisms, and $\operatorname{Aut}_{\otimes}(\EC)$ the group of monoidal autoequivalences of a UFC $\EC$ up to natural isomorphisms.
	\item Let $\operatorname{Inv}(\EC)$ be the group of invertible objects in a UFC $\EC$ with group operation given by tensor product and the unit given by tensor unit. 
	\item The Brauer-Picard group $\operatorname{BrPic}(\EC)$ of a UFC $\EC$ is the group of equivalence classes of invertible $\EC$-bimodule categories with respect to the $\EC$-module tensor product for the UFC $\EC$,
	and we have the group isomorphism $\operatorname{BrPic}(\EC)\simeq \operatorname{Aut}_{\otimes}^{br}(\mathcal{Z}(\EC))$ and monoidal equivalence
	$\underline{\operatorname{BrPic}(\EC)}\simeq \underline{\operatorname{Aut}_{\otimes}^{br}(\mathcal{Z}(\EC))}$.
	\item The Picard group $\operatorname{Pic}(\ED)$ of UMTC $\ED$ is the group of invertible $\ED$-modules (which can be regarded as $\ED$-bimodules using the braiding of $\ED$). There are a group isomorphism ${\operatorname{Pic}(\ED)}\simeq {\operatorname{Aut}_{\otimes}^{br}(\ED)}$ and a monoidal equivalence $\underline{\operatorname{Pic}(\ED)}\simeq \underline{\operatorname{Aut}_{\otimes}^{br}(\ED)}$.	
\end{itemize}

For topological phase UMTC $\ED$, the full symmetry is characterized by $\operatorname{Aut}_{\otimes}^{br}(\ED)$.
When the phase is non-chiral, $\ED=\mathcal{Z}(\EC)$ for a UFC $\EC$, then the full symmetry is equivalently characterized by the Brauer-Picard group of $\EC$, Picard group of $\ED$, and braided monoidal equivalence group of $\ED$, \emph{viz}.,
 \begin{equation}
 \operatorname{Aut}_{\otimes}^{br}(\ED) \simeq \operatorname{BrPic}(\EC) \simeq \operatorname{Pic}(\ED).
 \end{equation}
A functor $\varphi \in \operatorname{Aut}_{\otimes}^{br}(\ED)$ can be regarded as a symmetry of the topological phase, it may permute the topological charge of the phase $\varphi (a)=a'$ but the vacuum charge must be left invariant $\varphi(\one)=\one$.
The gauge invariant information of the phase must be left invariant under the symmetry transformation, thus we have
\begin{align}
	&\text{fusion multiplicity:}\quad N_{ab}^c=N_{a'b'}^{c'};\\
	&\text{quantum dimension:}\quad \FPdim a =\FPdim a';	\\
	&F\text{-matrix}: \quad F_{a}^{bcd}=F_{a'}^{b'c'd'};\\
	&\text{braiding:} \quad R_{ab}^c=R_{a'b'}^{c'};\\
	&\text{topological spin:}\quad  \theta_{a}=\theta_{a'};\\
	&\text{modular matrix:} \quad S_{ab}=S_{a'b'}.
\end{align}
A $G$ symmetry of $\ED$ is a group homomorphism $\rho:G\to \operatorname{Aut}_{\otimes}^{br}(\ED)$.
If $\rho$ can be promoted into a monoidal functor $\underline{\rho}:\underline{G}\to \underline{\operatorname{Aut}_{\otimes}^{br}(\ED)}$, it is referred to as a categorical global symmetry of $\ED$. 
Technically, it consists of the following data: 
\begin{itemize}
	\item[(a)] unitary braided monoidal functors   $(\underline{\rho}_g, s_g)_{g\in G}$ where $\underline{\rho}_g:=\underline{\rho}(g)$ and $s_g^{X,Y}:\underline{\rho}_g(X \otimes Y)\overset{\simeq}{\to} \underline{\rho}_g(X)\otimes \underline{\rho}_g(Y) $ and $s_g^{\one}:\underline{\rho}_g(\one) \overset{\simeq}{\to} \one$ are isomorphisms. Here ``braided monoidal functor'' means that the functor preserves the tensor product and braiding;
	\item[(b)] natural isomorphisms $\gamma_{g,h}:\underline{\rho}_{gh}\overset{\simeq}{\to} \underline{\rho}_g\comp \underline{\rho}_h$ and $\gamma_e:\underline{\rho}_{e}\overset{\simeq}{\to}\id_{\ED}$. For an anyon $a\in \ED$, we will denote the $g$ action simply by $g(a):=\underline{\rho}_g(a)$ when there is no ambiguity. Then, in terms of objects, we have $gh(a)\overset{\simeq}{\to} g(h(a))$ and $e(a)\overset{\simeq}{\to}a$.	
\end{itemize}

The action of global symmetry on simple objects of $\ED$ is just like a permutation, namely, for $a\in \Irr(\ED)$, $g(a)\in \Irr(\ED)$. In particular, $g(\one)=\one$ for all $g\in G$.
The topological phase $\ED$ is left invariant under the categorical global symmetry.

\subsection{Symmetry defects and SET phase} \label{sec:sym-defects}

For a topological phase $\ED$ with a categorical symmetry $G$, we can introduce the point-like defects carrying flux associated with symmetry $g\in G$, such kind of defects are called symmetry defects.

Since symmetry defects are extrinsic objects of a topological phase $\ED$, to describe it, extra data are needed. For a given symmetry group $G$ of the topological phase $\ED$, it turns out that a symmetry defect carrying $g$-flux can be modeled with simple objects in a $\ED$-bimodule category $\ED_g$.
We refer to the collection $\ED_g$ of symmetric defects with $g$-flux as $g$-defect sector and denote objects in $\ED_g$ as $a_g$ with a $g$-subscript, and $\ED_0=\ED$ (here we use $0$ to denote identity of $G$ instead of $1_G$) is called the trivial sector.
Since symmetric defect behaves like a point-particle, anyons in $\ED$ can fuse with
defects, and defects in different sectors can also fuse, this means that $\oplus_{g\in G}\ED_g$ forms a multi-fusion category. We can also consider the braiding of defects and anyons, these data give a $G$-crossed UBFC structure.

In summary, the bulk phase with $G$-symmetry defects are characterized by a  $G$-crossed UBFC $\EP=\ED_G^{\times}$ which is a  $G$-crossed  unitary braided extension of $\ED$ with respect to the categorical symmetry $\underline{\rho}:\underline{G}\to \underline{\Aut_{\otimes}^{\mathrm{br}}(\ED)} $; it consists of the following data:

(a) a (not necessarily faithful\footnote{A faithful $G$-grading is a grading $\oplus_{g\in G}\ED_g$ for which each sector $\ED_g\neq 0$.}) $G$-grading $\EP=\ED_G^{\times}=\oplus_{g\in G}\ED_g$, and the fusion is $G$-crossed fusion, namely, for $a_g\in \ED_g,b_h\in \ED_h$, $a_g\otimes b_h\in \ED_{gh}$ for all $g,h\in G$;

(b) a $G$-action $\underline{\rho}: \underline{G}\to \underline{\operatorname{Aut}_{\otimes}(\EP)}$ extending the action on $\ED_0=\ED$, which satisfies $g(\ED_h)=\ED_{ghg^{-1}}$ for all $g,h\in G$. The associator for $\underline{\rho}_g$ is denoted as $s_g$ and the associator between $\underline{\rho}_{gh}$ and $\underline{\rho}_g\comp \underline{\rho}_h$ is denoted as $\gamma_{g,h}$;

(c) a $G$-crossed braiding which extends the braiding on $\ED_0=\ED$. It is a set of natural isomorphisms $c_{a_g,b}:a_g\otimes b\to g(b)\otimes a_g$ with $a_g\in \ED_g$ and $b\in \EP$, graphically
\begin{equation}
	\begin{aligned}
		\begin{tikzpicture}
			\braid[
			line width=1.5pt,
			style strands={1}{red},
			style strands={2}{blue}] (Kevin)
			s_1^{-1} ;
			\node[label = above:$a_g$] at (Kevin-2-s) {};
			\node[at= (Kevin-1-s),label = above:$g(b)$]  {};
			\node[label = : $a_g$] at (Kevin-2-s) {};
			\node[at=(Kevin-1-e),label = below: $b$] {};
			\node[at=(Kevin-2-e),label = below: $a_g$] {};
		\end{tikzpicture}
	\end{aligned}. \label{eq:Gbraiding}
\end{equation}
These data are required to satisfy certain consistency conditions.
We refer the interested readers to \cite{drinfeld2010braided,turaev2010homotopy,etingof2010fusion} for details of the definition.
The phase $\EP=\ED_{G}^{\times}$ is usually referred to as the SET phase.

It is natural to ask that for a given topological phase $\ED$ and global $G$ symmetry ${\rho}:G\to \operatorname{Aut}_{\otimes}^{br}(\ED)$, if such extension exists. This is studied extensively in \cite{etingof2010fusion,cui2016gauging,Barkeshli2019symmetry}.
To introduce the symmetric defect and give a well-defined fusion rule, the first obstruction is the cohomology class $O_3(\rho)\in H^3(G,\operatorname{Inv}(\ED))$ where $\operatorname{Inv}(\ED)$ is the group of invertible objects in $\ED$. This is the same as the obstruction to lifting group homomorphism $\rho$ to a categorical group homomorphism. If this obstruction vanishes, the consistent fusion rules of defects can be defined but the fusion rule is not unique, they are parameterized by cohomology classes $\alpha \in  H^2(G,\operatorname{Inv}(\ED))$. Now when we have the consistent fusion rule, which is given by a pair $(\rho,\alpha)$, the fusion operation may not be associative, this is the obstruction $O_4(\rho,\alpha)\in H^4(G,U(1))$. When the obstruction vanishes, the resulting fusion category is parameterized by $\beta \in H^3(G,U(1))$. See \cite{etingof2010fusion,cui2016gauging,Barkeshli2019symmetry} for details.

There are two useful properties about the bulk phase with symmetry defects $\ED_G^{\times}=\oplus_{g\in G} \ED_g$ that we will use later:
\begin{itemize}
	\item The number of simple objects in $\ED_g$ (called rank of $\ED_g$) is equal to the number of fixed points of the action $\rho_g$ on $\Irr(\ED)$ (see \cite{Barkeshli2019symmetry}).
	\item The Frobenius-Perron dimensions of different sectors of $\ED_G^{\times}$ are equal, that is $\operatorname{FPdim}(\ED_g)=\operatorname{FPdim}(\ED_h)$ for all $g,h\in G$ (see \cite[Proposition 3.23]{MUGER2004galois}).
\end{itemize}

For $G$-enriched SET phase $\EP=\ED_G^{\times}$, it is called non-chiral if there exist a UFC $\EC$ and $G$-graded UFC $\EC_G^{\times}$ such that $\EP=\mathcal{Z}_{\EC}(\EC_G^{\times})\simeq \oplus_{g\in G}\mathcal{Z}_{\EC}(\EC_g)$ and $\ED_g=\mathcal{Z}_{\EC}(\EC_g)$, otherwise it is called chiral.
Here $\mathcal{Z}_{\EC}$ is a relative center concerning $\EC$.
The SET phases we will discuss later are all assumed to be non-chiral.

\subsection{Symmetry-gauged SET phase}
\label{sec:subSET}
The symmetry defects with flux $g\in G$ are extrinsic defects,
realized microscopically by deforming the uniform Hamiltonian $H_0$. When we regard the symmetry defects as dynamical objects, we can discuss their confinement and deconfinement. If the energy cost to separate two symmetry defects grows with the distance between them, they are confined, meanwhile, if the energy cost to separate them is finite and independent of their distance, they are deconfined.
When the $G$-symmetry defects are deconfined, the corresponding global $G$-symmetry of the phase effectively becomes an emergent local gauge invariance at the long-wavelength limit; this process is generally known as symmetry-gauging \cite{cui2016gauging,Barkeshli2019symmetry}.
For a SET phase $\EP=\ED_G^{\times}$, after gauging the symmetry group $G$, the symmetry defects become deconfined quasiparticles of the gauged phase and the resultant phase is described by a UMTC $\EP^G=(\ED_G^{\times})^G$.

Mathematically, the symmetry gauging of the SET phase is characterized by the \emph{equivariantization} of the $G$-crossed UBFC $\ED_G^{\times}$. 

For a given $G$-crossed UBFC $\ED_G^{\times}$, and a subgroup $H$ of $G$, an \emph{$H$-equivariant object} in $\ED_G^{\times}$ is a pair $(X, \{u_h\}_{h\in H})$ consisting of an object $X\in \ED_G^{\times}$ and a collection of isomorphisms $u_h: \rho_h(X)\overset{\simeq}{\to} X$ for all $h\in H$ such that the following diagram commutes for all $h\in H$:
	\begin{equation}
		\begin{tikzcd}
			{g}(h(X)) \arrow[r,black, "g(u_h)"] \arrow[d, black,"\gamma_{g,h}^X"] & g(X) \arrow[d, "u_g" black] \\
			gh(X) \arrow[r, black, "u_{gh}" black]
			& X
		\end{tikzcd}	
	\end{equation}
	An \emph{$H$-equivariant morphism} between $H$-equivariant objects $(X, \{u_h\}_{h\in H})$ and $(Y, \{v_h\}_{h\in H})$ is a morphism $f\in \Hom_{\ED_G^{\times}}(X,Y)$ such that the following diagram commutes:
	\begin{equation}
		\begin{tikzcd}
			h(X) \arrow[r,black, "u_h"] \arrow[d, black,"h(f)"] & X \arrow[d, "f" black] \\
			h(Y) \arrow[r, black, "v_{h}" black]
			& Y
		\end{tikzcd}	
	\end{equation}
	The tensor product between objects $(X, \{u_h\}_{h\in H})$ and $(Y, \{v_h\}_{h\in H})$ is defined as
	\begin{equation}
		(X, \{u_h\}_{h\in H})\otimes (Y, \{v_h\}_{h\in H})= (X\otimes Y, \{(u_h\otimes v_h)\comp s_h^{X,Y}\}_{h\in H})	
	\end{equation}
	where $s_g^{X,Y}:h(X\otimes Y)\to h(X)\otimes h(Y)$ is the associator for the functor $\underline{\rho}_h$. 
	The tensor unit is $(\one_{\ED},t_g: g(\one_{\ED})\to \one_{\ED})$.
	All $H$-equivariant objects in $\ED_G^{\times}$ with $H$-equivariant morphisms form a fusion category, which we refer to as \emph{$H$-equivariantization of $\ED_G^{\times}$} and denote it as $(\ED_G^{\times})^H$. When $H=G$, $(\ED_G^{\times})^G$ is also known as \emph{equivariantization of $\ED_G^{\times}$}.

In \cite{kirillov2008g}, it is proven that for $G$-crossed extension of UMTC $\ED$, the equivariantization $(\ED_G^{\times})^G$ is a UMTC.
Therefore, after gauging the symmetry, the resulting SET phase can be understood just like the usual anyon model.
Via the symmetry gauge, the $G$-crossed UBFC description of the SET phase can be related to the modular extension description  \cite{lan2017modular,lan2018classification}.

\subsection{EM duality symmetry enriched  Abelian SET phase}

We are now in a position to discuss the EM duality enriched SET phase for Abelian quantum double phase $\EA(N)$. 
As shown in \cite{teo2015theory,Barkeshli2019symmetry}, these SET phases are characterized by the relative center of Tambara-Yamagami categories \cite{tambara1998tensor}, thus, they are non-chiral SET phases. In this part, we will present a detailed discussion of these phases and their lattice realization. In the next section, we will show that these Tambara-Yamagami categories are the boundary phases.

For Abelian phase $\EA(N)$, consider the EM duality $\IZ_2=\{0,1\}$ symmetry 
 \begin{equation}
 	\underline{\rho}:\underline{\IZ_2}\to \underline{\Aut_{\otimes}^{br} (\EA(N))},
 \end{equation}	
 for which $\underline{\rho}_0=\id$ and $\underline{\rho}_1$ permutes electric and magnetic charges $e\leftrightarrow m$ (this  $\underline{\rho}_1$ is the EM duality symmetry).
The anyon sector is just the $\IZ_N$ anyons $\EA(N)_0=\Rep(D(\IZ_N))$.
For $1$-flux symmetry defect sector  $\EA(N)_1$, 
notice that there are $N$ topological charges that is invariant under the action of $\underline{\rho}_1$; they are $\varepsilon^{g,g}$ for all $g\in \IZ_N$. As we have pointed at the end of Subsection \ref{sec:sym-defects}, $\EA(N)_1$ has $N$ simple objects, which we denote as  $\sigma_0,\cdots,\sigma_{N-1}$. The corresponding SET phase thus is
\begin{equation}
	\EA(N)_{\mathbb{Z}_2}^{\times}=\EA(N)_0\oplus \EA(N)_1=\{\one,e^g,m^h,\varepsilon^{g,h}\}\oplus \{\sigma_0,\cdots,\sigma_{N-1}\}.
\end{equation}
This is the relative center of the Tambara-Yamagami category for cyclic group $\IZ_N$.

Tambara and Yamagami completely classified the $\IZ_2$-graded fusion categories in which all but one element are invertible, and the only non-invertible simple object carries a nontrivial symmetry flux. For $\IZ_N$ case, 
\begin{equation}
	\mathsf{TY}(\IZ_N)= 	\mathsf{TY}(\IZ_N)_0\oplus 	\mathsf{TY}(\IZ_N)_1=\{a|a\in\IZ_N\} \oplus \{\sigma\}.
\end{equation}
The fusion rule is given by
\begin{equation}
	a\otimes b =a\oplus b, \quad a\otimes \sigma= \sigma\otimes a=\sigma,\quad \sigma\otimes \sigma =\oplus_{a\in \IZ_N} a.
\end{equation}
The associator $\alpha$ is parameterized by a non-degenerate bicharacter $\chi:\IZ_N\times \IZ_N\to \IC^{\times}$ and a square root $\tau$ of $|\IZ_N|^{-1}=1/N$, and the unit morphisms are the identity map.
$\mathsf{TY}(\IZ_N)$ is rigid, with $a^*=-a$ and $\sigma^*=\sigma$.
The quantum dimensions are $\FPdim a=1$ for all $a\in \IZ_N$ and $\dim \sigma=\sqrt{N}$. See Appendix~\ref{sec:TYcat} for a detailed discussion.

From the relative center construction \cite{gelaki2009centers}, the fusion rules of the SET phase 	$\EA(\mathbb{Z}_N)_{\mathbb{Z}_2}^{\times}$ can be calculated. The fusion between anyons remains unchanged; for fusion involving symmetry defects, we have
\begin{equation}\label{eq:fusionZN}
	\varepsilon^{g,h} \otimes \sigma_k=\sigma_k \otimes 	\varepsilon^{g,h} = \sigma_{k+g-h},\quad
	\sigma_k\otimes \sigma_l=\oplus_{g} \varepsilon^{g+k+l,g}.
\end{equation}
The braiding structure can also be calculated.
The EM duality symmetry operation $\underline{\rho}_1$ exchanges the electric and magnetic charges; however, the symmetry defects are invariant under EM duality symmetry.

\section{Two algebraic theories of gapped boundary}
\label{sec:SETbd}
A crucial feature of topologically ordered phases is the boundary-bulk duality \cite{KONG2017}. In this duality, the boundary phase is obtained from bulk by anyon condensation, and the bulk phase is recovered from the boundary by taking Drinfeld center.
Here we discuss a generalization of the above picture to the case of the SET phase.
We will discuss the boundary theory of SET phase via anyon condensation.
We first review the existing anyon condensation theory based on Lagrangian algebra. Then we propose a new anyon condensation theory based on Frobenius algebra and prove the equivalence of two approaches in describing the gapped boundaries.

\subsection{Bulk-to-boundary condensation I}
Consider a $2d$ (not necessarily non-chiral) phase $\ED$ enriched by a symmetry group $G$, and the corresponding SET phase is described by a  $G$-crossed UBFC $\EP_1=\ED_G^{\times}$.
Suppose that an anyon condensation happens in a region inside the phase $\EP_1$. 
During the condensation process, some anyons are confined in the domain wall and the others become deconfined and can cross the domain wall to enter the condensed phase, and similarly for twist defects.
From this picture, 
if the symmetry group $G$ of the SET phase is broken into a subgroup $K\leq G$, 
the condensed phase is described by another
$K$-crossed UBFC $\EP_2=\EF_K^{\times}$, and the domain wall between them is a $1d$ phase described by a $G$-crossed UFC $\EP_{1|2}=\EW_G^{\times}$.
When the symmetry group $G$ is completely broken and $\EP_2$ becomes a trivial phase, the domain wall becomes a gapped boundary.
In this part, we will discuss the relations between these phases using natural physical requirements.

From the anyon condensation theory between two topological phases $\ED$ and $\EF$ without symmetry, the anyon condensation is controlled by a \emph{condensable algebra} $A\in \ED$ \cite{Kong2014}.

\begin{definition}
A condensable algebra $A$ in a UMTC $\ED$ is an algebra with multiplication $\mu_{A}:A \otimes A\to  A$ and unit map $\eta_{A}: \mathds{1}\to A$ such that: 
\begin{itemize}
\item[(a)] $A$ is commutative, i.e., $\mu_{A} \comp c_{A,A}=\mu_{A}$ where $c_{A,A}$ is the braiding map in $\ED$; 

\item[(b)] $A$ is separable, i.e., $\mu_{A}: A \otimes A \to A$ splits as a morphism of $A$-bimodules. Namely, there is an $A$-bimodule map $e_{A}: A \rightarrow A \otimes A$ such that $\mu_{A} \comp e_{A}=\mathrm{id}_{A}$; 

\item[(c)] $A$ is connected, viz., $\operatorname{dim} \operatorname{Hom}_{\ED}(\mathds{1}, A)=1$. In other words, the vacuum charge $\mathds{1}$ only appears once in the decomposition of $A=\mathds{1}\oplus a\oplus b\oplus \cdots$.
\end{itemize}
If an algebra only satisfies (a) and (b), it is called an \emph{\'{e}tale algebra} \cite{davydov2013witt}. If a condensable algebra $A$ satisfies $(\operatorname{FPdim} A)^2=\operatorname{FPdim} \ED:=\sum_{a\in \mathrm{Irr}(\ED)}(\operatorname{FPdim} a)^2$, then $A$ is called a {Lagrangian algebra}
\end{definition}

The condensable algebra determines the anyon condensation from one phase $\ED$ to another phase $\EuScript{F}$ with domain wall $\EW$. When the phase $\EuScript{F}=\mathsf{Hilb}$ is a trivial topological phase, the gapped domain wall becomes a gapped boundary and the corresponding condensable algebra must be Lagrangian.
In \cite{frohlich2006correspondences}, it is proven that an algebra $A$ is commutative if and only if, when $A$ decomposes into direct sum of simple objects in $\ED$ as $A=\oplus_{a\in \mathrm{Irr}(\ED)} N_{A}^a a$, all $a$'s in the decomposition are bosonic (topological spin $\theta_a=1$) if $N_{A}^a\neq 0$. This means that $A$ is a coherent superposition of bosons. In \cite{Cong2017}, it is shown that $A$ is separable if and only if for any $a,b\in \ED$, there is a partial isometry $\Hom(a,A)\otimes \Hom(b,A)\to \Hom(a\otimes b,A)$; and $(\operatorname{FPdim} A)^2=\operatorname{FPdim} \ED$ if and only if, for all $a,b\in \Irr(\ED)$, we have $N_{A}^aN_{A}^b\leq \sum_{c\in \operatorname{Irr}(\ED)}N_{ab}^cN_{\EA}^c$ where $N_{ab}^c$ are fusion coefficients of $a,b$.
These results can help us to determine the Lagrangian algebras of a given bulk phase $\ED$. 
It is worth mentioning that the algebra object $A$ cannot uniquely determine the Lagrangian algebra. There exist examples of Lagrangian algebras $(A,\mu_{A},\eta_{A})$ and $(A,\mu'_{A},\eta'_{A})$ with the same algebra object but non-isomorphic Lagrangian algebra structures \cite{davydov2014bogomolov}. 

Let us now recall what happens for anyons during the condensation from UMTC $\ED$ to UMTC $\EF$ with a gapped domain wall $\EW$.
The condensation is controlled by a condensable algebra $A\in \ED$, and $A$ is the vacuum (tensor unit) of the condensed phase $\EF$. 
When an anyon $a$ of $\ED$ moves into the vicinity of domain wall between $\ED$ and $\EF$, it will fuse with $A$.
The resulting domain wall phase $\EW$ is thus given by the monoidal functor $ \bullet \otimes A$. These fused anyons form a right $A$-module category: $\EW=\LMod_A(\ED)$. Some of the domain wall anyons are confined in the domain wall, while some others are deconfined that they can cross the wall and become anyons of the condensed phase $\EF$. The deconfined anyons are local $A$-modules in $\EW$, thus $\EF=\LMod_A^{{loc}}(\ED)$.

In order to generalize the above picture for anyon condensation to the case of SET phase, the first thing we need to consider is the condition for a categorical symmetry $\underline{K}\hookrightarrow \underline{G}\to \underline{\Aut_{\otimes}^{br}(\ED)}$ to be preserved after anyon condensation.
This has been extensively studied in \cite{bischoff2019spontaneous}, and it is shown that this is equivalent to two conditions:
\begin{itemize}
\item[(a)] For each $k\in K$, $k(A)$ is isomorphic to $A$ as algebras;
\item[(b)] There is a $K$-equivariant algebra structure $u_K=\{u_k:k(A)\to A\}_{k\in K}$ on $A$.
\end{itemize}
This means that $A$ must be a $K$-equivariant condensable algebra.
Assume that the symmetry group $G$ is preserved during the anyon condensation, \emph{viz.}, the above two conditions of $A$ hold for group $G$. 
The anyon condensation of $\ED_0=\ED$ sector remains unchanged. For $g$-flux symmetry defect sector $\ED_g$, similar to anyons, these $g$-flux defects fuse with $A$ and results in an $A$-module category $\EW_g=\LMod_A(\ED_g)$.
Some $g$-flux symmetry defects are confined in the domain wall and others are deconfined, thus they can cross the domain wall to become $g$-flux defects in the condensed phase. The deconfined defects in $A$-module framework are thus $g$-local modules \cite{bischoff2019spontaneous}: a module $M$ is called $g$-local if $\mu_M=\mu_M\comp (u_{g}\otimes \id_M)\comp c_{M,A}\comp c_{A,M}$. Notice that here the $g$-local module condition is a natural result of the $g$-crossed braiding.
To summarize, the condensed phase is given by
\begin{equation}
	\EF^{\times}_G=\oplus_{g\in G} (\EW_g)_{dc}\simeq \oplus_{g\in G} \LMod_{A}^{g\text{-loc}}(\ED_g).
\end{equation}
Here $\LMod_{A}^{g\text{-loc}}(\ED_g)$ denotes the category of $g$-local modules in $\ED_g$. 

For unitary $G$-crossed braided extension $\ED_G^{\times}$ of anyon model $\ED$, the domain wall phase determined by the condensable algebra $A$ is $\EW_G^{\times}\simeq \LMod_{A}(\ED_{G}^{\times})\simeq \oplus_{g\in G}\LMod_{A}(\ED_{g})$, and the condensation is described as 
	\begin{equation}
		\ED_G^{\times} \overset{\bullet\otimes A}{\longrightarrow}\EW_G^{\times}\simeq \LMod_{A}(\ED_{G}^{\times})\simeq \oplus_{g\in G}\LMod_{A}(\ED_{g}) \to \EF_G^{\times}=\oplus_{g\in G}\LMod^{g\text{-loc}} (\ED_g).
	\end{equation}
When $A$ is a Lagrangian algebra, and all symmetries are broken, $\EF_g=\emptyset$ for all $g\neq 1_G$; so the condensed phase is the trivial phase $\Hilb$.
In this case, we obtain a gapped boundary $\EW_G^{\times}\simeq \LMod_{A}(\ED_{G}^{\times})\simeq \oplus_{g\in G}\LMod_{A}(\ED_{g}) $, which is a $G$-crossed UFC.

\subsection{Bulk-to-boundary condensation II}
Now we give another way to describe the anyon condensation of the SET phase using the quotient category construction, which generalizes the anyon condensation process presented in \cite{Cong2017}. This description is also very powerful for us to investigate the gapped boundaries. 
To this end, we first introduce the notion of special Frobenius algebra.

\begin{definition}
	Let $\ED$ be a tensor category. A Frobenius algebra in $\ED$ is a quintuple $(F, \mu,\eta,\Delta, \varepsilon)$ with $F\in \ED$, $\mu:F\otimes F\to F$, $\eta:\one \to F$, $\Delta :F\to F\otimes F$ and $\varepsilon: F\to \one$, such that  $(F, \mu,\eta)$ forms an algebra, $(F, \Delta, \varepsilon)$ forms a coalgebra, and
	\begin{equation}
		(\id_F\otimes \mu)\comp (\Delta\otimes \id_F)=\Delta\comp\mu= (\mu\otimes \id_F)\comp (\id_F \otimes \Delta).
	\end{equation}
	A Frobenius algebra is called special if there exist $\alpha,\beta \in \mathbb{C}^{\times}$ such that
	\begin{equation}
		\mu \comp \Delta = \alpha  \id_F, \quad \varepsilon \comp \eta =\beta \id_{\one}. 
	\end{equation}
	When $\alpha=1$ and $\beta=\FPdim F$, it is called normalized-special.
\end{definition}
A condensable algebra $(A,\mu_{A},\eta_{A})$ in UMTC $\ED$ is a normalized-special Frobenius algebra, see \cite{Kong2014} for a proof.

\begin{definition}[Pre-quotient category $\ED/F$]
Let $\ED$ be a BFC and $F$ a special Frobenius algebra in $\ED$. The pre-quotient category $\ED/F$ consists of the following data:
	
	(a) Objects $\mathrm{Obj} (\ED/F)=\mathrm{Obj} (\ED)$;
	
	(b) Morphisms are defined as $\Hom_{\ED/F}(X,Y):=\Hom_{\ED}(F\otimes X, Y)$ and the composition between $f\in \Hom_{\ED/F}(X,Y)$ and $g\in \Hom_{\ED/F}(Y,Z)$ is defined as $g\comp f:=  g\comp(\id_F\otimes f)\comp (\Delta \otimes \id_X)$;
	
	(c) Tensor product is defined as $X\otimes_{\ED/F} Y: =X\otimes_{\ED} Y$ and for morphisms $f\in \Hom_{\ED/F}(X,Y)$, $g\in \Hom_{\ED/F}(Z,W)$, $f\otimes_{\ED/F} g:= (f\otimes g)\comp(\id_F\otimes c_{F,X}\otimes \id_Z)\comp (\Delta\otimes \id_X\otimes \id_Z)$.
\end{definition}

The functor $Q: \ED\to \ED/F$, which maps object $X$ to $X$ and morphism $f\in \Hom_{\ED}(X,Y)$ to $\varepsilon \otimes f \in \Hom_{\ED/F}(X,Y)$, is a faithful monoidal functor.
Since we would like to describe the anyon condensation using the pre-quotient category, the problem is that $\ED/F$ may not be semisimple. Thus we need the following definition.

\begin{definition}[Idempotent completion or Karoubi envelope]For the pre-quotient category $\ED/F$, we can define its canonical idempotent completion as the category $\EQ(\ED,F)$ which consists of the following data:
	
	(a) The objects in $\EQ(\ED,F)$ are $\mathrm{Obj}(\EQ(\ED,F))=\{(X,p)\,|\,X\in\mathrm{Obj} (\ED/F), p=p^2\in \operatorname{End}_{\ED/F}(X)\}$. Here $p:F\otimes X\to X$ is known as an idempotent of $X$;
	
	(b) The morphisms of $\EQ(\ED,F)$ are given by 
	$$\Hom_{\EQ(\ED,F)}((X,p),(Y,q)):=\{f\in\Hom_{\ED/F}(X,Y)|f\comp p=q\comp f\}.$$
	
	All other structures of $\EQ(\ED,F)$ are inherited from $\ED/F$.
\end{definition}

The anyon condensation is controlled by the normalized-special Frobenius algebra structure of a condensable algebra $A$ in $\ED$. The vacuum of the gapped domain wall turns out to be $A$ and the wall excitation $\EW$ is the idempotent completion $\EQ(\ED, A)$ of the pre-quotient category $\ED/A$. 
The condensation from phase $\ED$ to domain wall phase $\EW$ is given by the functor
\begin{equation}
	\ED\overset{Q}{\longrightarrow}\ED/A	\overset{I.C.}{\longrightarrow} \EQ(\ED,A),
\end{equation}
where $Q$ and $I.C.$ represents the pre-quotient functor and idempotent complete functor respectively.
Notice that gapped wall phase is a $1d$ phase, thus is described by the UFC $\EW\simeq\EQ(\ED,A)$.
The quantum dimension of the domain wall phase $\EW$ is \cite{MUGER2004galois}
\begin{equation}
	\FPdim \EW =	\FPdim \EQ(\ED,A)= \frac{\FPdim \ED}{\FPdim A}.
\end{equation}
It is proven that the category $\EQ(\ED, A)$ is equivalent to the $A$-module category $\LMod_{A}(\ED)$ in $\ED$ \cite{MUGER2004galois}, thus the domain wall phase can also be described by the category  $\LMod_{A}(\ED)$ and the condensation is given by the functor
\begin{equation}
	F: \ED \to  \LMod_{A}(\ED), \;X\mapsto A\otimes X,
\end{equation}
where $A\otimes X$ automatically has an $A$-module structure originating from the algebra structure of $A$. The adjoint functor of $F$ is the forgetful functor $I:\LMod_{A}(\ED) \to \ED, (M,\mu_M)\mapsto M$. This matches well with the anyon condensation description we have discussed before. 

Let us now see what happens for symmetry defect sector $\ED_g$. In this case, to ensure that the symmetry $K\leq G$ is not broken,  $A$ must have a $K$-equivariant Frobenius structure. 
For simplicity, we assume that the entire symmetry group $G$ is not broken.
Since $X_g\in \ED_g$ behaves like a point particle, when it is dragged near the domain wall, it undergoes the same physical process as the anyons. Therefore, the domain wall defect sector is the idempotent completion $\EQ(\ED_g, A)$ of the pre-quotient category $\ED_g/A$. For $G$-crossed UBFC $\ED_G^{\times}$, the condensable algebra $A$ gives the domain wall phase as $\EW_G^{\times}\simeq\EQ(\ED_G^{\times}, A)\simeq \oplus_{g\in G}\EQ(\ED_g,A)$, and the condensation is described as 
\begin{equation}
    \ED_G^{\times} \overset{Q}{\longrightarrow}\ED_{G}^{\times}/A \simeq \oplus_{g\in G} (\ED_g/A) 	\overset{I.C.}{\longrightarrow} \EW^{\times}_G= \EQ(\ED_G^{\times}, A)\simeq \oplus_{g\in G}\EQ(\ED_g,A).
\end{equation}
For each domain wall $g$-sector $\EW_g$, we have the decomposition
\begin{equation}
	\EW_g=(\EW_g)_{\rm c}\oplus (\EW_g)_{\rm dc},
\end{equation}
where ``c'' and ``dc'' represent confined and deconfined respectively. 
The deconfined $g$-flux defect $X_g$ must have $g$-trivial double braiding with the vacuum $(A, u_G=\{u_g:g(A)\to A\})$ of the condensed phase, 
\begin{equation}
	(u_{g}\otimes \id_{X_g})\comp c_{X_g,A}\comp c_{A,X_g}	=\id_{A\otimes X_g}.
\end{equation}
Graphically, we have
\begin{equation}\label{eq:doublebraiding}
	\begin{aligned}
		\begin{tikzpicture}
			\braid[
			line width=1.5pt,
			style strands={1}{red},
			style strands={2}{blue},	
			border height= 0.05cm] (Kevin)
			s_1^{-1} s_1^{-1} ;
			\node[at= (Kevin-2-s),yshift=0.5cm,label = above: $X_g$]{};
			\node[at= (Kevin-1-s),yshift=0.5cm, label = above: $A$]  {};
			\node[at=(Kevin-1-e),label = below: $A$] {};
			\node[at=(Kevin-2-e),label = below: $X_g$] {};
			\draw (0.75,0) rectangle (1.25,0.4);
			\node (start) [at=(Kevin-1-s),yshift=0.2cm] {$u_g$};
			\draw[red, line width=1.5pt] (1,0.4) -- (1,0.7);
			\draw[blue, line width=1.5pt] (2,0) -- (2,0.7);
		\end{tikzpicture}
	\end{aligned}
	=
	\begin{aligned}
		\begin{tikzpicture}[
			/pgf/braid/.cd,
			style strands={1}{red},
			style strands={2}{blue},
			number of strands=2,
			line width=1.5pt,
			border height= 1cm
			]
			\braid[] (identity)  1 1;
			\node[at= (identity-1-s), label = above: $A$]{};
			\node[at= (identity-2-s),label = above: $X_g$]  {};
			\node[at=(identity-2-e),label = below: $X_g$] {};
			\node[at=(identity-1-e),label = below: $A$] {};
		\end{tikzpicture}
	\end{aligned}.
\end{equation}
Here $g$-trivial double braiding is a direct result of the $G$-crossed braiding structure of the SET phase.
The $g$-sector of the condensed SET phase is then $\EF_g=(\EW_g)_{\rm dc}$. Suppose that $G$ is not broken during the condensation, the condensed phase will be $\EF_{G}^{\times} =\oplus_{g\in G} \EF_g$. When $G$ is completely broken, then the condensed phase is a trivial phase $\Hilb$, and the domain wall becomes a gapped boundary.

The equivalence of this description with the $A$-module description is given by 
\begin{equation}
	\LMod_{A}(\ED_g) \simeq \EQ(\ED_g,A),\quad \LMod_{A}(\ED_G^{
		\times})\simeq \EQ(\ED_G^{
		\times},A).
\end{equation}
We would like to stress that the crossed tensor product between $A \in \ED_0=\ED$ and $X_g\in \ED_g$ ensures that $A\otimes X_g \in \ED_g$, thus $\ED_g/A$ can be well-defined.

\subsection{Example: Gapped boundaries of cyclic Abelian SET phase}

Now let us consider the gapped boundaries of the  Abelian SET phase  $\EA(N)^{\times}_{\IZ_2}$. To simplify the discussion, we assume that $N$ is a prime number $p$.
Recall that $\EA(p)$ is the quantum double phase $\EA(p)=\Rep(D(\IZ_p))$.
In Ref.~\cite{bravyi1998quantum}, it has been argued via relative homology group method that there are two types of boundaries of $\mathbb{Z}_2$-toric code: smooth and rough ones. A more general analysis is given for general $D(G)$-quantum double model for finite group $G$ in Refs.~\cite{Bombin2008family,Beigi2011the,Cong2017}, which indicates that the boundary is determined by a subgroup $K\subseteq G$ up to conjugation and a 2-cocycle $\omega\in H^2(K,\mathbb{C}^{\times})$. Since there are two subgroups $K_1=\mathbb{Z}_1$ and $K_2=\mathbb{Z}_p$ (recall that we assume the order $p$ of the cyclic group $\IZ_p$ is prime) and  the 2nd cohomology group for both cases are trivial, there are exactly two boundaries of the model. The triviality of the 2-cocycle can be seen as follows: 
the case that $H^2(\mathbb{Z}_1,\mathbb{C}^{\times})=0$ is trivial; for $H^2(\mathbb{Z}_p,\mathbb{C}^{\times})$, we use the relation  $H^{n}(G,\mathbb{Z})\simeq H^{n-1}(G,\mathbb{Q}/\mathbb{Z})\simeq H^{n-1}(G,\mathbb{C}^{\times})$ for all $n\geq 2$ and the fact $H^3(\mathbb{Z}_p,\mathbb{Z})=0$. 

\begin{figure}[t]
	\centering
	\includegraphics[width=12cm]{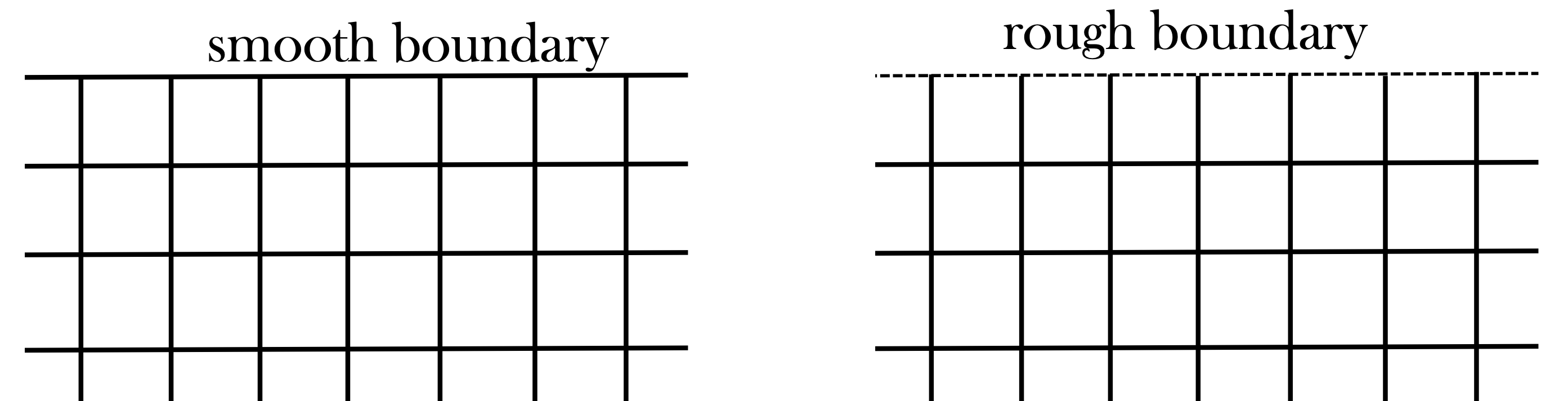}
	\caption{The lattice realizations of smooth and rough boundaries.  When a magnetic charge moves into the smooth boundary, it condenses into the boundary vacuum; when an electric charge moves into the boundary, it condenses into the boundary vacuum. \label{fig:latticebd}}
\end{figure}

For the rough boundary characterized by $K=\IZ_1$, the corresponding Lagrangian algebra is $A_e=\one \oplus e^1\oplus \cdots \oplus e^{p-1}$. When the electric $e^g$ particles move from the bulk to the boundary, they condense into the vacuum of the boundary phase.
The boundary excitations are given by  $\ER=\Mod_{A_e} (\EA(p)) \simeq \EQ(\EA(p),A_e)$.
From string-net perspective, the rough boundary is characterized by a $\Rep(\IZ_p)$-module category $\EM=\Hilb$,  and the boundary excitations are given by $\mathsf{Vect}_{\mathbb{Z}_p} \simeq \mathsf{Fun}_{\mathsf{Rep}(\Zp)} (\mathsf{Hilb},\mathsf{Hilb})$ (the category of all $\Rep(\IZ_p)$-module functors between $\mathsf{Hilb}$ to itself). To summarize, 
\begin{equation}
	\ER=\LMod_{A_e} (\EA(p)) \simeq \EQ(\EA(p),A_e)\simeq \mathsf{Vect}_{\mathbb{Z}_p} \simeq \mathsf{Fun}_{\mathsf{Rep}(\IZ_p)} (\mathsf{Hilb},\mathsf{Hilb} ).
\end{equation}
There are $p$ simple objects in $\ER$, which we denote as $I,M_1,\cdots,M_{p-1}$. Their fusion satisfies $M_a\otimes M_b=M_{a+b}$.

Similarly, for the smooth boundary $K=\IZ_p$, the corresponding Lagrangian algebra is $A_m=\one \oplus m^1\oplus \cdots \oplus m^{p-1}$.
When the magnetic $m^g$ particles move from the bulk to the boundary, they condense into the vacuum of the boundary phase.
The boundary excitations are given by  $\ES=\LMod_{A_m} (\EA(p)) \simeq \EQ(\EA(p),A_m)$.
From string-net perspective, the rough boundary is characterized by a $\Rep(\IZ_p)$-module $\Rep(\IZ_p)$, and
the corresponding topological excitations are given by $\mathsf{Rep}(\mathbb{Z}_p) \simeq \mathsf{Fun}_{\mathsf{Rep}(\IZ_p)} (\mathsf{Rep}(\mathbb{Z}_p),\mathsf{Rep}(\mathbb{Z}_p))$.
To summarize, 
\begin{equation}
	\ES=\LMod_{A_m} (\EA(p)) \simeq \EQ(\EA(p),A_m)\simeq \mathsf{Rep}(\mathbb{Z}_p) \simeq \mathsf{Fun}_{\mathsf{Rep}(\IZ_p)} (\mathsf{Rep}(\mathbb{Z}_p),\mathsf{Rep}(\mathbb{Z}_p)).
\end{equation}
There are $p$ simple objects in $\ES$, which we denote as $I,E_1,\cdots,E_{p-1}$. Their fusion satisfies $E_a\otimes E_b =E_{a+b}$.

Now consider the EM duality symmetry enriched phase $\EA(p)^{\times}_{\IZ_2}$. Under the EM duality symmetry,
\begin{equation}
	\underline{\rho}_1(A_e)=A_m, \quad \underline{\rho}_1(A_m)=A_e.
\end{equation}
This implies that there is no $\IZ_2$ equivariant algebra structure on $A_e$ and $A_m$, thus the symmetry is broken during the anyon condensation.
From fusion rules we presented in Eq.~(\ref{eq:fusionZN}), we see that 
\begin{align}
	 \sigma_k\otimes A_e = \oplus_{g\in \IZ_p} \sigma_g, \forall \,k\in \IZ_p,\\
	\sigma_l\otimes A_m =\oplus_{g\in \IZ_p} \sigma_g, \forall \,l\in \IZ_p.
\end{align}
We denote $\sigma= \oplus_{g\in \IZ_p} \sigma_g$.

Let us first consider the rough boundary phase. The bulk-to-boundary condensation gives
\begin{equation}
	\EA(p)^{\times}_{\IZ_2}\to \ER_{\IZ_2}^{\times}=\ER_0\oplus \ER_1=\{ M_g|g\in \IZ_p\} \oplus \{\sigma\}
\end{equation}
The fusion rule satisfies
\begin{equation}
	M_a\otimes M_b=M_{a+b}, \quad M_a \otimes \sigma =\sigma\otimes M_{a}=\sigma,\quad \sigma\otimes \sigma=\sigma
\end{equation}
for $a,b\in \IZ_p$.
This means that the rough boundary phase is a Tambara-Yamagami category.
For the smooth boundary, the analysis is completely the same and we have
\begin{equation}
	\EA(p)^{\times}_{\IZ_2}\to \ES_{\IZ_2}^{\times}=\ES_0\oplus \ES_1=\{ E_g|g\in \IZ_p\} \oplus \{\sigma\}.
\end{equation}
The fusion rule satisfies
\begin{equation}
	E_a\otimes E_b=E_{a+b}, \quad E_a \otimes \sigma =\sigma\otimes E_{a}=\sigma, \quad \sigma\otimes \sigma=\sigma
\end{equation}
for $a,b\in \IZ_N$. Thus the smooth boundary phase is also a Tambara-Yamagami category.

The bulk phase can be recovered from the boundary phase by taking the relative center. For completeness, let us briefly recall the construction here.
Notice that for a $G$-graded fusion category $\EB_G^{\times}=\oplus_{g\in G} \EB_g$, each $\EB_g$ is a $\EB_{0}$-bimodule category. The relative center of $\EB_g$ with respect to anyon sector $\EB_0$ is defined as follows:

\begin{definition}
The  objects of the relative center $\mathcal{Z}_{\EB_0}(\EB_g)$ are pairs $(M,\gamma)$, where $M\in \EB_g$ and $\gamma=\{\gamma_x:x\otimes M\to M\otimes x\}_{x\in\EB_0}$ is a natural family of isomorphisms such that the following diagram commutes,
	\begin{equation}
	\begin{split}
		\xymatrix{
			x\otimes (M\otimes y) \ar[r]^{\alpha^{-1}_{x,M,y}}& (x\otimes M)\otimes y \ar[r]^{\gamma_{x}} &( M\otimes x)\otimes y \ar[d]^{\alpha_{M,x,y}} \\
			x\otimes (y\otimes M) \ar[r]_{\alpha^{-1}_{x,y,M}}   \ar[u]^{\gamma_{y}}        &(x\otimes y)\otimes M \ar[r]_{\gamma_{x\otimes y}}  &  M\otimes (x\otimes y) }
	\end{split}
\end{equation}
The relative center of $\EB_G^{\times}$ satisfies $\mathcal{Z}_{\EB_0} (\EB_G^{\times}) =\oplus_{g\in G}  \mathcal{Z}_{\EB_0}(\EB_g) $.
\end{definition}

The bulk SET phase $\EA(p)^{\times}_{\IZ_2}$ is in fact the relative centers of $\ER_{\IZ_2}^{\times}$ and $\ES_{\IZ_2}^{\times}$, and 
the rough boundary SET phase and smooth boundary SET phase are Morita equivalent to each other.
To be more precise, we have
\begin{equation}
	\begin{split}
		\xymatrix{
			\EA(p)^{\times}_{\IZ_2} \ar[r]^{\bullet\otimes A_e} \ar[d]_{\bullet\otimes A_m}  & \ER_{\IZ_2}^{\times}  \ar[d]^{\mathcal{Z}_{\ER_0}} \\
		\ES_{\IZ_2}^{\times} \ar@{<->}[ru]_{\text{Morita}}    \ar[r]_{\mathcal{Z}_{\ES_0}}     & \EA(p)^{\times}_{\IZ_2}
	}
	\end{split}
\end{equation}
This commutative diagram summarizes the boundary-bulk duality for the SET phase $\EA(p)^{\times}_{\IZ_2}$.

\section{Lattice realization of EM duality for Abelian phase with dislocation}
\label{sec:latticeReal}

Now let us consider the quantum double lattice realization of these EM duality symmetry defects for the Abelian topological phases.
The EM duality symmetry defect can be realized by dislocation. An explicit example of the construction for the toric code model is given by Kitaev and Kong in \cite{Kitaev2012a}.
Here we will give a generalization to the cyclic Abelian group and discuss the corresponding ribbon operator and topological excitations. The generalization to arbitrary finite Abelian groups is straightforward.

Let us now consider how to construct the dislocation model of $\mathbb{Z}_N$ quantum double phase and how to use the ribbon operator to realize the EM duality symmetry.
Recall that the grading group now is $\mathbb{Z}_2$, with $1\in \mathbb{Z}_2$ corresponding to the EM dual symmetry.

\vspace{1em}
\emph{Lattice model of dislocation.} ---
The process works as follows: for a given $\IZ_N$ lattice gauge theory $H_0$, a pair of defects carrying symmetry fluxes $g$ and $g^{-1}$ can be created and localized at two distant faces by modifying the original Hamiltonian with dislocation.
We can introduce a dislocation line into the original lattice such that the symmetry defects are located at two endpoints of the dislocation line. 
The face operators and vertex operators along the dislocation line need to be modified and two new face operators  $Q^g$ and $Q^{g^{-1}}$ at the endpoints of the dislocation line will be introduced, see Fig.~\ref{fig:lattice} for an illustration.

\begin{figure}[t]
	\centering
	\includegraphics[width=8cm]{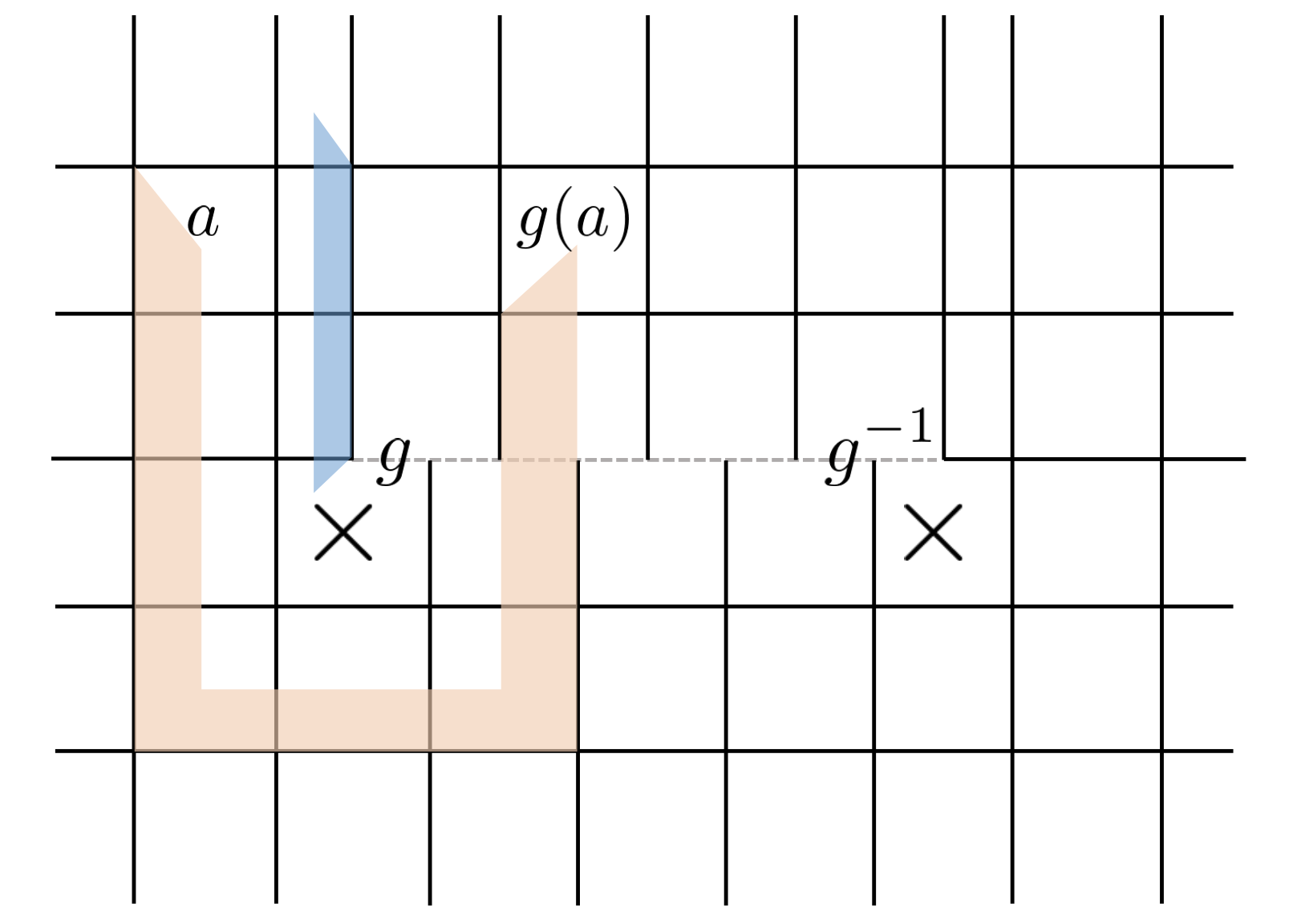}
	\caption{The lattice dislocation realization of SET phase. The face operators along the dislocation line are modified and at two endpoints of the dislocation line, two new face operators are introduced. When a particle $a$ moves around the twist defect carrying the $g$-flux, it becomes $g(a)$. When the anyon $\varepsilon^{g,g}$ moves into defect face, it will be absorbed. \label{fig:lattice}}
\end{figure}

Along the dislocation line, the face operators become
\begin{equation}
\begin{aligned}
	\begin{tikzpicture}
		\draw[-latex,black] (0,0.5) -- (0,1.3);
		\draw[dotted,black] (-0.5,0.5) -- (0.5,0.5);
		\draw[-latex,black] (-0.5,-0.5) -- (-0.5,0.5); 
		\draw[-latex,black] (0.5,-0.5) -- (0.5,0.5); 
		\draw[-latex,black] (-0.5,-0.5) -- (0.5,-0.5); 
		\draw [fill = black] (0,0) circle (1.2pt);
		\node[ line width=0.2pt, dashed, draw opacity=0.5] (a) at (0.75,0){$x_1$};
		\node[ line width=0.2pt, dashed, draw opacity=0.5] (a) at (-0.75,0){$x_3$};
		\node[ line width=0.2pt, dashed, draw opacity=0.5] (a) at (0,-0.7){$x_4$};
		\node[ line width=0.2pt, dashed, draw opacity=0.5] (a) at (0.3,0.9){$x_2$};
	\end{tikzpicture}
\end{aligned}  \quad
\begin{aligned}
	\tilde{B}_f=\frac{1}{N}\sum_{h\in \IZ_N}  (Z_1^{\dagger} X_2^{\dagger} Z_3 Z_4^{\dagger})^h.
\end{aligned} 
\end{equation}
For two endpoints of the dislocation line, we have (notice $g=g^{-1}=1$):
\begin{equation}
	\begin{aligned}
		\begin{tikzpicture}
			\draw[-latex,black] (0,0.5) -- (0,1.3);
			\draw[dotted,black] (0,0.5) -- (0.5,0.5);
			\draw[-latex,black] (-0.5,0.5) -- (0,0.5);
			\draw[-latex,black] (-0.5,-0.5) -- (-0.5,0.5); 
			\draw[-latex,black] (0.5,-0.5) -- (0.5,0.5); 
			\draw[-latex,black] (-0.5,-0.5) -- (0.5,-0.5); 
			\draw [fill = black] (0,0) circle (1.2pt);
			\node[ line width=0.2pt, dashed, draw opacity=0.5] (a) at (0.75,0){$x_1$};
			\node[ line width=0.2pt, dashed, draw opacity=0.5] (a) at (-0.75,0){$x_4$};
			\node[ line width=0.2pt, dashed, draw opacity=0.5] (a) at (0,-0.7){$x_5$};
			\node[ line width=0.2pt, dashed, draw opacity=0.5] (a) at (0.3,0.9){$x_2$};
			\node[ line width=0.2pt, dashed, draw opacity=0.5] (a) at (-0.3,0.7){$x_3$};
		\end{tikzpicture}
	\end{aligned} \quad
Q^1_{\partial_i l}= \frac{1}{N}\sum_{h\in\mathbb{Z}_N} (Z_1^{\dagger}X_2^{\dagger}Y_3Z_4Z_5^{\dagger})^h.
\end{equation}
Here $Y=XZ$ is the Weyl operator (the choice of operator here depends on the direction of the edge $x_3$). 
Similar to the vertex and face operators, we denote $\tilde{B}(f)=Z_1^{\dagger} X_2^{\dagger} Z_3 Z_4^{\dagger}$ and $Q^1(\partial_i l) =Z_1^{\dagger}X_2^{\dagger}Y_3Z_4Z_5^{\dagger}$, then $\tilde{B}_f=(1/N)\sum_{h\in\mathbb{Z}_N}(\tilde{B}(f))^h$ and $Q^1_{\partial_i l} =(1/N)\sum_{h\in \IZ_N} (Q^1(\partial_i l) )^h$.
We see that all of the above local operators commute with each other and commute with all vertex and face operators.
Thus, with a dislocation line $l$, the Hamiltonian becomes
\begin{equation}\label{eq:dislocationH}
	H=H_0-\sum_{f\in l}  \tilde{B}_f -Q^{1}_{\partial_0 l} -Q^{1}_{\partial_1l},
\end{equation}
where $\partial_i l$ represents the faces at the endpoints of the dislocation line $l$, and $H_0$ is the bulk Hamiltonian that does not contain the dislocation line. 

\begin{proposition}
    The ground state space of the model is given by
    \begin{equation}
    \begin{aligned}
            \mathcal{V}_{\rm GS}=&\{|\psi\rangle\,|\,A(v)|\psi\rangle=|\psi\rangle=B(f)|\psi\rangle, \forall\, v,f;\\
            &Q^1(\partial_0 l)|\psi\rangle=|\psi\rangle=Q^1(\partial_1 l)|\psi\rangle; \tilde{B}(f)|\psi\rangle = |\psi\rangle, \forall\, f\in l\}. 
    \end{aligned}
    \end{equation}
\end{proposition}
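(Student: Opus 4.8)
The plan is to rerun the spectral-decomposition argument already used for Proposition~\ref{prop:GSspace}, now with the enlarged commuting family of stabilizers $\{A(v)\}_v$, $\{B(f)\}_f$, $\{\tilde B(f)\}_{f\in l}$, $Q^1(\partial_0 l)$, $Q^1(\partial_1 l)$. The structural fact to isolate is that every term of $H$ has the form $\tfrac1N\sum_{h\in\mathbb{Z}_N}O^h$ for a unitary $O$ whose order divides $N$: writing the spectral decomposition $O=\sum_{\lambda\in\mathbb{Z}_N}e^{2\pi i\lambda/N}\Pi_\lambda$ and using $\tfrac1N\sum_{h\in\mathbb{Z}_N}e^{2\pi i h\lambda/N}=\delta_{\lambda,0}$, one gets $\tfrac1N\sum_hO^h=\Pi_0$, the orthogonal projector onto $\{\,|\psi\rangle : O|\psi\rangle=|\psi\rangle\,\}$. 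Hence $A_v=\Pi_0(A(v))$, $B_f=\Pi_0(B(f))$, $\tilde B_f=\Pi_0(\tilde B(f))$ and $Q^1_{\partial_i l}=\Pi_0(Q^1(\partial_i l))$, so $H$ is the negative of a sum of commuting orthogonal projectors, one per vertex, per (possibly $l$-modified) face, and per dislocation endpoint.

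First I would verify the ``order divides $N$'' hypothesis for the new operators. Both $\tilde B(f)=Z_1^\dagger X_2^\dagger Z_3 Z_4^\dagger$ and $Q^1(\partial_i l)=Z_1^\dagger X_2^\dagger Y_3 Z_4 Z_5^\dagger$ act on pairwise distinct edges, so their $N$-th powers factor as tensor products of $N$-th powers of single-edge Weyl operators; since $X_N^N=Z_N^N=\mathrm{id}$ this gives $\tilde B(f)^N=\mathrm{id}$ directly and reduces $Q^1(\partial_i l)^N$ to $Y^N=(X_NZ_N)^N$, which equals $\mathrm{id}$ for $N$ odd (for even $N$ one has $(X_NZ_N)^N=-\mathrm{id}$, so $Q^1(\partial_i l)$ should first be rescaled by a scalar phase, or averaged over a larger cyclic group, to become a genuine order-$N$ unitary — the only case distinction entering here). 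I would then invoke the commutation relations recorded just before Eq.~\eqref{eq:dislocationH}, so that all the projectors $\Pi_0(\,\cdot\,)$ form a commuting family and can be simultaneously diagonalized. At that point the standard stabilizer dichotomy applies: with $m$ the number of terms, $H$ acts on any joint eigenvector by $-\sum_j n_j$ with $n_j\in\{0,1\}$, so the spectrum is bounded below by $-m$, and a vector attains $-m$ if and only if it lies in the $+1$-eigenspace of every projector, i.e.\ satisfies $A(v)|\psi\rangle=|\psi\rangle=B(f)|\psi\rangle$ for all $v,f$, $Q^1(\partial_0 l)|\psi\rangle=|\psi\rangle=Q^1(\partial_1 l)|\psi\rangle$, and $\tilde B(f)|\psi\rangle=|\psi\rangle$ for all $f\in l$. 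This gives the inclusion ``$\supseteq$'' outright, and ``$\subseteq$'' as soon as one knows the ground-state energy is actually $-m$.

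The main obstacle is precisely that last point: showing the common $+1$-eigenspace $\bigcap_j\mathrm{Im}\,\Pi_j$ is nonzero, equivalently that the deformed stabilizer code is consistent in the sense that no product $\prod_j O_j^{a_j}$ of the underlying unitaries equals $c\,\mathrm{id}$ with $c\neq1$. For the purely bulk part this is the usual consistency used for $H_0$; for the terms near $l$ I would check it directly, using that $\tilde B(f)$, $Q^1(\partial_i l)$ and the surrounding $B(f)$ pairwise commute and that no local product among them collapses to a nontrivial scalar — this is also exactly the check that would fail for an uncorrected $Q^1$ with $(Q^1)^N=-\mathrm{id}$, tying back to the normalization remark above. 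Once consistency is in hand, $\bigcap_j\mathrm{Im}\,\Pi_j\neq0$, every ground state lies in this subspace, and it coincides with the set displayed in the statement, which completes the proof.
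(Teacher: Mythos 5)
Your proposal is correct and follows the same route as the paper's (essentially one-line) proof: apply the spectral-averaging argument of Proposition~\ref{prop:GSspace} to the enlarged commuting family of local unitaries, so that each Hamiltonian term $\tfrac1N\sum_h O^h$ is the orthogonal projector onto the $+1$-eigenspace of $O$. You go beyond the paper in two places, and both additions are substantive. First, your check that each unitary actually has order dividing $N$ catches a genuine imprecision: since $Z_NX_N=\omega_NX_NZ_N$ gives $(X_NZ_N)^N=\omega_N^{N(N-1)/2}=(-1)^{N-1}$, the endpoint operator $Q^1(\partial_i l)=Z_1^{\dagger}X_2^{\dagger}Y_3Z_4Z_5^{\dagger}$ satisfies $(Q^1)^N=-\mathrm{id}$ for even $N$, so the paper's blanket assertion that all these operators have eigenvalues $\omega_N^g$ fails there (already for $N=2$, where one must take $Y=iXZ$ as in the Kitaev--Kong construction), and without a phase correction the condition $Q^1(\partial_i l)|\psi\rangle=|\psi\rangle$ is unsatisfiable and $\tfrac1N\sum_h(Q^1)^h$ is not even a projector. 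Of your two proposed fixes, only the scalar rescaling of $Y$ recovers the statement as written; averaging over $\mathbb{Z}_{2N}$ would produce the zero operator, since a unitary with $(Q^1)^N=-\mathrm{id}$ has no $+1$ eigenvector. Second, you correctly note that identifying the displayed set with the \emph{ground state} space, rather than merely proving an inclusion, requires the joint $+1$-eigenspace to be nonzero (frustration-freeness), a point the paper leaves implicit in both propositions; your criterion that no product of the stabilizing unitaries equals a nontrivial scalar is the right one for Weyl--Heisenberg stabilizer groups, although your plan only sketches that verification.
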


This can be proved in a similar way as that of Proposition~\ref{prop:GSspace}.
Notice that all local operators, $A(v)$, $B(f)$, $\tilde{B}(f)$, $Q^1(\partial_0 l)$ and $Q^1(\partial_1 l)$ are unitary operators whose eigenvalues are $\omega_N^g$, $g=0,\cdots,N-1$.

\vspace{1em}
\emph{EM-exchange ribbon operator.} ---
Let us now consider how to construct the ribbon operator across the defect line
that realizes the EM duality symmetry.
Under the EM duality, the electric and magnetic charges will be exchanged:
\begin{equation}
    e\leftrightarrow m, \quad \varepsilon^{a,b}\leftrightarrow \varepsilon^{b,a}.
\end{equation}
We will see that when the bulk topological excitations pass through the dislocation line in either way, their electric charge and magnetic charge will be exchanged.

Let us first recall that a ribbon $\rho$ consists of direct and dual triangles. The EM-exchange ribbon is the one crossing the dislocation line, see Fig.~\ref{fig:EM_ribbon}. In this case, the EM-ribbon operator is defined by 
\begin{equation}
    F_\rho^{g,h} = (Z_1^\dagger)^g (X_2)^h(Z_3^\dagger)^g (Z_4^\dagger)^h (X_5)^g. 
\end{equation}
At one end of the ribbon, an anyon $\varepsilon^{g,h}=e^g\otimes m^h$ is created, when this anyon is dragged across the dislocation line, an EM duality operation will act on this anyon and it becomes $\varepsilon^{h,g}=e^h\otimes m^g$.

\begin{figure}[t]
	\centering
	\includegraphics[width=6cm]{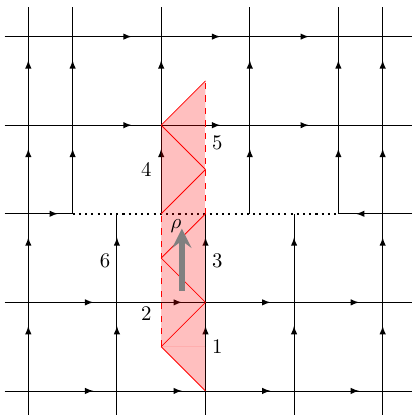}
	\caption{An EM-ribbon $\rho$ crossing the dislocation line. When passing the dislocation line, an $e$-particle is changed to an $m$-particle, and an $m$-particle is changed to an $e$-particle. \label{fig:EM_ribbon}}
\end{figure}

\begin{proposition}
   The EM-exchange ribbon operator satisfies $[F_{\rho},A(v)]=0$, $[F_{\rho},B(f)]=0$, and $[F_{\rho},\tilde{B}(f)]=0$ for all vertices, faces that are not the ends of the ribbon.
\end{proposition}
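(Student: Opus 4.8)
The plan is to reduce the claim to an elementary count of commutation phases, carried out edge by edge. Both $F_\rho^{g,h}$ and each of the stabilizers $A(v)$, $B(f)$, $\tilde{B}(f)$ are tensor products of single-edge Weyl operators taken from $\{X_N^{\pm k},Z_N^{\pm k}\}$; two such operators supported on different edges commute, and on a common edge they obey $Z_NX_N=\omega_N X_NZ_N$ together with its dagger variants. Hence for any stabilizer $S$ one has $F_\rho^{g,h}\,S=\big(\textstyle\prod_e\lambda_e\big)\,S\,F_\rho^{g,h}$, where the product runs over the edges carried by both operators and $\lambda_e=\omega_N^{n_e}$ is the local phase; moreover $n_e=0$ whenever the two legs on $e$ are of the same ($X$- or $Z$-)type, so only the ``mismatched'' edges contribute. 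It therefore suffices to show that $\sum_e n_e\equiv 0\pmod N$ whenever $v$ (resp.\ $f$) is not an endpoint of $\rho$.

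First I would treat the ordinary bulk terms, reproducing the standard Kitaev ribbon computation. Decompose $F_\rho^{g,h}=F^{Z}\otimes F^{X}$ into its $Z$-type legs along the direct boundary $\partial\rho$ and its $X$-type legs along the dual boundary $\bar\partial\rho$. Since $A(v)$ is a product of $X_N^{\pm1}$ over the star of $v$, it commutes leg-by-leg with $F^{X}$, so the only possible phase comes from $F^{Z}$; an edge of the star is carried by $F^{Z}$ only if $v$ lies on $\partial\rho$, and if $v$ is an \emph{interior} vertex of $\partial\rho$ the path enters and leaves through two edges of the star, one oriented toward $v$ and one away, producing mutually inverse phases $\omega_N^{\pm g}$ that cancel (equivalently, $A(v)$ measures a gauge transformation at $v$, under which the holonomy $F^Z$ records is invariant at an interior vertex). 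The computation for $B(f)$ is the mirror statement: $B(f)$ is $Z$-type, hence commutes with $F^{Z}$, and its overlap with $F^{X}$ is controlled by the dual path $\bar\partial\rho$, which passes through an interior face using two dual edges of opposite orientation, again giving total phase $1$. This settles $[F_\rho,A(v)]=0$ and $[F_\rho,B(f)]=0$ away from the endpoints of $\rho$.

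The new ingredient is the modified face $\tilde{B}(f)=Z_1^{\dagger}X_2^{\dagger}Z_3Z_4^{\dagger}$ on the dislocation line, which carries three $Z$-type legs and one $X$-type leg, the latter on the edge ``$2$'' that the dislocation line crosses. For $f\in l$ not met by $\rho$ at all there is nothing to check, so the only case is the face(s) through which the EM-ribbon actually passes. Here I would use that, on its crossing triangle, $\rho$ has its direct/dual roles interchanged — this is precisely why $F_\rho^{g,h}=(Z_1^{\dagger})^g(X_2)^h(Z_3^{\dagger})^g(Z_4^{\dagger})^h(X_5)^g$ carries an $X$ on the ``$g$''-leg $5$ and a $Z$ on the ``$h$''-leg $4$ rather than the plain $Z^g\otimes X^h$ pattern of the bulk. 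Reading off the local picture of Figs.~\ref{fig:lattice} and \ref{fig:EM_ribbon}, one checks that the mismatched legs on $\partial f$ — those where $\tilde{B}(f)$ is $X$-type while $F_\rho$ is $Z$-type, or vice versa — occur in orientation-reversed pairs, so their local phases $\omega_N^{n_e}$ and $\omega_N^{n_{e'}}$ are inverse and cancel; no uncompensated leg remains unless $f$ is an endpoint of $\rho$, giving $[F_\rho,\tilde{B}(f)]=0$.

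The main obstacle is purely combinatorial: carefully matching the edge orientations and the pattern of daggers in $F_\rho^{g,h}$ against those in $A(v)$, $B(f)$ and especially $\tilde{B}(f)$ at the crossing triangle, and confirming that the precise choice $(Z^{\dagger})^g(X)^h(Z^{\dagger})^g(Z^{\dagger})^h(X)^g$ is the unique one — given the fixed conventions of Eqs.~(\ref{eq:Av})--(\ref{eq:Bf}) — for which all these local phases cancel, i.e.\ that performing the swap $Z\leftrightarrow X$ on the crossing triangle of $\rho$ is compatible with the swap built into $\tilde{B}$. I would carry this out by drawing the at most two faces and two vertices that $\rho$ can meet in a neighbourhood of the dislocation line and evaluating each $\lambda_e$ directly; once that local check is done, the global statement follows immediately from the phase-product formula above.
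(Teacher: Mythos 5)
Your proposal is correct and follows essentially the same route as the paper: the paper likewise takes the standard commutation with $A(v)$ and $B(f)$ for granted and reduces everything to the modified face $\tilde{B}(f)=Z_3^\dagger X_4^\dagger Z_6 Z_2^\dagger$ at the crossing, where the two mismatched legs (edge $2$: $X_2^h$ against $Z_2^\dagger$; edge $4$: $(Z_4^\dagger)^h$ against $X_4^\dagger$) contribute the inverse phases $\omega_N^{-h}$ and $\omega_N^{h}$ that cancel — exactly the pairwise cancellation of local phases you describe. The only difference is that the paper carries out this two-line phase evaluation explicitly rather than leaving it as a final local check.
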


\begin{proof}
  One only needs to check that $[F_\rho,\tilde{B}(f)]=0$ for $f\in l$. Consider the configuration in Fig.~\ref{fig:EM_ribbon}, where $\tilde{B}(f) = Z_3^\dagger X_4^\dagger Z_6Z_2^\dagger$. As $(X_2)^hZ_2^\dagger = \omega_N^{-h}Z_2^\dagger(X_2)^h$ and $(Z_4^\dagger)^hX_4^\dagger = \omega_N^hX_4^\dagger (Z_4^\dagger)^h$, one has $F_\rho^{g,h}\tilde{B}(f)  = \omega_N^{-h}\omega_N^h\tilde{B}(f)F^{g,h}_\rho=\tilde{B}(f)F_\rho^{g,h}$. 
\end{proof}

The above proposition guarantees that no other topological excitations are created during the EM exchange process.
If we regard the end of the dislocation line as a symmetry defect, then the EM exchange ribbon realizes the $\mathbb{Z}_2$-crossed braiding given in Eq.~\eqref{eq:Gbraiding}.
The EM exchange ribbon can drag an anyon $a=\varepsilon^{g,h}$ around the symmetry defect. When crossing the dislocation line, the $Z^g$-string becomes $X^g$ string, thus the $e^g$ particle becomes $m^g$ particle. Similarly, the $X^h$-string becomes $Z^h$ string, thus the $m^h$ particle becomes $e^h$ particle. This is exactly the crossed braiding $ \sigma_k\otimes a= \underline{\rho}_1(a) \otimes \sigma_k$ given by the EM duality symmetry.

Another result derived from the EM exchange ribbon construction is that we can only construct a double-braiding closed ribbon across the dislocation line. See Fig.~\ref{fig:closed_ribbon}. If a ribbon crosses the dislocation line by an odd number of times, the anyon's electric and magnetic charges are exchanged, and we can never construct such a closed ribbon. On the other hand, if a ribbon crosses the dislocation line by an even number of times, the anyon remains unchanged, and thus such a closed ribbon can be constructed.

\begin{figure}[t]
    \centering
    \subfloat[Incompatibility at the initial/terminal site]{
        \includegraphics[width=6.5cm]{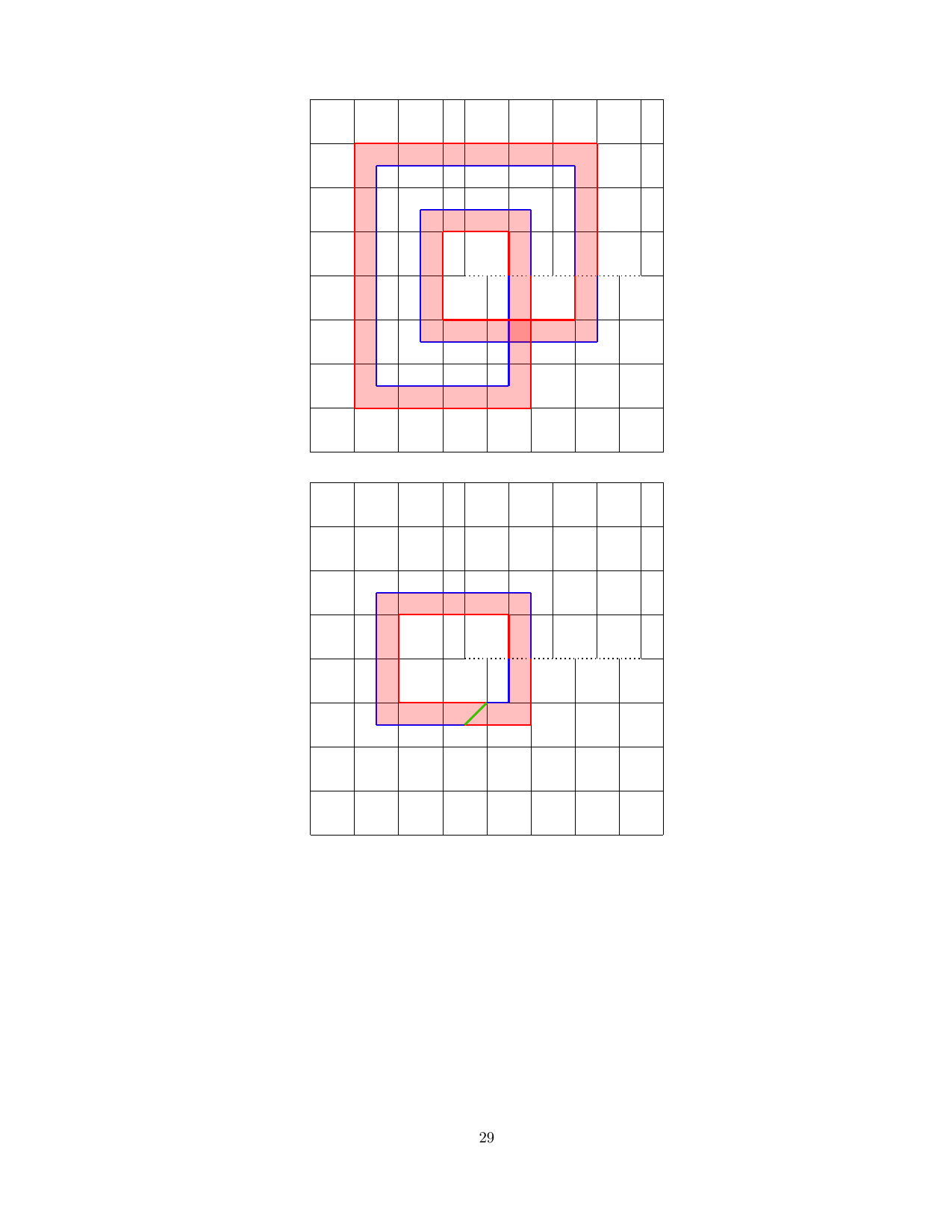}
    }%
    \qquad
    \subfloat[Compatible EM-exchange closed ribbon]{
        \includegraphics[width=6.5cm]{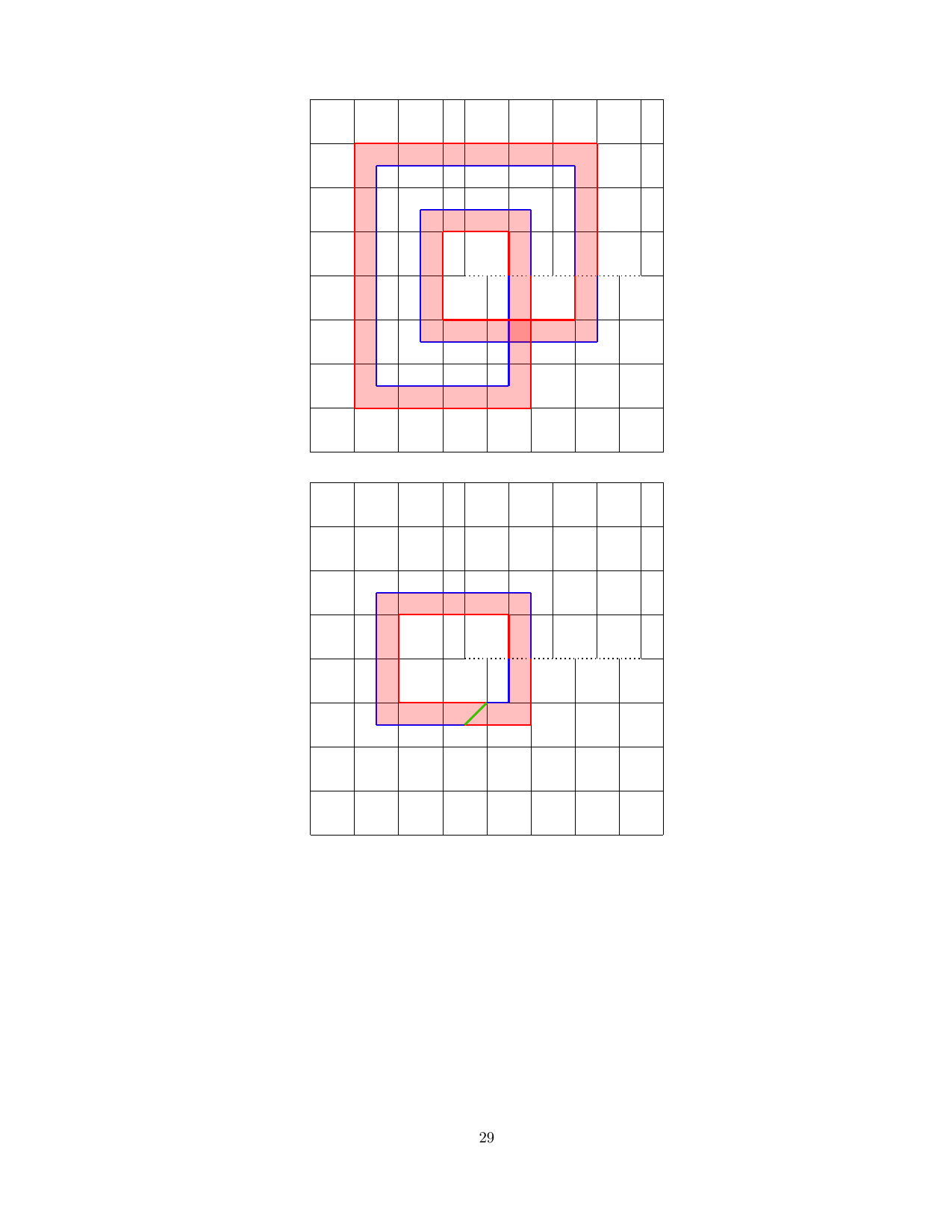}
    }%
    \caption{(a) This type of EM-exchange closed ribbon is not possible because it is incompatible at the green site (the initial/terminal site). (b) This is the only possible type of EM-exchange closed ribbon; it must cross the dislocation line twice (or an even number of times).}
    \label{fig:closed_ribbon}
\end{figure}

\vspace{1em}
\emph{Absorbing anyons via dislocation.} ---
A crucial property of the two ends of the dislocation line is that they behave like a non-Abelian anyon.
Each end can be regarded as a zero-dimensional defect that connects a smooth domain wall and a rough domain wall.
We can thus consider the fusion and braiding of bulk anyons with these defects and also the fusion and braiding between these defects.
An interesting phenomenon is that bulk anyons $\varepsilon^{g,g}$ will be absorbed by the defect. This has been proven in Eq.~\eqref{eq:fusionZN} via the algebraic analysis; here we show that in the lattice realization, this can be realized via the ribbon operators that terminate at the defect site.

From the fusion rule in Eq.~(\ref{eq:fusionZN}), we see that anyons that are invariant under EM duality symmetry will be absorbed when fuse with symmetry defects:
\begin{equation}
\varepsilon^{g,g} \otimes \sigma_k=\sigma_k, \quad \forall g\in \IZ_N.
\end{equation}
Thus the symmetry defects can be regarded as sinks of these anyons.
These fusion processes can be realized by the blue ribbon operator $F^{g,g}_\xi$ drawn in Fig.~\ref{fig:lattice}. Moving the anyon $\varepsilon^{g,g} $ is realized by a $Z^g$-string and an $X^g$ string along the direct and dual boundaries of the ribbon $\xi$. Since $F^{g,g}_\xi$ commutes with all vertex and face operators near the defects, the anyon $\varepsilon^{g,g} $ is absorbed by the defect.

\vspace{1em}
\emph{Boundary lattice model.} ---
The lattice realization of the rough and smooth boundaries can be obtained by modifying the boundary local operators. As shown in Fig.~\ref{fig:latticebd}, the face operators near the smooth boundary remain unchanged, but the vertex operators become
\begin{equation}
		\begin{aligned}
		\begin{tikzpicture}
			\draw[-latex,black,line width=0.2pt] (1,0.5) -- (0,0.5);
			\draw[-latex,black,line width=0.2pt] (0,0.5) -- (-1,0.5);
			\draw[-latex,black] (0,-0.5) -- (0,0.5); 
			\draw [fill = black] (0,0.5) circle (1.2pt);
			\node[ line width=0.2pt, dashed, draw opacity=0.5] (a) at (-0.3,-0.1){$x_3$};
			\node[ line width=0.2pt, dashed, draw opacity=0.5] (a) at (0.5,0.7){$x_1$};
			\node[ line width=0.2pt, dashed, draw opacity=0.5] (a) at (-0.5,0.7){$x_2$};
		\end{tikzpicture}
	\end{aligned}\quad
\bar{A}_v=\frac{1}{N}\sum_{h\in \IZ_N} (X_1X_2^{\dagger} X_3)^h.
\end{equation}
Similarly, for the rough boundary, the vertex operators remain unchanged, but the face operators become
\begin{equation}
	\begin{aligned}
		\begin{tikzpicture}
			\draw[dotted,black,line width=0.2pt] (0.5,0.5) -- (-0.5,0.5);
			\draw[-latex,black] (0.5,-0.5) -- (0.5,0.5); 
			\draw[-latex,black] (-0.5,-0.5) -- (-0.5,0.5); 
			\draw[-latex,black] (0.5,-0.5) -- (-0.5,-0.5); 
			\draw [fill = black] (0,0) circle (1.2pt);
			\node[ line width=0.2pt, dashed, draw opacity=0.5] (a) at (0,-0.3){$x_3$};
			\node[ line width=0.2pt, dashed, draw opacity=0.5] (a) at (0.9,0){$x_1$};
			\node[ line width=0.2pt, dashed, draw opacity=0.5] (a) at (-0.9,0){$x_2$};
		\end{tikzpicture}
	\end{aligned}\quad
	\bar{B}_v=\frac{1}{N}\sum_{h\in \IZ_N} (Z_1^{\dagger}Z_2 Z_3)^h.
\end{equation}
The condensed symmetry defect $\sigma$ can be modeled by a dislocation line which connects boundary $\sigma$ with a bulk symmetry defect $\sigma_k$.
The bulk to boundary condensation is described by the bulk-to-boundary ribbon operators.

\vspace{1em}
\emph{EM duality.} ---
We have presented a detailed discussion of the EM duality symmetry. In this part, let us construct a lattice realization of the EM duality for the SET phase.

\begin{figure}[b]
	\centering
	\includegraphics[width=7cm]{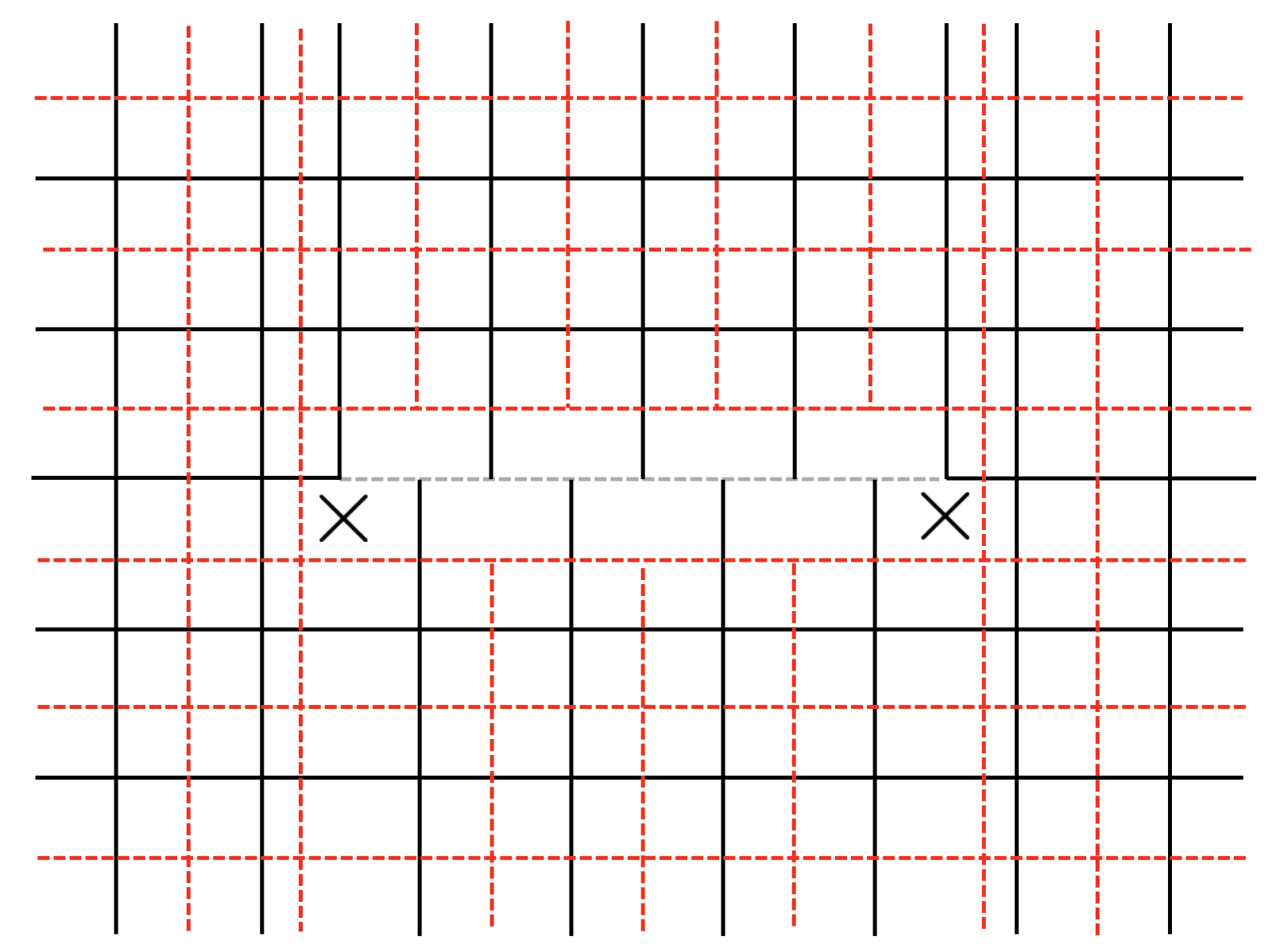}
	\caption{The direct and dual lattice of the symmetry enriched quantum double model. \label{fig:dual}}
\end{figure}
Mathematically speaking, the EM duality of the quantum double model stems from the Poincar\'{e} duality of the underlying space manifold.
For a given lattice $\Sigma=V\cup E\cup F$, we introduce its dual lattice $\Sigma^*$ for which the vertices are the faces of $\Sigma$ and the faces are the vertices of $\Sigma$. Since each edge of $\Sigma$ has a corresponding direction, the direction of the corresponding dual edge $e^*$ is obtained by rotating by $\pi/2$ of the direct edge counterclockwise.

We can introduce a local unitary map $U_e$ between $\mathcal{H}_e=\mathbb{C}[\IZ_N]$ and $\mathcal{H}_{e^*}=F(\IZ_N)=\mathbb{C}[\hat{\IZ}_N]$ with $\hat{\IZ}_N\simeq \IZ_N$ the dual group of $\IZ_N$. 
The local unitary operator is the generalized Hadamard operator 
\begin{equation}
U_e=\sum_{g\in \IZ_N} |\psi_g\rangle \langle g|
\end{equation}
where $|\psi_g\rangle$ is the eigenstate of $X_N$ as in Eq.~(\ref{eq:Xeigen}). Notice that 
\begin{equation}
	\begin{aligned}
	U_eZU_e^{\dagger}=Z^{*},\quad 
	U_eXU_e^{\dagger}=X^{*},
	\end{aligned}
\end{equation}
where $Z^*=\sum_{h\in \IZ_N} w^h |\psi_h\rangle \langle \psi_h| $ and $X^*=\sum_{h\in\IZ_N} | \psi_{h+1}\rangle \langle \psi_h| $ are the Weyl operators in the dual space basis.
Then the global unitary operator $U_{\Sigma}=\otimes_{e\in E} U_e$ gives a duality between quantum double models $D(\IZ_N)_{\Sigma}$ on $\Sigma$ and $D(\hat{\IZ}_N)_{\Sigma^*}$ on $\Sigma^*$:
\begin{equation}
	U_{\Sigma} A^{\Sigma}_v U_{\Sigma}^{\dagger} = B_v^{\Sigma^*},\quad U_{\Sigma} B^{\Sigma}_f U_{\Sigma}^{\dagger} = A_f^{\Sigma^*}.
\end{equation}
Now for the SET phase $\AN$ with boundaries, all of the bulk symmetry defect operators, face and vertex operators along the dislocation line, and the boundary operators are transformed in a similar way.
See Fig. \ref{fig:dual} for an illustration of the direct and dual lattices of SET.
From this EM duality, the electric charge on the vertex of $\Sigma$ becomes a magnetic charge on the face. When the model has a gapped smooth boundary, the EM duality maps it into a rough boundary, and similarly, a rough boundary is mapped to a smooth boundary.
Thus EM duality allows us to understand the smooth boundary and rough boundary from each other's perspective.

For non-Abelian group $G$, there is no EM duality, but the partial EM duality for some Abelian subgroups can be constructed \cite{hu2020electric}.
The more general setting is for the Hopf algebraic model, in which the EM duality can be established using the Tannaka-Krein duality \cite{Buerschaper2013a}.
Our construction here provides a simple example for further investigation of the extended Tannaka-Krein duality, which is applied for the $G$-crossed UBFC together with a corresponding Lagrangian algebra $A$. This general case will be discussed in our other work.

The EM duality provides a way to understand the direct lattice model $D(\IZ_N)_{\Sigma}$ and dual lattice model $D(\IZ_N)_{\Sigma^*}$ from each other's perspective.
For example, if we obtain the ground state $|\Omega\rangle_{\Sigma}$ of the direct lattice, our explicit construction of EM duality allows us to obtain the ground state of the dual lattice directly from the unitary transformation $|\Omega\rangle_{\Sigma^*}=U_{\Sigma} |\Omega\rangle_{\Sigma}$. For excited states, using the duality for ribbon operators, we also have this correspondence.









\section{Conclusion}

In this work, we systematically investigated the simplest class of SET phases, EM duality symmetry enriched cyclic Abelian topological phase $\AN$.
The phase is studied via both algebraic analysis and lattice Hamiltonian realizations.
The $\IZ_N$ quantum double phase has EM duality symmetry, a $\IZ_2$ symmetry, which is a special case of the more general categorical symmetry. Using this symmetry, the $\EA(N)$ phase can be enriched into a SET phase.
By introducing the anyon condensation process for the SET phase, the gapped boundaries are discussed in detail.
From the boundary-bulk duality discussion, we show that the SET phase  $\AN$ is characterized by the relative center of Morita equivalent Tambara-Yamagami categories, which correspond to different gapped boundaries.
We also present an explicit lattice relation of this SET phase and its gapped boundaries and the explicit lattice realization of the EM duality is constructed.

The general lattice Hamiltonian constructions of SET phase are still largely unexploited. There are some constructions for special examples of SET phases \cite{barkeshli2020reflection,barkeshli2020relative,wang2021exactly}.
The symmetry-enriched string-net model is discussed in \cite{Heinrich2016symmetry}, which is conjectured to realize a large class of SET phases. There are also some trials for introducing new tools for studying the SET phase, including tensor network representation and neural network representation of topological quantum states of the SET phases \cite{williamson2017symmetryenriched,Bridgeman2017tensor,jia2019quantum,jia2018efficient,zhang2018efficient}.
However, a systematic construction based on Kitaev quantum double model for arbitrary group $G$ (or more generally, Hopf algebra $H$) is still an open problem; our result is an example in this direction. We conjecture that all Kitaev models based on semisimple Hopf algebras can be enriched by EM duality symmetry, and we left the explicit construction of these models for future study.

Besides the fundamental importance, investigating the SET phase is also crucial for quantum information processing, especially for topological quantum computation and topological quantum memory.
We know that the real samples of quantum materials always have boundaries and defects, hence a better understanding of these defects can help us to design more robust topological quantum memory materials.
Another crucial point is that by introducing the defects into the Abelian phase, the resulting SET phase can be used to do the universal topological quantum computation, which cannot be done by the original Abelian phase \cite{Cong2017universal,Fowler2012,Brown2017poking}.
We leave the study of these possible applications for our future work.

\subsection*{Acknowledgments}
Z.J. would like to acknowledge Liang Kong and Zhenghan Wang for their helpful discussions. Special thanks to the math department of UCSB, where part of this work was written.
Z.J. and D.K. are supported by National Research Foundation in Singapore and A*STAR under its CQT Bridging Grant.
S.T. was partially supported by NSF grant DMS-2100288 and by Simons Foundation Collaboration Grant for Mathematicians \#580839 during his PhD study at Purdue University, and now is supported by postdoc fund from BIMSA. He would like to thank his advisor Uli Walther for the generosity and patience.

\appendix

\section{Tambara-Yamagami category}
\label{sec:TYcat}
In this section, let us review the definition of the Tambara-Yamagami category $\mathsf{TY}(G,\chi,\nu)$ associated with an Abelian group $G$, a symmetric non-degenerate bicharacter $\chi:G\times G\to \mathbb{C}^{\times}$ \footnote{Here $\mathbb{C}^{\times}$ is the set of all invertible elements in $\mathbb{C}$, i.e., $\mathbb{C}^{\times}=\mathbb{C}\setminus\{0\}$.} and a Frobenius-Schur indicator $\nu=\pm 1$ \cite{tambara1998tensor}.
By a bicharacter we mean a bilinear function $\chi:G\times H\to \mathbb{C}^{\times}$, \emph{viz.},
\begin{equation}
    \begin{aligned}
\chi(g\cdot h,k)=\chi(g,k)\chi(h,k),\\
  \chi(g,k\cdot l)=\chi(g,k)\chi(g,l),
    \end{aligned}
\end{equation}
for all $g,h\in G$ and $k,l\in H$.
A bicharacter $\chi:G\times G \to \mathbb{C}^{\times}$ is called symmetric if $\chi(g,h)=\chi(h,g)$ for all $g,h\in G$.
The non-degeneracy of $\chi$ means that $\chi$ is non-degenerate as a bilinear form.

The Tambara-Yamagami category $\mathsf{TY}(G,\chi,\nu)$ is $\mathbb{Z}_2$-graded, meaning that there are two sectors of objects
\begin{equation}
    \operatorname{Obj} \mathsf{TY}(G,\chi,\nu)= \{g|g\in G\}_0\oplus \{\sigma\}_1.
\end{equation}
The fusion rule satisfies the $\mathbb{Z}_2$ grading relation:
\begin{equation}
g\otimes h=g\cdot h,\quad
g\otimes \sigma=\sigma\otimes g =\sigma,\quad
\sigma\otimes \sigma =\oplus_{g\in G} g.
\end{equation}
The only nontrivial $F$-symbols are the following
\begin{align}
   [ F_{\sigma}^{g\sigma h}]_{\sigma}^{\sigma}= [ F_{\sigma}^{\sigma g \sigma}]_{h}^{\sigma}=\chi(g,h),\\
   [F_{\sigma}^{\sigma\sigma\sigma}]_{g}^h=\frac{\nu}{\sqrt{|G|}} \chi(g,h)^{-1}.
\end{align}

\bibliographystyle{apsrev4-1-title}
\bibliography{mybib.bib}

\begin{thebibliography}{59}%
\makeatletter
\providecommand \@ifxundefined [1]{%
 \@ifx{#1\undefined}
}%
\providecommand \@ifnum [1]{%
 \ifnum #1\expandafter \@firstoftwo
 \else \expandafter \@secondoftwo
 \fi
}%
\providecommand \@ifx [1]{%
 \ifx #1\expandafter \@firstoftwo
 \else \expandafter \@secondoftwo
 \fi
}%
\providecommand \natexlab [1]{#1}%
\providecommand \enquote  [1]{``#1''}%
\providecommand \bibnamefont  [1]{#1}%
\providecommand \bibfnamefont [1]{#1}%
\providecommand \citenamefont [1]{#1}%
\providecommand \href@noop [0]{\@secondoftwo}%
\providecommand \href [0]{\begingroup \@sanitize@url \@href}%
\providecommand \@href[1]{\@@startlink{#1}\@@href}%
\providecommand \@@href[1]{\endgroup#1\@@endlink}%
\providecommand \@sanitize@url [0]{\catcode `\\12\catcode `\$12\catcode
  `\&12\catcode `\#12\catcode `\^12\catcode `\_12\catcode `\%12\relax}%
\providecommand \@@startlink[1]{}%
\providecommand \@@endlink[0]{}%
\providecommand \url  [0]{\begingroup\@sanitize@url \@url }%
\providecommand \@url [1]{\endgroup\@href {#1}{\urlprefix }}%
\providecommand \urlprefix  [0]{URL }%
\providecommand \Eprint [0]{\href }%
\providecommand \doibase [0]{http://dx.doi.org/}%
\providecommand \selectlanguage [0]{\@gobble}%
\providecommand \bibinfo  [0]{\@secondoftwo}%
\providecommand \bibfield  [0]{\@secondoftwo}%
\providecommand \translation [1]{[#1]}%
\providecommand \BibitemOpen [0]{}%
\providecommand \bibitemStop [0]{}%
\providecommand \bibitemNoStop [0]{.\EOS\space}%
\providecommand \EOS [0]{\spacefactor3000\relax}%
\providecommand \BibitemShut  [1]{\csname bibitem#1\endcsname}%
\let\auto@bib@innerbib\@empty
\bibitem [{\citenamefont {Maldacena}(1999)}]{Maldacena1999}%
  \BibitemOpen
  \bibfield  {author} {\bibinfo {author} {\bibfnamefont {J.}~\bibnamefont
  {Maldacena}},\ }\bibfield  {title} {\enquote {\bibinfo {title} {The
  large-{$N$} limit of superconformal field theories and supergravity},}\
  }\href {\doibase 10.1023/A:1026654312961} {\bibfield  {journal} {\bibinfo
  {journal} {International Journal of Theoretical Physics}\ }\textbf {\bibinfo
  {volume} {38}},\ \bibinfo {pages} {1113} (\bibinfo {year}
  {1999})}\BibitemShut {NoStop}%
\bibitem [{\citenamefont {Witten}(1998)}]{witten1998anti}%
  \BibitemOpen
  \bibfield  {author} {\bibinfo {author} {\bibfnamefont {E.}~\bibnamefont
  {Witten}},\ }\bibfield  {title} {\enquote {\bibinfo {title} {Anti-de {S}itter
  space, thermal phase transition and confinement in gauge theories},}\
  }\href@noop {} {\bibfield  {journal} {\bibinfo  {journal} {Adv. Theor. Math.
  Phys}\ }\textbf {\bibinfo {volume} {2}},\ \bibinfo {pages} {505} (\bibinfo
  {year} {1998})}\BibitemShut {NoStop}%
\bibitem [{\citenamefont {Montonen}\ and\ \citenamefont
  {Olive}(1977)}]{montonen1977magnetic}%
  \BibitemOpen
  \bibfield  {author} {\bibinfo {author} {\bibfnamefont {C.}~\bibnamefont
  {Montonen}}\ and\ \bibinfo {author} {\bibfnamefont {D.}~\bibnamefont
  {Olive}},\ }\bibfield  {title} {\enquote {\bibinfo {title} {Magnetic
  monopoles as gauge particles?}}\ }\href
  {https://www.sciencedirect.com/science/article/abs/pii/0370269377900764}
  {\bibfield  {journal} {\bibinfo  {journal} {Physics Letters B}\ }\textbf
  {\bibinfo {volume} {72}},\ \bibinfo {pages} {117} (\bibinfo {year}
  {1977})}\BibitemShut {NoStop}%
\bibitem [{\citenamefont {Kapustin}\ and\ \citenamefont
  {Witten}(2007)}]{kapustin2007electricmagnetic}%
  \BibitemOpen
  \bibfield  {author} {\bibinfo {author} {\bibfnamefont {A.}~\bibnamefont
  {Kapustin}}\ and\ \bibinfo {author} {\bibfnamefont {E.}~\bibnamefont
  {Witten}},\ }\href@noop {} {\enquote {\bibinfo {title} {Electric-magnetic
  duality and the geometric {L}anglands program},}\ } (\bibinfo {year}
  {2007}),\ \Eprint {http://arxiv.org/abs/hep-th/0604151} {arXiv:hep-th/0604151
  [hep-th]} \BibitemShut {NoStop}%
\bibitem [{\citenamefont {Sathiapalan}(1987)}]{palan1987duality}%
  \BibitemOpen
  \bibfield  {author} {\bibinfo {author} {\bibfnamefont {B.}~\bibnamefont
  {Sathiapalan}},\ }\bibfield  {title} {\enquote {\bibinfo {title} {Duality in
  statistical mechanics and string theory},}\ }\href {\doibase
  10.1103/PhysRevLett.58.1597} {\bibfield  {journal} {\bibinfo  {journal}
  {Phys. Rev. Lett.}\ }\textbf {\bibinfo {volume} {58}},\ \bibinfo {pages}
  {1597} (\bibinfo {year} {1987})}\BibitemShut {NoStop}%
\bibitem [{\citenamefont {Kitaev}(2003)}]{Kitaev2003}%
  \BibitemOpen
  \bibfield  {author} {\bibinfo {author} {\bibfnamefont {A.}~\bibnamefont
  {Kitaev}},\ }\bibfield  {title} {\enquote {\bibinfo {title} {Fault-tolerant
  quantum computation by anyons},}\ }\href {\doibase
  https://doi.org/10.1016/S0003-4916(02)00018-0} {\bibfield  {journal}
  {\bibinfo  {journal} {Annals of Physics}\ }\textbf {\bibinfo {volume}
  {303}},\ \bibinfo {pages} {2 } (\bibinfo {year} {2003})},\ \Eprint
  {http://arxiv.org/abs/quant-ph/9707021} {arXiv:quant-ph/9707021 [quant-ph]}
  \BibitemShut {NoStop}%
\bibitem [{\citenamefont {Dijkgraaf}\ and\ \citenamefont
  {Witten}(1990)}]{dijkgraaf1990topological}%
  \BibitemOpen
  \bibfield  {author} {\bibinfo {author} {\bibfnamefont {R.}~\bibnamefont
  {Dijkgraaf}}\ and\ \bibinfo {author} {\bibfnamefont {E.}~\bibnamefont
  {Witten}},\ }\bibfield  {title} {\enquote {\bibinfo {title} {Topological
  gauge theories and group cohomology},}\ }\href
  {https://link.springer.com/article/10.1007/BF02096988} {\bibfield  {journal}
  {\bibinfo  {journal} {Communications in Mathematical Physics}\ }\textbf
  {\bibinfo {volume} {129}},\ \bibinfo {pages} {393} (\bibinfo {year}
  {1990})}\BibitemShut {NoStop}%
\bibitem [{\citenamefont {Buerschaper}\ and\ \citenamefont
  {Aguado}(2009)}]{Buerschaper2009mapping}%
  \BibitemOpen
  \bibfield  {author} {\bibinfo {author} {\bibfnamefont {O.}~\bibnamefont
  {Buerschaper}}\ and\ \bibinfo {author} {\bibfnamefont {M.}~\bibnamefont
  {Aguado}},\ }\bibfield  {title} {\enquote {\bibinfo {title} {Mapping
  {K}itaev's quantum double lattice models to {L}evin and {W}en's string-net
  models},}\ }\href {\doibase 10.1103/PhysRevB.80.155136} {\bibfield  {journal}
  {\bibinfo  {journal} {Phys. Rev. B}\ }\textbf {\bibinfo {volume} {80}},\
  \bibinfo {pages} {155136} (\bibinfo {year} {2009})},\ \Eprint
  {http://arxiv.org/abs/0907.2670} {arXiv:0907.2670 [cond-mat.str-el]}
  \BibitemShut {NoStop}%
\bibitem [{\citenamefont {Buerschaper}\ \emph
  {et~al.}(2013{\natexlab{a}})\citenamefont {Buerschaper}, \citenamefont
  {Christandl}, \citenamefont {Kong},\ and\ \citenamefont
  {Aguado}}]{buerschaper2013electric}%
  \BibitemOpen
  \bibfield  {author} {\bibinfo {author} {\bibfnamefont {O.}~\bibnamefont
  {Buerschaper}}, \bibinfo {author} {\bibfnamefont {M.}~\bibnamefont
  {Christandl}}, \bibinfo {author} {\bibfnamefont {L.}~\bibnamefont {Kong}}, \
  and\ \bibinfo {author} {\bibfnamefont {M.}~\bibnamefont {Aguado}},\
  }\bibfield  {title} {\enquote {\bibinfo {title} {Electric--magnetic duality
  of lattice systems with topological order},}\ }\href
  {https://www.sciencedirect.com/science/article/abs/pii/S0550321313004367?via%3Dihub}
  {\bibfield  {journal} {\bibinfo  {journal} {Nuclear Physics B}\ }\textbf
  {\bibinfo {volume} {876}},\ \bibinfo {pages} {619} (\bibinfo {year}
  {2013}{\natexlab{a}})},\ \Eprint {http://arxiv.org/abs/1006.5823}
  {arXiv:1006.5823 [cond-mat.str-el]} \BibitemShut {NoStop}%
\bibitem [{\citenamefont {Hu}\ \emph {et~al.}(2018)\citenamefont {Hu},
  \citenamefont {Geer},\ and\ \citenamefont {Wu}}]{Hu2018full}%
  \BibitemOpen
  \bibfield  {author} {\bibinfo {author} {\bibfnamefont {Y.}~\bibnamefont
  {Hu}}, \bibinfo {author} {\bibfnamefont {N.}~\bibnamefont {Geer}}, \ and\
  \bibinfo {author} {\bibfnamefont {Y.-S.}\ \bibnamefont {Wu}},\ }\bibfield
  {title} {\enquote {\bibinfo {title} {Full dyon excitation spectrum in
  extended {L}evin-{W}en models},}\ }\href {\doibase
  10.1103/PhysRevB.97.195154} {\bibfield  {journal} {\bibinfo  {journal} {Phys.
  Rev. B}\ }\textbf {\bibinfo {volume} {97}},\ \bibinfo {pages} {195154}
  (\bibinfo {year} {2018})},\ \Eprint {http://arxiv.org/abs/1502.03433}
  {arXiv:1502.03433 [cond-mat.str-el]} \BibitemShut {NoStop}%
\bibitem [{\citenamefont {Wang}\ \emph {et~al.}(2020)\citenamefont {Wang},
  \citenamefont {Li}, \citenamefont {Hu},\ and\ \citenamefont
  {Wan}}]{wang2020electric}%
  \BibitemOpen
  \bibfield  {author} {\bibinfo {author} {\bibfnamefont {H.}~\bibnamefont
  {Wang}}, \bibinfo {author} {\bibfnamefont {Y.}~\bibnamefont {Li}}, \bibinfo
  {author} {\bibfnamefont {Y.}~\bibnamefont {Hu}}, \ and\ \bibinfo {author}
  {\bibfnamefont {Y.}~\bibnamefont {Wan}},\ }\bibfield  {title} {\enquote
  {\bibinfo {title} {Electric-magnetic duality in the quantum double models of
  topological orders with gapped boundaries},}\ }\href
  {https://link.springer.com/article/10.1007%2FJHEP02%282020%29030} {\bibfield
  {journal} {\bibinfo  {journal} {Journal of High Energy Physics}\ }\textbf
  {\bibinfo {volume} {2020}},\ \bibinfo {pages} {1} (\bibinfo {year} {2020})},\
  \Eprint {http://arxiv.org/abs/1910.13441} {arXiv:1910.13441
  [cond-mat.str-el]} \BibitemShut {NoStop}%
\bibitem [{\citenamefont {Hu}\ and\ \citenamefont
  {Wan}(2020)}]{hu2020electric}%
  \BibitemOpen
  \bibfield  {author} {\bibinfo {author} {\bibfnamefont {Y.}~\bibnamefont
  {Hu}}\ and\ \bibinfo {author} {\bibfnamefont {Y.}~\bibnamefont {Wan}},\
  }\bibfield  {title} {\enquote {\bibinfo {title} {Electric-magnetic duality in
  twisted quantum double model of topological orders},}\ }\href
  {https://link.springer.com/article/10.1007%2FJHEP11%282020%29170} {\bibfield
  {journal} {\bibinfo  {journal} {Journal of High Energy Physics}\ }\textbf
  {\bibinfo {volume} {2020}},\ \bibinfo {pages} {1} (\bibinfo {year} {2020})},\
  \Eprint {http://arxiv.org/abs/2007.15636} {arXiv:2007.15636
  [cond-mat.str-el]} \BibitemShut {NoStop}%
\bibitem [{\citenamefont {Delcamp}(2021)}]{delcamp2021tensor}%
  \BibitemOpen
  \bibfield  {author} {\bibinfo {author} {\bibfnamefont {C.}~\bibnamefont
  {Delcamp}},\ }\href@noop {} {\enquote {\bibinfo {title} {Tensor network
  approach to electromagnetic duality in (3+1)d topological gauge models},}\ }
  (\bibinfo {year} {2021}),\ \Eprint {http://arxiv.org/abs/2112.08324}
  {arXiv:2112.08324 [cond-mat.str-el]} \BibitemShut {NoStop}%
\bibitem [{\citenamefont {Buerschaper}\ \emph
  {et~al.}(2013{\natexlab{b}})\citenamefont {Buerschaper}, \citenamefont
  {Mombelli}, \citenamefont {Christandl},\ and\ \citenamefont
  {Aguado}}]{Buerschaper2013a}%
  \BibitemOpen
  \bibfield  {author} {\bibinfo {author} {\bibfnamefont {O.}~\bibnamefont
  {Buerschaper}}, \bibinfo {author} {\bibfnamefont {J.~M.}\ \bibnamefont
  {Mombelli}}, \bibinfo {author} {\bibfnamefont {M.}~\bibnamefont
  {Christandl}}, \ and\ \bibinfo {author} {\bibfnamefont {M.}~\bibnamefont
  {Aguado}},\ }\bibfield  {title} {\enquote {\bibinfo {title} {A hierarchy of
  topological tensor network states},}\ }\href {\doibase 10.1063/1.4773316}
  {\bibfield  {journal} {\bibinfo  {journal} {Journal of Mathematical Physics}\
  }\textbf {\bibinfo {volume} {54}},\ \bibinfo {pages} {012201} (\bibinfo
  {year} {2013}{\natexlab{b}})},\ \Eprint {http://arxiv.org/abs/1007.5283}
  {arXiv:1007.5283 [cond-mat.str-el]} \BibitemShut {NoStop}%
\bibitem [{\citenamefont {Levin}\ and\ \citenamefont {Wen}(2005)}]{Levin2005}%
  \BibitemOpen
  \bibfield  {author} {\bibinfo {author} {\bibfnamefont {M.~A.}\ \bibnamefont
  {Levin}}\ and\ \bibinfo {author} {\bibfnamefont {X.-G.}\ \bibnamefont
  {Wen}},\ }\bibfield  {title} {\enquote {\bibinfo {title} {String-net
  condensation: A physical mechanism for topological phases},}\ }\href
  {\doibase 10.1103/PhysRevB.71.045110} {\bibfield  {journal} {\bibinfo
  {journal} {Phys. Rev. B}\ }\textbf {\bibinfo {volume} {71}},\ \bibinfo
  {pages} {045110} (\bibinfo {year} {2005})},\ \Eprint
  {http://arxiv.org/abs/cond-mat/0404617} {arXiv:cond-mat/0404617
  [cond-mat.str-el]} \BibitemShut {NoStop}%
\bibitem [{\citenamefont {Kitaev}(2006)}]{KITAEV2006}%
  \BibitemOpen
  \bibfield  {author} {\bibinfo {author} {\bibfnamefont {A.}~\bibnamefont
  {Kitaev}},\ }\bibfield  {title} {\enquote {\bibinfo {title} {Anyons in an
  exactly solved model and beyond},}\ }\href {\doibase
  https://doi.org/10.1016/j.aop.2005.10.005} {\bibfield  {journal} {\bibinfo
  {journal} {Annals of Physics}\ }\textbf {\bibinfo {volume} {321}},\ \bibinfo
  {pages} {2 } (\bibinfo {year} {2006})},\ \Eprint
  {http://arxiv.org/abs/cond-mat/0506438} {arXiv:cond-mat/0506438
  [cond-mat.mes-hall]} \BibitemShut {NoStop}%
\bibitem [{\citenamefont {Jia}\ \emph {et~al.}(2023{\natexlab{a}})\citenamefont
  {Jia}, \citenamefont {Tan}, \citenamefont {Kaszlikowski},\ and\ \citenamefont
  {Chang}}]{jia2023weak}%
  \BibitemOpen
  \bibfield  {author} {\bibinfo {author} {\bibfnamefont {Z.}~\bibnamefont
  {Jia}}, \bibinfo {author} {\bibfnamefont {S.}~\bibnamefont {Tan}}, \bibinfo
  {author} {\bibfnamefont {D.}~\bibnamefont {Kaszlikowski}}, \ and\ \bibinfo
  {author} {\bibfnamefont {L.}~\bibnamefont {Chang}},\ }\bibfield  {title}
  {\enquote {\bibinfo {title} {On weak {H}opf symmetry and weak {H}opf quantum
  double model},}\ }\href {\doibase 10.1007/s00220-023-04792-9} {\bibfield
  {journal} {\bibinfo  {journal} {Communications in Mathematical Physics}\
  }\textbf {\bibinfo {volume} {402}},\ \bibinfo {pages} {3045} (\bibinfo {year}
  {2023}{\natexlab{a}})},\ \Eprint {http://arxiv.org/abs/2302.08131}
  {arXiv:2302.08131 [hep-th]} \BibitemShut {NoStop}%
\bibitem [{\citenamefont {Jia}\ \emph {et~al.}(2023{\natexlab{b}})\citenamefont
  {Jia}, \citenamefont {Kaszlikowski},\ and\ \citenamefont
  {Tan}}]{jia2023boundary}%
  \BibitemOpen
  \bibfield  {author} {\bibinfo {author} {\bibfnamefont {Z.}~\bibnamefont
  {Jia}}, \bibinfo {author} {\bibfnamefont {D.}~\bibnamefont {Kaszlikowski}}, \
  and\ \bibinfo {author} {\bibfnamefont {S.}~\bibnamefont {Tan}},\ }\bibfield
  {title} {\enquote {\bibinfo {title} {Boundary and domain wall theories of 2d
  generalized quantum double model},}\ }\href
  {https://doi.org/10.1007/JHEP07(2023)160} {\bibfield  {journal} {\bibinfo
  {journal} {Journal of High Energy Physics}\ }\textbf {\bibinfo {volume}
  {2023}},\ \bibinfo {pages} {78} (\bibinfo {year} {2023}{\natexlab{b}})},\
  \Eprint {http://arxiv.org/abs/2207.03970} {arXiv:2207.03970 [quant-ph]}
  \BibitemShut {NoStop}%
\bibitem [{\citenamefont {Bakalov}\ and\ \citenamefont
  {Kirillov}(2001)}]{bakalov2001lectures}%
  \BibitemOpen
  \bibfield  {author} {\bibinfo {author} {\bibfnamefont {B.}~\bibnamefont
  {Bakalov}}\ and\ \bibinfo {author} {\bibfnamefont {A.~A.}\ \bibnamefont
  {Kirillov}},\ }\href {http://bookstore.ams.org/ulect-21} {\emph {\bibinfo
  {title} {Lectures on tensor categories and modular functors}}},\
  Vol.~\bibinfo {volume} {21}\ (\bibinfo  {publisher} {American Mathematical
  Soc.},\ \bibinfo {year} {2001})\BibitemShut {NoStop}%
\bibitem [{\citenamefont {Turaev}(2016)}]{turaev2016quantum}%
  \BibitemOpen
  \bibfield  {author} {\bibinfo {author} {\bibfnamefont {V.~G.}\ \bibnamefont
  {Turaev}},\ }\href
  {https://www.degruyter.com/document/doi/10.1515/9783110435221/html} {\emph
  {\bibinfo {title} {Quantum invariants of knots and 3-manifolds}}},\
  Vol.~\bibinfo {volume} {18}\ (\bibinfo  {publisher} {De Gruyter},\ \bibinfo
  {year} {2016})\BibitemShut {NoStop}%
\bibitem [{\citenamefont {Etingof}\ \emph {et~al.}(2016)\citenamefont
  {Etingof}, \citenamefont {Gelaki}, \citenamefont {Nikshych},\ and\
  \citenamefont {Ostrik}}]{etingof2016tensor}%
  \BibitemOpen
  \bibfield  {author} {\bibinfo {author} {\bibfnamefont {P.}~\bibnamefont
  {Etingof}}, \bibinfo {author} {\bibfnamefont {S.}~\bibnamefont {Gelaki}},
  \bibinfo {author} {\bibfnamefont {D.}~\bibnamefont {Nikshych}}, \ and\
  \bibinfo {author} {\bibfnamefont {V.}~\bibnamefont {Ostrik}},\ }\href
  {https://bookstore.ams.org/surv-205} {\emph {\bibinfo {title} {Tensor
  categories}}},\ Vol.\ \bibinfo {volume} {205}\ (\bibinfo  {publisher}
  {American Mathematical Soc.},\ \bibinfo {year} {2016})\BibitemShut {NoStop}%
\bibitem [{\citenamefont {Lan}\ \emph {et~al.}(2017)\citenamefont {Lan},
  \citenamefont {Kong},\ and\ \citenamefont {Wen}}]{lan2017modular}%
  \BibitemOpen
  \bibfield  {author} {\bibinfo {author} {\bibfnamefont {T.}~\bibnamefont
  {Lan}}, \bibinfo {author} {\bibfnamefont {L.}~\bibnamefont {Kong}}, \ and\
  \bibinfo {author} {\bibfnamefont {X.-G.}\ \bibnamefont {Wen}},\ }\bibfield
  {title} {\enquote {\bibinfo {title} {Modular extensions of unitary braided
  fusion categories and 2+1{D} topological/{SPT} orders with symmetries},}\
  }\href@noop {} {\bibfield  {journal} {\bibinfo  {journal} {Communications in
  Mathematical Physics}\ }\textbf {\bibinfo {volume} {351}},\ \bibinfo {pages}
  {709} (\bibinfo {year} {2017})},\ \Eprint {http://arxiv.org/abs/1602.05936}
  {arXiv:1602.05936 [math.QA]} \BibitemShut {NoStop}%
\bibitem [{\citenamefont {Cui}\ \emph {et~al.}(2016)\citenamefont {Cui},
  \citenamefont {Galindo}, \citenamefont {Plavnik},\ and\ \citenamefont
  {Wang}}]{cui2016gauging}%
  \BibitemOpen
  \bibfield  {author} {\bibinfo {author} {\bibfnamefont {S.~X.}\ \bibnamefont
  {Cui}}, \bibinfo {author} {\bibfnamefont {C.}~\bibnamefont {Galindo}},
  \bibinfo {author} {\bibfnamefont {J.~Y.}\ \bibnamefont {Plavnik}}, \ and\
  \bibinfo {author} {\bibfnamefont {Z.}~\bibnamefont {Wang}},\ }\bibfield
  {title} {\enquote {\bibinfo {title} {On gauging symmetry of modular
  categories},}\ }\href
  {https://link.springer.com/article/10.1007%2Fs00220-016-2633-8} {\bibfield
  {journal} {\bibinfo  {journal} {Communications in Mathematical Physics}\
  }\textbf {\bibinfo {volume} {348}},\ \bibinfo {pages} {1043} (\bibinfo {year}
  {2016})},\ \Eprint {http://arxiv.org/abs/1510.03475} {arXiv:1510.03475
  [math.QA]} \BibitemShut {NoStop}%
\bibitem [{\citenamefont {Barkeshli}\ \emph {et~al.}(2019)\citenamefont
  {Barkeshli}, \citenamefont {Bonderson}, \citenamefont {Cheng},\ and\
  \citenamefont {Wang}}]{Barkeshli2019symmetry}%
  \BibitemOpen
  \bibfield  {author} {\bibinfo {author} {\bibfnamefont {M.}~\bibnamefont
  {Barkeshli}}, \bibinfo {author} {\bibfnamefont {P.}~\bibnamefont
  {Bonderson}}, \bibinfo {author} {\bibfnamefont {M.}~\bibnamefont {Cheng}}, \
  and\ \bibinfo {author} {\bibfnamefont {Z.}~\bibnamefont {Wang}},\ }\bibfield
  {title} {\enquote {\bibinfo {title} {Symmetry fractionalization, defects, and
  gauging of topological phases},}\ }\href {\doibase
  10.1103/PhysRevB.100.115147} {\bibfield  {journal} {\bibinfo  {journal}
  {Phys. Rev. B}\ }\textbf {\bibinfo {volume} {100}},\ \bibinfo {pages}
  {115147} (\bibinfo {year} {2019})},\ \Eprint {http://arxiv.org/abs/1410.4540}
  {arXiv:1410.4540 [cond-mat.str-el]} \BibitemShut {NoStop}%
\bibitem [{\citenamefont {Bischoff}\ \emph {et~al.}(2019)\citenamefont
  {Bischoff}, \citenamefont {Jones}, \citenamefont {Lu},\ and\ \citenamefont
  {Penneys}}]{bischoff2019spontaneous}%
  \BibitemOpen
  \bibfield  {author} {\bibinfo {author} {\bibfnamefont {M.}~\bibnamefont
  {Bischoff}}, \bibinfo {author} {\bibfnamefont {C.}~\bibnamefont {Jones}},
  \bibinfo {author} {\bibfnamefont {Y.-M.}\ \bibnamefont {Lu}}, \ and\ \bibinfo
  {author} {\bibfnamefont {D.}~\bibnamefont {Penneys}},\ }\bibfield  {title}
  {\enquote {\bibinfo {title} {Spontaneous symmetry breaking from anyon
  condensation},}\ }\href
  {https://link.springer.com/article/10.1007%2FJHEP02%282019%29062} {\bibfield
  {journal} {\bibinfo  {journal} {Journal of High Energy Physics}\ }\textbf
  {\bibinfo {volume} {2019}},\ \bibinfo {pages} {62} (\bibinfo {year}
  {2019})},\ \Eprint {http://arxiv.org/abs/1811.00434} {arXiv:1811.00434
  [math.QA]} \BibitemShut {NoStop}%
\bibitem [{\citenamefont {Dennis}\ \emph {et~al.}(2002)\citenamefont {Dennis},
  \citenamefont {Kitaev}, \citenamefont {Landahl},\ and\ \citenamefont
  {Preskill}}]{Dennis2002topological}%
  \BibitemOpen
  \bibfield  {author} {\bibinfo {author} {\bibfnamefont {E.}~\bibnamefont
  {Dennis}}, \bibinfo {author} {\bibfnamefont {A.}~\bibnamefont {Kitaev}},
  \bibinfo {author} {\bibfnamefont {A.}~\bibnamefont {Landahl}}, \ and\
  \bibinfo {author} {\bibfnamefont {J.}~\bibnamefont {Preskill}},\ }\bibfield
  {title} {\enquote {\bibinfo {title} {Topological quantum memory},}\ }\href
  {\doibase 10.1063/1.1499754} {\bibfield  {journal} {\bibinfo  {journal}
  {Journal of Mathematical Physics}\ }\textbf {\bibinfo {volume} {43}},\
  \bibinfo {pages} {4452–4505} (\bibinfo {year} {2002})},\ \Eprint
  {http://arxiv.org/abs/quant-ph/0110143} {arXiv:quant-ph/0110143 [quant-ph]}
  \BibitemShut {NoStop}%
\bibitem [{\citenamefont {Terhal}(2015)}]{Terhal2015quantum}%
  \BibitemOpen
  \bibfield  {author} {\bibinfo {author} {\bibfnamefont {B.~M.}\ \bibnamefont
  {Terhal}},\ }\bibfield  {title} {\enquote {\bibinfo {title} {Quantum error
  correction for quantum memories},}\ }\href {\doibase
  10.1103/RevModPhys.87.307} {\bibfield  {journal} {\bibinfo  {journal} {Rev.
  Mod. Phys.}\ }\textbf {\bibinfo {volume} {87}},\ \bibinfo {pages} {307}
  (\bibinfo {year} {2015})},\ \Eprint {http://arxiv.org/abs/1302.3428}
  {arXiv:1302.3428 [quant-ph]} \BibitemShut {NoStop}%
\bibitem [{\citenamefont {Nayak}\ \emph {et~al.}(2008)\citenamefont {Nayak},
  \citenamefont {Simon}, \citenamefont {Stern}, \citenamefont {Freedman},\ and\
  \citenamefont {Das~Sarma}}]{Nayak2008}%
  \BibitemOpen
  \bibfield  {author} {\bibinfo {author} {\bibfnamefont {C.}~\bibnamefont
  {Nayak}}, \bibinfo {author} {\bibfnamefont {S.~H.}\ \bibnamefont {Simon}},
  \bibinfo {author} {\bibfnamefont {A.}~\bibnamefont {Stern}}, \bibinfo
  {author} {\bibfnamefont {M.}~\bibnamefont {Freedman}}, \ and\ \bibinfo
  {author} {\bibfnamefont {S.}~\bibnamefont {Das~Sarma}},\ }\bibfield  {title}
  {\enquote {\bibinfo {title} {Non-{A}belian anyons and topological quantum
  computation},}\ }\href {\doibase 10.1103/RevModPhys.80.1083} {\bibfield
  {journal} {\bibinfo  {journal} {Rev. Mod. Phys.}\ }\textbf {\bibinfo {volume}
  {80}},\ \bibinfo {pages} {1083} (\bibinfo {year} {2008})},\ \Eprint
  {http://arxiv.org/abs/0707.1889} {arXiv:0707.1889 [cond-mat.str-el]}
  \BibitemShut {NoStop}%
\bibitem [{\citenamefont {Cong}\ \emph
  {et~al.}(2017{\natexlab{a}})\citenamefont {Cong}, \citenamefont {Cheng},\
  and\ \citenamefont {Wang}}]{Cong2017universal}%
  \BibitemOpen
  \bibfield  {author} {\bibinfo {author} {\bibfnamefont {I.}~\bibnamefont
  {Cong}}, \bibinfo {author} {\bibfnamefont {M.}~\bibnamefont {Cheng}}, \ and\
  \bibinfo {author} {\bibfnamefont {Z.}~\bibnamefont {Wang}},\ }\bibfield
  {title} {\enquote {\bibinfo {title} {Universal quantum computation with
  gapped boundaries},}\ }\href {\doibase 10.1103/PhysRevLett.119.170504}
  {\bibfield  {journal} {\bibinfo  {journal} {Phys. Rev. Lett.}\ }\textbf
  {\bibinfo {volume} {119}},\ \bibinfo {pages} {170504} (\bibinfo {year}
  {2017}{\natexlab{a}})},\ \Eprint {http://arxiv.org/abs/1707.05490}
  {arXiv:1707.05490 [quant-ph]} \BibitemShut {NoStop}%
\bibitem [{\citenamefont {Kitaev}\ and\ \citenamefont
  {Kong}(2012)}]{Kitaev2012a}%
  \BibitemOpen
  \bibfield  {author} {\bibinfo {author} {\bibfnamefont {A.}~\bibnamefont
  {Kitaev}}\ and\ \bibinfo {author} {\bibfnamefont {L.}~\bibnamefont {Kong}},\
  }\bibfield  {title} {\enquote {\bibinfo {title} {Models for gapped boundaries
  and domain walls},}\ }\href {\doibase 10.1007/s00220-012-1500-5} {\bibfield
  {journal} {\bibinfo  {journal} {Communications in Mathematical Physics}\
  }\textbf {\bibinfo {volume} {313}},\ \bibinfo {pages} {351} (\bibinfo {year}
  {2012})},\ \Eprint {http://arxiv.org/abs/1104.5047} {arXiv:1104.5047
  [cond-mat.str-el]} \BibitemShut {NoStop}%
\bibitem [{\citenamefont {Drinfeld}\ \emph {et~al.}(2010)\citenamefont
  {Drinfeld}, \citenamefont {Gelaki}, \citenamefont {Nikshych},\ and\
  \citenamefont {Ostrik}}]{drinfeld2010braided}%
  \BibitemOpen
  \bibfield  {author} {\bibinfo {author} {\bibfnamefont {V.}~\bibnamefont
  {Drinfeld}}, \bibinfo {author} {\bibfnamefont {S.}~\bibnamefont {Gelaki}},
  \bibinfo {author} {\bibfnamefont {D.}~\bibnamefont {Nikshych}}, \ and\
  \bibinfo {author} {\bibfnamefont {V.}~\bibnamefont {Ostrik}},\ }\bibfield
  {title} {\enquote {\bibinfo {title} {On braided fusion categories i},}\
  }\href {https://link.springer.com/article/10.1007/s00029-010-0017-z}
  {\bibfield  {journal} {\bibinfo  {journal} {Selecta Mathematica}\ }\textbf
  {\bibinfo {volume} {16}},\ \bibinfo {pages} {1} (\bibinfo {year} {2010})},\
  \Eprint {http://arxiv.org/abs/0906.0620} {arXiv:0906.0620 [math.QA]}
  \BibitemShut {NoStop}%
\bibitem [{\citenamefont {Turaev}(2010)}]{turaev2010homotopy}%
  \BibitemOpen
  \bibfield  {author} {\bibinfo {author} {\bibfnamefont {V.~G.}\ \bibnamefont
  {Turaev}},\ }\href@noop {} {\emph {\bibinfo {title} {Homotopy quantum field
  theory}}},\ Vol.~\bibinfo {volume} {10}\ (\bibinfo  {publisher} {European
  Mathematical Society},\ \bibinfo {year} {2010})\BibitemShut {NoStop}%
\bibitem [{\citenamefont {Etingof}\ \emph {et~al.}(2010)\citenamefont
  {Etingof}, \citenamefont {Nikshych},\ and\ \citenamefont
  {Ostrik}}]{etingof2010fusion}%
  \BibitemOpen
  \bibfield  {author} {\bibinfo {author} {\bibfnamefont {P.}~\bibnamefont
  {Etingof}}, \bibinfo {author} {\bibfnamefont {D.}~\bibnamefont {Nikshych}}, \
  and\ \bibinfo {author} {\bibfnamefont {V.}~\bibnamefont {Ostrik}},\
  }\bibfield  {title} {\enquote {\bibinfo {title} {Fusion categories and
  homotopy theory},}\ }\href
  {https://www.ems-ph.org/journals/show_abstract.php?issn=1663-487X&vol=1&iss=3&rank=1}
  {\bibfield  {journal} {\bibinfo  {journal} {Quantum topology}\ }\textbf
  {\bibinfo {volume} {1}},\ \bibinfo {pages} {209} (\bibinfo {year} {2010})},\
  \Eprint {http://arxiv.org/abs/0909.3140} {arXiv:0909.3140 [math.QA]}
  \BibitemShut {NoStop}%
\bibitem [{\citenamefont {M\"{u}ger}(2004)}]{MUGER2004galois}%
  \BibitemOpen
  \bibfield  {author} {\bibinfo {author} {\bibfnamefont {M.}~\bibnamefont
  {M\"{u}ger}},\ }\bibfield  {title} {\enquote {\bibinfo {title} {Galois
  extensions of braided tensor categories and braided crossed
  ${G}$-categories},}\ }\href {\doibase
  https://doi.org/10.1016/j.jalgebra.2004.02.026} {\bibfield  {journal}
  {\bibinfo  {journal} {Journal of Algebra}\ }\textbf {\bibinfo {volume}
  {277}},\ \bibinfo {pages} {256 } (\bibinfo {year} {2004})},\ \Eprint
  {http://arxiv.org/abs/math/0209093} {arXiv:math/0209093 [math.CT]}
  \BibitemShut {NoStop}%
\bibitem [{\citenamefont {Kirillov~Jr}\ and\ \citenamefont
  {Prince}(2008)}]{kirillov2008g}%
  \BibitemOpen
  \bibfield  {author} {\bibinfo {author} {\bibfnamefont {A.}~\bibnamefont
  {Kirillov~Jr}}\ and\ \bibinfo {author} {\bibfnamefont {T.}~\bibnamefont
  {Prince}},\ }\href@noop {} {\enquote {\bibinfo {title} {On ${G}$--modular
  functor},}\ } (\bibinfo {year} {2008}),\ \Eprint
  {http://arxiv.org/abs/0807.0939} {arXiv:0807.0939 [math.QA]} \BibitemShut
  {NoStop}%
\bibitem [{\citenamefont {Lan}(2017)}]{lan2018classification}%
  \BibitemOpen
  \bibfield  {author} {\bibinfo {author} {\bibfnamefont {T.}~\bibnamefont
  {Lan}},\ }\href@noop {} {\enquote {\bibinfo {title} {A classification of
  (2+1){D} topological phases with symmetries},}\ } (\bibinfo {year} {2017}),\
  \bibinfo {note} {{PhD Thesis}, {University of Waterloo}},\ \Eprint
  {http://arxiv.org/abs/1801.01210} {arXiv:1801.01210 [cond-mat.str-el]}
  \BibitemShut {NoStop}%
\bibitem [{\citenamefont {Teo}\ \emph {et~al.}(2015)\citenamefont {Teo},
  \citenamefont {Hughes},\ and\ \citenamefont {Fradkin}}]{teo2015theory}%
  \BibitemOpen
  \bibfield  {author} {\bibinfo {author} {\bibfnamefont {J.~C.}\ \bibnamefont
  {Teo}}, \bibinfo {author} {\bibfnamefont {T.~L.}\ \bibnamefont {Hughes}}, \
  and\ \bibinfo {author} {\bibfnamefont {E.}~\bibnamefont {Fradkin}},\
  }\bibfield  {title} {\enquote {\bibinfo {title} {Theory of twist liquids:
  gauging an anyonic symmetry},}\ }\href
  {https://www.sciencedirect.com/science/article/abs/pii/S0003491615001967?via%3Dihub}
  {\bibfield  {journal} {\bibinfo  {journal} {Annals of Physics}\ }\textbf
  {\bibinfo {volume} {360}},\ \bibinfo {pages} {349} (\bibinfo {year}
  {2015})},\ \Eprint {http://arxiv.org/abs/1503.06812} {arXiv:1503.06812
  [cond-mat.str-el]} \BibitemShut {NoStop}%
\bibitem [{\citenamefont {Tambara}\ and\ \citenamefont
  {Yamagami}(1998)}]{tambara1998tensor}%
  \BibitemOpen
  \bibfield  {author} {\bibinfo {author} {\bibfnamefont {D.}~\bibnamefont
  {Tambara}}\ and\ \bibinfo {author} {\bibfnamefont {S.}~\bibnamefont
  {Yamagami}},\ }\bibfield  {title} {\enquote {\bibinfo {title} {Tensor
  categories with fusion rules of self-duality for finite abelian groups},}\
  }\href
  {https://www.sciencedirect.com/science/article/pii/S0021869398975585?via%3Dihub}
  {\bibfield  {journal} {\bibinfo  {journal} {Journal of Algebra}\ }\textbf
  {\bibinfo {volume} {209}},\ \bibinfo {pages} {692} (\bibinfo {year}
  {1998})}\BibitemShut {NoStop}%
\bibitem [{\citenamefont {Gelaki}\ \emph {et~al.}(2009)\citenamefont {Gelaki},
  \citenamefont {Naidu},\ and\ \citenamefont {Nikshych}}]{gelaki2009centers}%
  \BibitemOpen
  \bibfield  {author} {\bibinfo {author} {\bibfnamefont {S.}~\bibnamefont
  {Gelaki}}, \bibinfo {author} {\bibfnamefont {D.}~\bibnamefont {Naidu}}, \
  and\ \bibinfo {author} {\bibfnamefont {D.}~\bibnamefont {Nikshych}},\
  }\bibfield  {title} {\enquote {\bibinfo {title} {Centers of graded fusion
  categories},}\ }\href {https://msp.org/ant/2009/3-8/p05.xhtml} {\bibfield
  {journal} {\bibinfo  {journal} {Algebra \& Number Theory}\ }\textbf {\bibinfo
  {volume} {3}},\ \bibinfo {pages} {959} (\bibinfo {year} {2009})},\ \Eprint
  {http://arxiv.org/abs/0905.3117} {arXiv:0905.3117 [math.QA]} \BibitemShut
  {NoStop}%
\bibitem [{\citenamefont {Kong}\ \emph {et~al.}(2017)\citenamefont {Kong},
  \citenamefont {Wen},\ and\ \citenamefont {Zheng}}]{KONG2017}%
  \BibitemOpen
  \bibfield  {author} {\bibinfo {author} {\bibfnamefont {L.}~\bibnamefont
  {Kong}}, \bibinfo {author} {\bibfnamefont {X.-G.}\ \bibnamefont {Wen}}, \
  and\ \bibinfo {author} {\bibfnamefont {H.}~\bibnamefont {Zheng}},\ }\bibfield
   {title} {\enquote {\bibinfo {title} {Boundary-bulk relation in topological
  orders},}\ }\href {\doibase https://doi.org/10.1016/j.nuclphysb.2017.06.023}
  {\bibfield  {journal} {\bibinfo  {journal} {Nuclear Physics B}\ }\textbf
  {\bibinfo {volume} {922}},\ \bibinfo {pages} {62 } (\bibinfo {year}
  {2017})},\ \Eprint {http://arxiv.org/abs/1702.00673} {arXiv:1702.00673
  [cond-mat.str-el]} \BibitemShut {NoStop}%
\bibitem [{\citenamefont {Kong}(2014)}]{Kong2014}%
  \BibitemOpen
  \bibfield  {author} {\bibinfo {author} {\bibfnamefont {L.}~\bibnamefont
  {Kong}},\ }\bibfield  {title} {\enquote {\bibinfo {title} {Anyon condensation
  and tensor categories},}\ }\href {\doibase
  https://doi.org/10.1016/j.nuclphysb.2014.07.003} {\bibfield  {journal}
  {\bibinfo  {journal} {Nuclear Physics B}\ }\textbf {\bibinfo {volume}
  {886}},\ \bibinfo {pages} {436 } (\bibinfo {year} {2014})},\ \Eprint
  {http://arxiv.org/abs/1307.8244} {arXiv:1307.8244 [cond-mat.str-el]}
  \BibitemShut {NoStop}%
\bibitem [{\citenamefont {Davydov}\ \emph {et~al.}(2013)\citenamefont
  {Davydov}, \citenamefont {M{\"u}ger}, \citenamefont {Nikshych},\ and\
  \citenamefont {Ostrik}}]{davydov2013witt}%
  \BibitemOpen
  \bibfield  {author} {\bibinfo {author} {\bibfnamefont {A.}~\bibnamefont
  {Davydov}}, \bibinfo {author} {\bibfnamefont {M.}~\bibnamefont {M{\"u}ger}},
  \bibinfo {author} {\bibfnamefont {D.}~\bibnamefont {Nikshych}}, \ and\
  \bibinfo {author} {\bibfnamefont {V.}~\bibnamefont {Ostrik}},\ }\bibfield
  {title} {\enquote {\bibinfo {title} {The {W}itt group of non-degenerate
  braided fusion categories},}\ }\href
  {https://www.degruyter.com/document/doi/10.1515/crelle.2012.014/html}
  {\bibfield  {journal} {\bibinfo  {journal} {Journal f{\"u}r die reine und
  angewandte Mathematik}\ }\textbf {\bibinfo {volume} {2013}},\ \bibinfo
  {pages} {135} (\bibinfo {year} {2013})},\ \Eprint
  {http://arxiv.org/abs/1009.2117} {arXiv:1009.2117 [math.QA]} \BibitemShut
  {NoStop}%
\bibitem [{\citenamefont {Fr{\"o}hlich}\ \emph {et~al.}(2006)\citenamefont
  {Fr{\"o}hlich}, \citenamefont {Fuchs}, \citenamefont {Runkel},\ and\
  \citenamefont {Schweigert}}]{frohlich2006correspondences}%
  \BibitemOpen
  \bibfield  {author} {\bibinfo {author} {\bibfnamefont {J.}~\bibnamefont
  {Fr{\"o}hlich}}, \bibinfo {author} {\bibfnamefont {J.}~\bibnamefont {Fuchs}},
  \bibinfo {author} {\bibfnamefont {I.}~\bibnamefont {Runkel}}, \ and\ \bibinfo
  {author} {\bibfnamefont {C.}~\bibnamefont {Schweigert}},\ }\bibfield  {title}
  {\enquote {\bibinfo {title} {Correspondences of ribbon categories},}\ }\href
  {https://www.sciencedirect.com/science/article/pii/S0001870805001027}
  {\bibfield  {journal} {\bibinfo  {journal} {Advances in Mathematics}\
  }\textbf {\bibinfo {volume} {199}},\ \bibinfo {pages} {192} (\bibinfo {year}
  {2006})},\ \Eprint {http://arxiv.org/abs/math/0309465} {arXiv:math/0309465
  [math.CT]} \BibitemShut {NoStop}%
\bibitem [{\citenamefont {Cong}\ \emph
  {et~al.}(2017{\natexlab{b}})\citenamefont {Cong}, \citenamefont {Cheng},\
  and\ \citenamefont {Wang}}]{Cong2017}%
  \BibitemOpen
  \bibfield  {author} {\bibinfo {author} {\bibfnamefont {I.}~\bibnamefont
  {Cong}}, \bibinfo {author} {\bibfnamefont {M.}~\bibnamefont {Cheng}}, \ and\
  \bibinfo {author} {\bibfnamefont {Z.}~\bibnamefont {Wang}},\ }\bibfield
  {title} {\enquote {\bibinfo {title} {Hamiltonian and algebraic theories of
  gapped boundaries in topological phases of matter},}\ }\href {\doibase
  10.1007/s00220-017-2960-4} {\bibfield  {journal} {\bibinfo  {journal}
  {Communications in Mathematical Physics}\ }\textbf {\bibinfo {volume}
  {355}},\ \bibinfo {pages} {645} (\bibinfo {year} {2017}{\natexlab{b}})},\
  \Eprint {http://arxiv.org/abs/1707.04564} {arXiv:1707.04564
  [cond-mat.str-el]} \BibitemShut {NoStop}%
\bibitem [{\citenamefont {Davydov}(2014)}]{davydov2014bogomolov}%
  \BibitemOpen
  \bibfield  {author} {\bibinfo {author} {\bibfnamefont {A.}~\bibnamefont
  {Davydov}},\ }\bibfield  {title} {\enquote {\bibinfo {title} {Bogomolov
  multiplier, double class-preserving automorphisms, and modular invariants for
  orbifolds},}\ }\href {https://aip.scitation.org/doi/10.1063/1.4895764}
  {\bibfield  {journal} {\bibinfo  {journal} {Journal of Mathematical Physics}\
  }\textbf {\bibinfo {volume} {55}},\ \bibinfo {pages} {092305} (\bibinfo
  {year} {2014})},\ \Eprint {http://arxiv.org/abs/1312.7466} {arXiv:1312.7466
  [math.CT]} \BibitemShut {NoStop}%
\bibitem [{\citenamefont {Bravyi}\ and\ \citenamefont
  {Kitaev}(1998)}]{bravyi1998quantum}%
  \BibitemOpen
  \bibfield  {author} {\bibinfo {author} {\bibfnamefont {S.~B.}\ \bibnamefont
  {Bravyi}}\ and\ \bibinfo {author} {\bibfnamefont {A.~Y.}\ \bibnamefont
  {Kitaev}},\ }\bibfield  {title} {\enquote {\bibinfo {title} {Quantum codes on
  a lattice with boundary},}\ }\href@noop {} {\  (\bibinfo {year} {1998})},\
  \Eprint {http://arxiv.org/abs/quant-ph/9811052} {arXiv:quant-ph/9811052
  [quant-ph]} \BibitemShut {NoStop}%
\bibitem [{\citenamefont {Bombin}\ and\ \citenamefont
  {Martin-Delgado}(2008)}]{Bombin2008family}%
  \BibitemOpen
  \bibfield  {author} {\bibinfo {author} {\bibfnamefont {H.}~\bibnamefont
  {Bombin}}\ and\ \bibinfo {author} {\bibfnamefont {M.~A.}\ \bibnamefont
  {Martin-Delgado}},\ }\bibfield  {title} {\enquote {\bibinfo {title} {Family
  of non-{A}belian {K}itaev models on a lattice: {T}opological condensation and
  confinement},}\ }\href {\doibase 10.1103/PhysRevB.78.115421} {\bibfield
  {journal} {\bibinfo  {journal} {Phys. Rev. B}\ }\textbf {\bibinfo {volume}
  {78}},\ \bibinfo {pages} {115421} (\bibinfo {year} {2008})},\ \Eprint
  {http://arxiv.org/abs/0712.0190} {arXiv:0712.0190 [cond-mat.str-el]}
  \BibitemShut {NoStop}%
\bibitem [{\citenamefont {Beigi}\ \emph {et~al.}(2011)\citenamefont {Beigi},
  \citenamefont {Shor},\ and\ \citenamefont {Whalen}}]{Beigi2011the}%
  \BibitemOpen
  \bibfield  {author} {\bibinfo {author} {\bibfnamefont {S.}~\bibnamefont
  {Beigi}}, \bibinfo {author} {\bibfnamefont {P.~W.}\ \bibnamefont {Shor}}, \
  and\ \bibinfo {author} {\bibfnamefont {D.}~\bibnamefont {Whalen}},\
  }\bibfield  {title} {\enquote {\bibinfo {title} {The quantum double model
  with boundary: Condensations and symmetries},}\ }\href {\doibase
  10.1007/s00220-011-1294-x} {\bibfield  {journal} {\bibinfo  {journal}
  {Communications in Mathematical Physics}\ }\textbf {\bibinfo {volume}
  {306}},\ \bibinfo {pages} {663} (\bibinfo {year} {2011})},\ \Eprint
  {http://arxiv.org/abs/1006.5479} {arXiv:1006.5479 [quant-ph]} \BibitemShut
  {NoStop}%
\bibitem [{\citenamefont {Barkeshli}\ \emph {et~al.}(2020)\citenamefont
  {Barkeshli}, \citenamefont {Bonderson}, \citenamefont {Cheng}, \citenamefont
  {Jian},\ and\ \citenamefont {Walker}}]{barkeshli2020reflection}%
  \BibitemOpen
  \bibfield  {author} {\bibinfo {author} {\bibfnamefont {M.}~\bibnamefont
  {Barkeshli}}, \bibinfo {author} {\bibfnamefont {P.}~\bibnamefont
  {Bonderson}}, \bibinfo {author} {\bibfnamefont {M.}~\bibnamefont {Cheng}},
  \bibinfo {author} {\bibfnamefont {C.-M.}\ \bibnamefont {Jian}}, \ and\
  \bibinfo {author} {\bibfnamefont {K.}~\bibnamefont {Walker}},\ }\bibfield
  {title} {\enquote {\bibinfo {title} {Reflection and time reversal symmetry
  enriched topological phases of matter: path integrals, non-orientable
  manifolds, and anomalies},}\ }\href@noop {} {\bibfield  {journal} {\bibinfo
  {journal} {Communications in Mathematical Physics}\ }\textbf {\bibinfo
  {volume} {374}},\ \bibinfo {pages} {1021} (\bibinfo {year} {2020})},\ \Eprint
  {http://arxiv.org/abs/1612.07792} {arXiv:1612.07792 [cond-mat.str-el]}
  \BibitemShut {NoStop}%
\bibitem [{\citenamefont {Barkeshli}\ and\ \citenamefont
  {Cheng}(2020)}]{barkeshli2020relative}%
  \BibitemOpen
  \bibfield  {author} {\bibinfo {author} {\bibfnamefont {M.}~\bibnamefont
  {Barkeshli}}\ and\ \bibinfo {author} {\bibfnamefont {M.}~\bibnamefont
  {Cheng}},\ }\bibfield  {title} {\enquote {\bibinfo {title} {Relative
  anomalies in (2+1){D} symmetry enriched topological states},}\ }\href
  {https://scipost.org/SciPostPhys.8.2.028} {\bibfield  {journal} {\bibinfo
  {journal} {Relation}\ }\textbf {\bibinfo {volume} {2}},\ \bibinfo {pages}
  {21} (\bibinfo {year} {2020})},\ \Eprint {http://arxiv.org/abs/1906.10691}
  {arXiv:1906.10691 [cond-mat.str-el]} \BibitemShut {NoStop}%
\bibitem [{\citenamefont {Wang}\ and\ \citenamefont
  {Cheng}(2021)}]{wang2021exactly}%
  \BibitemOpen
  \bibfield  {author} {\bibinfo {author} {\bibfnamefont {Q.-R.}\ \bibnamefont
  {Wang}}\ and\ \bibinfo {author} {\bibfnamefont {M.}~\bibnamefont {Cheng}},\
  }\href@noop {} {\enquote {\bibinfo {title} {Exactly solvable models for
  ${U}(1)$ symmetry-enriched topological phases},}\ } (\bibinfo {year}
  {2021}),\ \Eprint {http://arxiv.org/abs/2103.13399} {arXiv:2103.13399
  [cond-mat.str-el]} \BibitemShut {NoStop}%
\bibitem [{\citenamefont {Heinrich}\ \emph {et~al.}(2016)\citenamefont
  {Heinrich}, \citenamefont {Burnell}, \citenamefont {Fidkowski},\ and\
  \citenamefont {Levin}}]{Heinrich2016symmetry}%
  \BibitemOpen
  \bibfield  {author} {\bibinfo {author} {\bibfnamefont {C.}~\bibnamefont
  {Heinrich}}, \bibinfo {author} {\bibfnamefont {F.}~\bibnamefont {Burnell}},
  \bibinfo {author} {\bibfnamefont {L.}~\bibnamefont {Fidkowski}}, \ and\
  \bibinfo {author} {\bibfnamefont {M.}~\bibnamefont {Levin}},\ }\bibfield
  {title} {\enquote {\bibinfo {title} {Symmetry-enriched string nets: Exactly
  solvable models for set phases},}\ }\href {\doibase
  10.1103/PhysRevB.94.235136} {\bibfield  {journal} {\bibinfo  {journal} {Phys.
  Rev. B}\ }\textbf {\bibinfo {volume} {94}},\ \bibinfo {pages} {235136}
  (\bibinfo {year} {2016})},\ \Eprint {http://arxiv.org/abs/1606.07816}
  {arXiv:1606.07816 [cond-mat.str-el]} \BibitemShut {NoStop}%
\bibitem [{\citenamefont {Williamson}\ \emph {et~al.}(2017)\citenamefont
  {Williamson}, \citenamefont {Bultinck},\ and\ \citenamefont
  {Verstraete}}]{williamson2017symmetryenriched}%
  \BibitemOpen
  \bibfield  {author} {\bibinfo {author} {\bibfnamefont {D.~J.}\ \bibnamefont
  {Williamson}}, \bibinfo {author} {\bibfnamefont {N.}~\bibnamefont
  {Bultinck}}, \ and\ \bibinfo {author} {\bibfnamefont {F.}~\bibnamefont
  {Verstraete}},\ }\href@noop {} {\enquote {\bibinfo {title} {Symmetry-enriched
  topological order in tensor networks: Defects, gauging and anyon
  condensation},}\ } (\bibinfo {year} {2017}),\ \Eprint
  {http://arxiv.org/abs/1711.07982} {arXiv:1711.07982 [quant-ph]} \BibitemShut
  {NoStop}%
\bibitem [{\citenamefont {Bridgeman}\ \emph {et~al.}(2017)\citenamefont
  {Bridgeman}, \citenamefont {Bartlett},\ and\ \citenamefont
  {Doherty}}]{Bridgeman2017tensor}%
  \BibitemOpen
  \bibfield  {author} {\bibinfo {author} {\bibfnamefont {J.~C.}\ \bibnamefont
  {Bridgeman}}, \bibinfo {author} {\bibfnamefont {S.~D.}\ \bibnamefont
  {Bartlett}}, \ and\ \bibinfo {author} {\bibfnamefont {A.~C.}\ \bibnamefont
  {Doherty}},\ }\bibfield  {title} {\enquote {\bibinfo {title} {Tensor networks
  with a twist: Anyon-permuting domain walls and defects in projected entangled
  pair states},}\ }\href {\doibase 10.1103/PhysRevB.96.245122} {\bibfield
  {journal} {\bibinfo  {journal} {Phys. Rev. B}\ }\textbf {\bibinfo {volume}
  {96}},\ \bibinfo {pages} {245122} (\bibinfo {year} {2017})},\ \Eprint
  {http://arxiv.org/abs/1708.08930} {arXiv:1708.08930 [quant-ph]} \BibitemShut
  {NoStop}%
\bibitem [{\citenamefont {Jia}\ \emph {et~al.}(2019{\natexlab{a}})\citenamefont
  {Jia}, \citenamefont {Yi}, \citenamefont {Zhai}, \citenamefont {Wu},
  \citenamefont {Guo},\ and\ \citenamefont {Guo}}]{jia2019quantum}%
  \BibitemOpen
  \bibfield  {author} {\bibinfo {author} {\bibfnamefont {Z.-A.}\ \bibnamefont
  {Jia}}, \bibinfo {author} {\bibfnamefont {B.}~\bibnamefont {Yi}}, \bibinfo
  {author} {\bibfnamefont {R.}~\bibnamefont {Zhai}}, \bibinfo {author}
  {\bibfnamefont {Y.-C.}\ \bibnamefont {Wu}}, \bibinfo {author} {\bibfnamefont
  {G.-C.}\ \bibnamefont {Guo}}, \ and\ \bibinfo {author} {\bibfnamefont
  {G.-P.}\ \bibnamefont {Guo}},\ }\bibfield  {title} {\enquote {\bibinfo
  {title} {Quantum neural network states: A brief review of methods and
  applications},}\ }\href
  {https://onlinelibrary.wiley.com/doi/abs/10.1002/qute.201800077} {\bibfield
  {journal} {\bibinfo  {journal} {Advanced Quantum Technologies}\ }\textbf
  {\bibinfo {volume} {2}},\ \bibinfo {pages} {1800077} (\bibinfo {year}
  {2019}{\natexlab{a}})},\ \Eprint {http://arxiv.org/abs/1808.10601}
  {arXiv:1808.10601 [quant-ph]} \BibitemShut {NoStop}%
\bibitem [{\citenamefont {Jia}\ \emph {et~al.}(2019{\natexlab{b}})\citenamefont
  {Jia}, \citenamefont {Zhang}, \citenamefont {Wu}, \citenamefont {Kong},
  \citenamefont {Guo},\ and\ \citenamefont {Guo}}]{jia2018efficient}%
  \BibitemOpen
  \bibfield  {author} {\bibinfo {author} {\bibfnamefont {Z.-A.}\ \bibnamefont
  {Jia}}, \bibinfo {author} {\bibfnamefont {Y.-H.}\ \bibnamefont {Zhang}},
  \bibinfo {author} {\bibfnamefont {Y.-C.}\ \bibnamefont {Wu}}, \bibinfo
  {author} {\bibfnamefont {L.}~\bibnamefont {Kong}}, \bibinfo {author}
  {\bibfnamefont {G.-C.}\ \bibnamefont {Guo}}, \ and\ \bibinfo {author}
  {\bibfnamefont {G.-P.}\ \bibnamefont {Guo}},\ }\bibfield  {title} {\enquote
  {\bibinfo {title} {Efficient machine-learning representations of a surface
  code with boundaries, defects, domain walls, and twists},}\ }\href {\doibase
  10.1103/PhysRevA.99.012307} {\bibfield  {journal} {\bibinfo  {journal} {Phys.
  Rev. A}\ }\textbf {\bibinfo {volume} {99}},\ \bibinfo {pages} {012307}
  (\bibinfo {year} {2019}{\natexlab{b}})},\ \Eprint
  {http://arxiv.org/abs/1802.03738} {arXiv:1802.03738 [quant-ph]} \BibitemShut
  {NoStop}%
\bibitem [{\citenamefont {Zhang}\ \emph {et~al.}(2018)\citenamefont {Zhang},
  \citenamefont {Jia}, \citenamefont {Wu},\ and\ \citenamefont
  {Guo}}]{zhang2018efficient}%
  \BibitemOpen
  \bibfield  {author} {\bibinfo {author} {\bibfnamefont {Y.-H.}\ \bibnamefont
  {Zhang}}, \bibinfo {author} {\bibfnamefont {Z.-A.}\ \bibnamefont {Jia}},
  \bibinfo {author} {\bibfnamefont {Y.-C.}\ \bibnamefont {Wu}}, \ and\ \bibinfo
  {author} {\bibfnamefont {G.-C.}\ \bibnamefont {Guo}},\ }\href@noop {}
  {\enquote {\bibinfo {title} {An efficient algorithmic way to construct
  {B}oltzmann machine representations for arbitrary stabilizer code},}\ }
  (\bibinfo {year} {2018}),\ \Eprint {http://arxiv.org/abs/1809.08631}
  {arXiv:1809.08631 [quant-ph]} \BibitemShut {NoStop}%
\bibitem [{\citenamefont {Fowler}\ \emph {et~al.}(2012)\citenamefont {Fowler},
  \citenamefont {Mariantoni}, \citenamefont {Martinis},\ and\ \citenamefont
  {Cleland}}]{Fowler2012}%
  \BibitemOpen
  \bibfield  {author} {\bibinfo {author} {\bibfnamefont {A.~G.}\ \bibnamefont
  {Fowler}}, \bibinfo {author} {\bibfnamefont {M.}~\bibnamefont {Mariantoni}},
  \bibinfo {author} {\bibfnamefont {J.~M.}\ \bibnamefont {Martinis}}, \ and\
  \bibinfo {author} {\bibfnamefont {A.~N.}\ \bibnamefont {Cleland}},\
  }\bibfield  {title} {\enquote {\bibinfo {title} {Surface codes: Towards
  practical large-scale quantum computation},}\ }\href {\doibase
  10.1103/PhysRevA.86.032324} {\bibfield  {journal} {\bibinfo  {journal} {Phys.
  Rev. A}\ }\textbf {\bibinfo {volume} {86}},\ \bibinfo {pages} {032324}
  (\bibinfo {year} {2012})},\ \Eprint {http://arxiv.org/abs/1208.0928}
  {arXiv:1208.0928 [quant-ph]} \BibitemShut {NoStop}%
\bibitem [{\citenamefont {Brown}\ \emph {et~al.}(2017)\citenamefont {Brown},
  \citenamefont {Laubscher}, \citenamefont {Kesselring},\ and\ \citenamefont
  {Wootton}}]{Brown2017poking}%
  \BibitemOpen
  \bibfield  {author} {\bibinfo {author} {\bibfnamefont {B.~J.}\ \bibnamefont
  {Brown}}, \bibinfo {author} {\bibfnamefont {K.}~\bibnamefont {Laubscher}},
  \bibinfo {author} {\bibfnamefont {M.~S.}\ \bibnamefont {Kesselring}}, \ and\
  \bibinfo {author} {\bibfnamefont {J.~R.}\ \bibnamefont {Wootton}},\
  }\bibfield  {title} {\enquote {\bibinfo {title} {Poking holes and cutting
  corners to achieve {C}lifford gates with the surface code},}\ }\href
  {\doibase 10.1103/PhysRevX.7.021029} {\bibfield  {journal} {\bibinfo
  {journal} {Phys. Rev. X}\ }\textbf {\bibinfo {volume} {7}},\ \bibinfo {pages}
  {021029} (\bibinfo {year} {2017})},\ \Eprint
  {http://arxiv.org/abs/1609.04673} {arXiv:1609.04673 [quant-ph]} \BibitemShut
  {NoStop}%
\end{thebibliography}%

\end{document}